\numberwithin{equation}{section}
\begin{document}

\newtheorem{theorem}{Theorem}[section]
\newtheorem{lemma}[theorem]{Lemma}
\newtheorem{define}[theorem]{Definition}
\newtheorem{remark}[theorem]{Remark}
\newtheorem{corollary}[theorem]{Corollary}
\newtheorem{example}[theorem]{Example}
\newtheorem{assumption}[theorem]{Assumption}
\newtheorem{proposition}[theorem]{Proposition}
\newtheorem{conjecture}[theorem]{Conjecture}

\def\Ref#1{Ref.~\cite{#1}}

\def\i{\mathrm{i}}
\def\Rnum{{\mathbb R}}
\def\const{\text{const.}}

\def\smallbinom#1#2{{\textstyle \binom{#1}{#2}}}
\def\smallsum{\textstyle\sum}

\def\pr{{\rm pr}}
\def\rk{{\rm rank}}
\def\spn{{\rm span}}
\def\dom{{\rm dom}}
\def\ran{{\rm ran}}
\def\coker{{\rm coker}}
\def\hook{\rfloor\,}
\def\smallsum{\textstyle\sum}

\def\X{\mathbf{X}}
\def\Y{\mathbf{Y}}
\def\w{\boldsymbol\omega}
\def\d{\mathrm{d}}
\def\id{\mathrm{id}}
\def\lieder#1{{\mathcal L}_{#1}}
\def\ad{\mathrm{ad}}
\def\t{\mathrm{t}}
\def\div{\text{div}}

\def\scal{{\rm scal.}}
\def\trans{{\rm trans.}}
\def\can{{\rm can.}}
\def\p{{\rm p}}

\def\PQpair#1#2{\langle #1,#2\rangle}

\def\Jsp{\mathrm{J}}
\def\Esp{\mathcal{E}}
\def\symmsp{\mathrm{Symm}}
\def\adjsymmsp{\mathrm{AdjSymm}}
\def\multrsp{\mathrm{Multr}}

\def\Rop{{\mathcal R}}
\def\Dop{{\mathcal D}}
\def\Hop{{\mathcal H}}
\def\Jop{{\mathcal J}}

\def\F{{F}}

\def\d{{\mathbf d}}
\def\grad{{\boldsymbol\nabla}}
\def\lapl{{\boldsymbol\Delta}}

\tolerance=50000
\allowdisplaybreaks[3]

\title{Symmetry actions and brackets\\ for adjoint-symmetries.\\ I: Main results and applications}

\author{
Stephen C. Anco${}^\dagger$
\\\\\scshape{
D\lowercase{\scshape{epartment}} \lowercase{\scshape{of}} M\lowercase{\scshape{athematics and}} S\lowercase{\scshape{tatistics}}\\
B\lowercase{\scshape{rock}} U\lowercase{\scshape{niversity}}\\
S\lowercase{\scshape{t.}} C\lowercase{\scshape{atharines}}, ON L2S3A1, C\lowercase{\scshape{anada}}
}}

\thanks{${}^\dagger$sanco@brocku.ca}

\begin{abstract}
Infinitesimal symmetries of a partial differential equation (PDE) can be defined 
algebraically as the solutions of the linearization (Frechet derivative) equation 
holding on the space of solutions to the PDE, 
and they are well-known to comprise a linear space having the structure of a Lie algebra.
Solutions of the adjoint linearization equation 
holding on the space of solutions to the PDE
are called adjoint-symmetries. 
Their algebraic structure for general PDE systems is studied herein. 
This is motivated by the correspondence between 
variational symmetries and conservation laws arising from Noether's theorem, 
which has a modern generalization to non-variational PDEs, 
where infinitesimal symmetries are replaced by adjoint-symmetries,
and variational symmetries are replaced by multipliers 
(adjoint-symmetries satisfying a certain Euler-Lagrange condition). 
Several main results are obtained. 
Symmetries are shown to have three different linear actions 
on the linear space of adjoint-symmetries. 
These linear actions are used to construct bilinear adjoint-symmetry brackets,
one of which is a pull-back of the symmetry commutator bracket
and has the properties of a Lie bracket. 
The brackets do not use or require the existence of any local variational structure (Hamiltonian or Lagrangian) and thus apply to general PDE systems. 
One of the symmetry actions is shown to encode a pre-symplectic (Noether) operator, 
which leads to the construction of symplectic 2-form and Poisson bracket for evolution systems. 
The generalized KdV equation in potential form is used to illustrate all of the results.
\end{abstract}

\maketitle

\section{Introduction}

In the study of partial differential equations (PDEs), 
symmetries \cite{Ovs-book,Olv-book,BCA-book}
are a fundamental intrinsic (coordinate-free) structure of a PDE  
and have numerous important uses, such as
finding exact solutions, mapping known solutions into new solutions, 
detecting integrability, and finding linearizing transformations. 
In addition, when a PDE has a variational principle, 
then through Noether's theorem \cite{Olv-book,BCA-book}
the infinitesimal symmetries of the PDE under which the variational principle is invariant
--- namely, variational symmetries --- 
yield conservation laws. 

Like symmetries, 
conservation laws \cite{Olv-book,BCA-book,KraVin,Anc-review}
are another important intrinsic (coordinate-free) structure of a PDE.
They provide conserved quantities and conserved norms, 
which are used in the analysis of solutions; 
they detect integrability and can be used to find linearizing transformations; 
they also can be used to check the accuracy of numerical solution methods
and give rise to discretizations with good properties. 

A modern form of the Noether correspondence between variational symmetries and conservation laws
has been developed in the past few decades \cite{Olv-book,KraVin,Anc-review,Mar-Alo,Zha86,AncBlu1997,AncBlu2002b}
and generalized to non-variational PDEs. 
From a purely algebraic viewpoint, 
infinitesimal symmetries of a PDE are the solutions of the linearization (Frechet derivative) equation
holding on the space of solutions to the PDE. 
Solutions of the adjoint linearization equation, holding on the space of solutions to the PDE, are called adjoint-symmetries \cite{AncBlu1997,SarCanCra1987,SarPriCra1990}. 
In the generalization of the Noether correspondence, 
infinitesimal symmetries are replaced by adjoint-symmetries,
and variational symmetries are replaced by multipliers 
which are adjoint-symmetries satisfying an Euler-Lagrange condition
\cite{KraVin,Anc-review,AncBlu1997,AncBlu2002b}. 
(Multipliers are alternatively known as cosymmetries in the literature on Hamiltonian and integrable systems \cite{Bla-book,Sok-book}. 
The property of existence of an adjoint-symmetry for a PDE has been called ``nonlinear self-adjointness'' in some papers; see \Ref{Anc2017} and references therein.)

As an important consequence of the modern Noether correspondence, 
the problem of finding the conservation laws for a PDE 
is reduced to a kind of adjoint of the problem of finding the symmetries of the PDE. 
In particular, for any PDE system, 
conservation laws can be explicitly derived in a similar algorithmic way 
to the standard way that symmetries are derived
(see \Ref{Anc-review} for a review).

These developments motivate studying the basic mathematical properties of adjoint-symmetries
and their connections to infinitesimal symmetries. 
As is well known, 
the set of infinitesimal symmetries of a PDE has the structure of a Lie algebra,
in which the subset of variational symmetries is a Lie subalgebra, 
and the set of conservation laws of a PDE is mapped into itself under the symmetries of the PDE. 
This leads to several interesting basic questions:
\begin{itemize}
\item
How do symmetries act on adjoint-symmetries and multipliers? 
\item
Does the set of adjoint-symmetries have any kind of algebraic structure, 
such as a generalized Lie bracket or Poisson bracket, 
with the set of multipliers inheriting a corresponding structure? 
\item 
Do there exist generalized analogs of Hamiltonian and (Noether) symplectic operators 
for general PDE systems? 
\end{itemize}

In \Ref{Anc2016,AncKar}, 
the explicit action of infinitesimal symmetries on multipliers 
is derived for general PDE systems
and used to study invariance of conservation laws under symmetries. 
Recently in \Ref{AncWan2020b}, 
for scalar PDEs, 
a linear mapping from infinitesimal symmetries into adjoint-symmetries is constructed 
in terms of any fixed adjoint-symmetry that is not a multiplier. 
This mapping can be viewed as a (Noether) pre-symplectic operator, 
in analogy with symplectic operators that map symmetries into adjoint-symmetries 
for Hamiltonian systems \cite{Olv-book,FucFok}. 
The inverse mapping thus can be viewed as a pre-Hamiltonian operator. 

The present paper expands substantially on this work 
and will give answers to the basic questions just posed for general PDE systems. 

Firstly, 
it will be shown that there are two basic different actions of 
infinitesimal symmetries on adjoint-symmetries. 
One action represents a Lie derivative, 
and the other action comes from the adjoint relationship between the determining equation for infinitesimal symmetries and adjoint-symmetries. 
For adjoint-symmetries that are multipliers, 
these two actions coincide with the known action of symmetries on multipliers 
(see \Ref{Olv-book,Anc2016,AncKar}). 
Furthermore, 
the difference of the two actions produces a third action that vanishes on multipliers. 
This third action yields a generalization of the pre-symplectic operator for scalar PDEs,
and its inverse provides a general pre-Hamiltonian operator.  
For evolution PDEs and Euler-Lagrange PDEs, 
this structure further yields a symplectic 2-form and an associated Poisson bracket, 
which can be used to look for a corresponding Hamiltonian structure
for non-dissipative PDE systems. 

Secondly, 
these three actions of infinitesimal symmetries on adjoint-symmetries 
will be used to construct associated bracket structures 
on the subset of adjoint-symmetries given by the range of each action. 
Two different constructions will be given:
the first bracket is antisymmetric and can be viewed as a pull-back of the symmetry commutator (Lie bracket) to adjoint-symmetries; 
the second bracket is non-symmetric and does not utilize the commutator structure of symmetries. 
Most significantly, 
one of the antisymmetric brackets will be shown to satisfy the Jacobi identity,
and thus it gives a Lie algebra structure to a natural subset of adjoint-symmetries. 
In certain situations, this subset will coincide with the whole set of adjoint-symmetries. 
More generally, 
a correspondence (homomorphism) will exist between 
Lie subalgebras of symmetries and adjoint-symmetries, 
which will hold even for dissipative PDEs that lack any local variational (Hamiltonian or Lagrangian) structure. 

Thirdly, 
the Lie bracket on adjoint-symmetries induces a corresponding bracket structure 
for conservation laws, which is a broad generalization of a Poisson bracket 
applicable to non-Hamiltonian systems. 

All of these main results are new and provide important steps in understanding 
the basic algebraic structure of adjoint-symmetries 
and its application to pre-Hamiltonian operators, (Noether) pre-symplectic operators, 
and symplectic 2-forms for general PDE systems. 

Apart from the intrinsic mathematical interest in developing and exploring such structures, 
a more applied utilization of the results is that 
symmetry actions on adjoint-symmetries 
can be used to produce a new adjoint-symmetry  --- and hence possibly a multiplier ---
from a known adjoint-symmetry and a known symmetry, 
while brackets on adjoint-symmetries 
allow a pair of known adjoint-symmetries to generate a new adjoint-symmetry 
--- and hence possibly a multiplier --- 
just as a pair of known symmetries can generate a new symmetry
from their Lie bracket. 
Additional adjoint-symmetries can be obtained through the interplay of these structures.

The main results will be illustrated by using 
the generalized Korteweg-de Vries (gKdV) equation in potential form as running example.
Several physical examples of PDE systems will be considered in a sequel paper. 

The rest of the present paper is organized as follows. 
Section~\ref{sec:symms.adjointsymms} gives a short review of 
infinitesimal symmetries, adjoint-symmetries, and multipliers,
from an algebraic viewpoint. 
Section~\ref{sec:symmaction} presents the actions of infinitesimal symmetries on adjoint-symmetries and multipliers. 
Section~\ref{sec:noetherops} explains the construction of general pre-Hamiltonian and pre-symplectic (Noether) operators from these actions. 
Section~\ref{sec:adjsymmbracket} derives the bracket structures for adjoint-symmetries, 
and discusses their properties. 
In particular, 
the conditions under which a Lie algebra structure arises for adjoint-symmetries 
from a commutator bracket is explained. 
Section~\ref{sec:evolPDEs}
specializes the results to evolution PDEs. 
Construction of a pre-symplectic operator and an associated symplectic 2-form and Poisson bracket is also explained. 
Finally, section~\ref{sec:remarks} provides some concluding remarks. 

Throughout, the mathematical setting will be calculus in jet space \cite{Olv-book},
which is summarized in an Appendix.
Partial derivatives and total derivatives will be denoted using
a standard (multi-) index notation. 
The Frechet derivative will be denoted by ${}'$. 
Adjoints of total derivatives and linear operators will be denoted by ${}^*$.
Prolongations will be denoted as $\pr$. 

Hereafter,
a ``symmetry'' will refer to an infinitesimal symmetry in evolutionary form. 

Work on classifying adjoint-symmetries of PDEs can be found in 
\Ref{AncBlu1997,Anc2017,AncBlu2002a,Lun2008,ZhaXie2016,Zha2017,Ma2018}. 
See also \Ref{FelYas} for other recent work on symplectic operators and variational structure related to adjoint-symmetries from a cohomological perspective.

\section{Symmetries and adjoint-symmetries}\label{sec:symms.adjointsymms}

An algebraic perspective will be utilized to allow 
symmetries and adjoint-symmetries to be defined and handled in a unified way
(following \Ref{Anc-review}). 

Consider a general PDE system of order $N$ consisting of $M$ equations
\begin{equation}\label{pde.sys}
G^A(x,u^{(N)}) =0,
\quad
A=1,\ldots,M
\end{equation}
where $x^i$, $i=1,\ldots,n$, are the independent variables,
and $u^\alpha$, $\alpha=1,\ldots,m$, are the dependent variables.
The space of formal solutions $u^\alpha(x)$ of the PDE system will be denoted $\Esp$.
As is usual in symmetry theory \cite{Ovs-book,Olv-book,BCA-book},
the PDE system will be assumed to be well posed in the sense that the standard tools of variational calculus in jet space can be applied. 
In particular, 
no integrability conditions are assumed to arise from the equations and their differential consequences; 
namely, the PDE system and its differential consequences are involutive. 
(A more precise formulation can be found 
in \cite{Olv-book,KraVin,Vin1998} from a geometric/algebraic point of view, 
and in \cite{Anc-review} from a computational point of view.
A general reference on involutivity, which bridges these viewpoints, is \cite{Sei-book}.)

An underlying technical condition will be that a PDE system admits a solved-form for a set of leading derivatives,
and likewise all differential consequences of the PDE system admit a solved-form in terms of differential consequences of the leading derivatives. 
This condition allows Hadamard's lemma to hold in the setting of jet space \cite{Anc-review}. 

\begin{lemma}\label{lem:Hadamard}
If a function $f(x,u^{(k)})$ vanishes on $\Esp$ then $f=R_f(G)$ holds identically,
where $R_f$ is some linear differential operator in total derivatives 
whose coefficients are functions that are non-singular on $\Esp$. 
\end{lemma}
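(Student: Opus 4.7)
The plan is to reduce the statement to the classical Hadamard lemma for smooth functions on Euclidean space by using the solved-form hypothesis as the bridge between jet space and ordinary function spaces. Because the PDE system together with all its differential consequences is involutive and admits a solved form, the jet coordinates near $\Esp$ split into two groups: the \emph{parametric} coordinates $z$, which remain free along $\Esp$, and the \emph{principal} coordinates $y$, corresponding to the leading derivatives $u^{\alpha_A}_{J_A}$ appearing explicitly in $G^A$ together with the higher derivatives entering the total-derivative prolongations $D_K G^A$. The solved-form hypothesis is exactly the statement that the map sending the principal coordinates to the collection $(D_K G^A)$ is locally invertible with non-singular Jacobian on a neighborhood of $\Esp$, so there is a smooth function $\phi(x,z)$ with $y=\phi(x,z)$ holding precisely on $\Esp$.

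Given this splitting, I would proceed in three steps. First, inverting the solved system expresses each principal coordinate as $y^{A,K} = \phi^{A,K}(x,z) + \psi^{A,K}(x,z,y)$, where $\psi^{A,K}$ is a linear combination of the $D_L G^B$ with coefficients that are smooth and non-singular on $\Esp$. Second, I apply the classical Hadamard lemma to $f$ viewed as a smooth function of $(x,z,y)$: integrating $\partial f/\partial y^{A,K}$ along the segment from $\phi(x,z)$ to $y$ in principal coordinates yields
\begin{equation*}
f(x,z,y) = f(x,z,\phi(x,z)) + \sum_{A,K} \bigl(y^{A,K}-\phi^{A,K}(x,z)\bigr)\, h_{A,K}(x,z,y),
\end{equation*}
with smooth remainders $h_{A,K}$. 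Since $f$ vanishes on $\Esp$ and $(x,z)$ parametrize $\Esp$, the first term vanishes identically in $(x,z)$, leaving only the sum. Third, I substitute the representation from the first step for each $y^{A,K}-\phi^{A,K}$ as a combination of the $D_L G^B$; collecting terms produces a linear operator $R_f$ in total derivatives, with coefficients non-singular on $\Esp$, such that $f = R_f(G)$.

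The main technical obstacle is the bookkeeping of orders. Since $f$ depends on $u^{(k)}$ only up to some finite order $k$, only finitely many principal coordinates should enter, and the associated $D_K G^A$ should involve derivatives of $G$ of bounded order. Verifying that the solved-form inversion remains non-singular on a neighborhood of $\Esp$ within the truncated jet bundle, and that no infinite cascade of differentiations is needed, is where the involutivity assumption is used in an essential way: involutivity guarantees that no hidden integrability conditions appear when passing from $G^A=0$ to its differential consequences, so the principal/parametric splitting and its inversion can be carried out order by order without obstruction, yielding the desired $R_f$ with smooth non-singular coefficients.
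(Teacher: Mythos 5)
Your argument is correct and is essentially the standard proof that the paper relies on (it states the lemma without proof, deferring to the cited review): use the solved form to trade the leading derivatives and their differential consequences for the prolonged system $D_K G^A$ as principal jet coordinates, apply the classical Hadamard/Taylor integral remainder in those coordinates, and collect the remainders into a total-derivative operator $R_f$ with coefficients non-singular on $\Esp$. The order-by-order bookkeeping you flag is exactly where the no-integrability-conditions (involutivity) assumption enters, as you say, so there is no gap beyond routine local-regularity caveats.
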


When the preceding technical conditions hold, a PDE system will be called \emph{regular}. 
Essentially all PDE systems of interest in physical applications are regular systems. 
(See \Ref{Anc-review,AncChe} for examples and further discussion.)
Hereafter, only regular PDE systems are considered. 

An additional technical condition, which is not needed for the main results, will be useful for certain developments. 
The proof is similar to that of the previous lemma \cite{Anc-review}. 

\begin{lemma}\label{lem:no.diff.identities}
Suppose $R(G)=0$ holds identically for a linear differential operator $R$ in total derivatives whose coefficients are functions that are non-singular on $\Esp$. 
If the PDE system $G^A=0$ does not obey any differential identities, 
then $R$ vanishes on $\Esp$. 
\end{lemma}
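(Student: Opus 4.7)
The plan is to argue by contrapositive, noting that the hypothesis $R(G)=0$ is, by its very form, either trivial or a differential identity of $G^A=0$.

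Write $R$ explicitly as $R = \sum_{A,I} R^A_I\, D^I$ in total derivatives, with coefficients $R^A_I(x,u^{(k)})$ that are non-singular on $\Esp$ by hypothesis. Then the hypothesis reads $\sum_{A,I} R^A_I\, D^I G^A = 0$ as an identity on jet space. Now, this is exactly the shape of a candidate differential identity for $G^A=0$: by definition, $G^A=0$ admits a differential identity precisely when such a relation holds identically while the coefficients do not all vanish on $\Esp$. The hypothesis that no differential identity exists therefore forces each $R^A_I$ to vanish on $\Esp$, which is the desired conclusion that $R$ vanishes on $\Esp$.

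To make this argument fully constructive in parallel with the proof of Lemma~\ref{lem:Hadamard}, one invokes the solved-form assumption on the PDE system: jet variables split into a set of leading derivatives (and their prolongations) that are determined by $G^A=0$ and its differential consequences, together with the remaining ``free'' non-leading coordinates. Each $D^I G^A$ then has the form of a leading jet variable minus a function of the non-leading coordinates. Substituting this decomposition into $\sum_{A,I} R^A_I\, D^I G^A$ and expanding in the leading jet variables reduces the vanishing of this sum to an identity on $\Esp$, which the no-differential-identities hypothesis pins down to the vanishing of each $R^A_I$ on $\Esp$.

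The main, rather minor, obstacle is the bookkeeping needed to ensure that distinct terms $D^I G^A$ correspond to independent leading jet variables, so that the coefficients $R^A_I$ can be isolated term by term without unintended cancellations. This is precisely what regularity (the existence of a solved-form for the system and all its prolongations) provides; it is the same structural ingredient that underlies Hadamard's lemma, which is why the proof proceeds in close analogy with that of Lemma~\ref{lem:Hadamard}.
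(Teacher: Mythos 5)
Your overall route is the one the paper intends: the paper gives no detailed argument for this lemma, saying only that the proof is similar to that of Lemma~\ref{lem:Hadamard} (via the solved-form, leading-derivative coordinates of \cite{Anc-review}), and your first paragraph correctly notes that once a ``differential identity'' is understood as a relation $R(G)\equiv 0$ with coefficients non-singular on $\Esp$ and not all vanishing there, the statement is close to a restatement of the hypothesis. So the structure of your proposal is acceptable and parallel to the paper's.

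There is, however, a genuine slip in the constructive part, concerning where the no-differential-identities hypothesis actually does its work. You claim that the key bookkeeping --- that distinct terms $D_IG^A$ correspond to independent leading jet variables, each of the form ``leading coordinate minus a function of non-leading coordinates'' --- ``is precisely what regularity provides.'' It is not, and it cannot be: if regularity alone gave that independence, the lemma would hold without the no-identities hypothesis, which is false. For instance, the curl system $G^A=\epsilon^{Aij}D_iu^j=0$ (or Maxwell's equations) is regular in the paper's sense, yet $D_AG^A\equiv 0$ holds with constant coefficients, so there is an operator $R$ annihilating $G$ whose coefficients do not vanish on $\Esp$. Concretely, when you try to put the prolonged equations in solved form, the right-hand sides generally contain prolonged leading derivatives belonging to the other equations, and the recursive substitution closes into a cycle exactly when a differential identity is present (in the curl example the cycle reproduces $D_1G^1+D_2G^2+D_3G^3=0$). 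The functional independence of the collection $\{D_IG^A\}$ near $\Esp$ --- which is what lets you adjoin these functions as coordinates in place of the prolonged leading derivatives, differentiate the identity $R^{A\,I}D_IG^A\equiv 0$ with respect to each $D_JG^B$, and evaluate on $\Esp$ to conclude $R^{B\,J}|_\Esp=0$ --- is precisely the content of the no-differential-identities hypothesis in combination with regularity, not a consequence of regularity alone. Your middle paragraph invokes the hypothesis only vaguely (``pins down''); once its role is located as above, the argument is complete and matches the paper's intended proof.
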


For a running example, the focusing gKdV equation in potential form will be used: 
\begin{equation}\label{p-gkdv}
u_t + \tfrac{1}{p+1} (u_x)^{p+1} + u_{xxx} =0 
\end{equation}
where $p>0$ is an arbitrary nonlinearity power. 
This equation will be referred to as the (focusing) p-gKdV equation. 
It is a regular PDE system. 
Note that its $x$-derivative yields the focusing gKdV equation in physical form 
$v_t + v^p v_x + v_{xxx}=0$
with $v=u_x$, 
where the coefficients of the convective dispersion terms are scaled to $1$. 
The special cases $p=1,2$ are the KdV equation and the mKdV equation, 
which are integrable systems.

\subsection{Determining equations and identities}

An infinitesimal \emph{symmetry} of a PDE system \eqref{pde.sys} is a set of functions 
$P^\alpha(x,u^{(k)})$ that are non-singular on $\Esp$ and satisfy 
\begin{equation}\label{symm.deteqn}
G'(P)^A|_\Esp =0 . 
\end{equation}
This is the determining equation for $P^\alpha$,
called the characteristic functions of the symmetry. 

Off of the solution space $\Esp$,
the symmetry determining equation is given by
\begin{equation}\label{symm.deteqn.offsoln}
G'(P)^A = R_P(G)^A
\end{equation}
(due to Lemma~\ref{lem:Hadamard}) 
where $R_P=(R_P)^{A\,I}_{B} D_I$ is some linear differential operator in total derivatives 
whose coefficients $(R_P)^{A\,I}_{B}$ are functions that are non-singular on $\Esp$. 

The determining equation for adjoint-symmetries is the adjoint of 
the symmetry determining equation \eqref{symm.deteqn}. 
It is obtained by using the Frechet derivative identity
\begin{equation}\label{symm.adjsymm.ibp}
Q_A G'(P)^A = P^\alpha G'{}^*(Q)_\alpha + D_i\Psi^i(P,Q) . 
\end{equation}  
There is an explicit expression for $\Psi^i$ in terms of $G^A$ 
(see \cite{Anc-review} and references therein). 

An \emph{adjoint-symmetry} of a PDE system \eqref{pde.sys} 
is a set of functions $Q_A(x,u^{(k)})$ that are non-singular on $\Esp$ and satisfy
\begin{equation}\label{adjsymm.deteqn}
G'{}^*(Q)_\alpha|_\Esp =0 . 
\end{equation}
Off of the solution space $\Esp$, this determining equation is given by
\begin{equation}\label{adjsymm.deteqn.offsoln}
G'{}^*(Q)_\alpha = R_Q(G)_\alpha
\end{equation}
(again due to Lemma~\ref{lem:Hadamard}) 
where $R_Q=(R_Q)_{\alpha\,B}^{I} D_I$ is some linear differential operator in total derivatives
whose coefficients $(R_Q)_{\alpha\,B}^{I}$ are functions that are non-singular on $\Esp$.

The geometrical meaning of symmetries is well known.
From the algebraic viewpoint, it comes from the relation 
$G'(P){}^A = (\pr P^\alpha\partial_{u^\alpha}) G^A$
whereby the symmetry determining equation \eqref{symm.deteqn}
can be expressed as
\begin{equation}\label{symm.deteqn.vector}
((\pr P^\alpha\partial_{u^\alpha})G^A)|_\Esp =0 . 
\end{equation}
This is usually the starting point for defining symmetries,
since it indicates that $\X_P = P^\alpha\partial_{u^\alpha}$
is a vector field that is tangent to surfaces $G^A=0$ 
(and their prolongations $D^k G^A=0$, $k=0,1,2,\ldots$) in jet space. 
A geometrical meaning for adjoint-symmetries has recently been developed in \Ref{AncWan2020a}, based on evolutionary 1-forms $Q_A\d G^A$ 
that functionally vanish on the solution space $\Esp$. 

The most common form encountered for symmetries is a Lie point symmetry \cite{Olv-book,BCA-book}, 
given by $P^\alpha = \eta^\alpha(x,u) -\xi^i(x,u) u^\alpha_i$. 
Symmetries that have have a general form $P^\alpha(x,u^{(k)})$ with $k\geq 1$
are sometimes called generalized symmetries or symmetries of order $k$. 

The most common form for adjoint-symmetries is given by 
$Q_A(x,u^{(k)})$ with $k<N$, 
where $N$ is the differential order of a given PDE system \eqref{pde.sys}. 
Such adjoint-symmetries are called \emph{low-order} \cite{Anc-review,AncKar}. 

A symmetry or an adjoint-symmetry is called higher-order if has a differential order $k>N$. 
Existence of an infinite hierarchy with $k$ being unbounded is typically an indicator of integrability \cite{Olv-book,MikShaSok-book}. 

\underline{Running example}: 
For the p-gKdV equation \eqref{p-gkdv}, 
the symmetry and adjoint-symmetry determining equations are respectively given by 
\begin{equation}
(D_t P + u_x^p D_x P + D_x^3 P)|_\Esp =0,
\quad
(-D_t Q -D_x(u_x^p Q) - D_x^3 Q)|_\Esp =0 .
\end{equation}
These equations are adjoints of each other. 
Since they do not coincide when $P=Q$ with $p\neq0$, 
this shows that p-gKdV adjoint-symmetries differ from p-gKdV symmetries. 
It is well-known that, for arbitrary $p>0$, 
the Lie point symmetries are spanned by 
\begin{equation}\label{p-gkdv.P}
P_1=1,
\quad
P_2=-u_x,
\quad
P_3=-u_t,
\quad
P_4=(p-2)u -3p t u_t -xp u_x,
\end{equation}
which respectively generate shifts, space-translations, time-translations, and scalings. 
They satisfy 
\begin{equation}\label{p-gkdv.RP}
R_{P_1} = 0,
\quad
R_{P_2} = -D_x,
\quad
R_{P_3} = -D_t,
\quad
R_{P_4} = -p x D_x -3p t D_t - 2(p+1) 
\end{equation}
off of $\Esp$. 
The low-order adjoint-symmetries can be shown to be spanned by 
\begin{equation}\label{p-gkdv.Q}
Q_1=u_{xx},
\quad
Q_2=u_{tx},
\quad
Q_3=2u_x + 3p t u_{tx} + p x u_{xx} 
\end{equation}
where
\begin{equation}\label{p-gkdv.RQ}
R_{Q_1} =-D_x^2, 
\quad
R_{Q_2} = -D_tD_x,
\quad
R_{Q_3} = -p x D_x^2 -3p t D_tD_x - (3p+2)D_x . 
\end{equation}
In the special cases $p=1,2$, 
a hierarchy of higher-order symmetries and adjoint-symmetries exist, 
corresponding to the integrability structure of the KdV and mKdV equations. 
(No integrability structure is known for any other values of $p\neq0$.)

Recall that a \emph{multiplier} is a set of functions $\Lambda_A(x,u^{(k)})$
that are non-singular on $\Esp$ and satisfy
$\Lambda_A G^A = D_i\Psi^i$ off of $\Esp$, 
for some vector function $\Psi^i$ in jet space. 
This total divergence condition is equivalent to
\begin{equation}\label{multr.deteqn}
E_{u^\alpha}(\Lambda_A G^A)=0 . 
\end{equation}
It can be further reformulated through the product rule of the Euler operator,
which yields the equivalent condition 
$\Lambda'{}^*(G)_\alpha + G'{}^*(\Lambda)_\alpha =0$. 
Consequently, on $\Esp$, 
\begin{equation}
G'{}^*(\Lambda)_\alpha|_\Esp =0
\end{equation}
whereby $\Lambda_A$ is an adjoint-symmetry. 
Off of $\Esp$, 
the adjoint-symmetry determining equation \eqref{adjsymm.deteqn.offsoln}
yields 
\begin{equation}
G'{}^*(\Lambda)_\alpha = R_\Lambda(G)_\alpha
\end{equation}
where $R_\Lambda$ is a linear differential operator in total derivatives. 
Hence, one sees that 
$\Lambda'{}^*(G)_\alpha =- G'{}^*(\Lambda)_\alpha =-R_\Lambda(G)_\alpha$.
Now suppose that $G^A=0$ does not obey any differential identities. 
Then one can conclude (from Lemma~\ref{lem:no.diff.identities}) that 
$\Lambda'{}^* =-R_\Lambda + S^{I,J} (D_I G) D_J$
where $S^{I,J}=-S^{J,I}$ holds off of $\Esp$ and $S^{I,J}$ is non-singular on $\Esp$.
Furthermore, suppose that $\Lambda_A$ contains 
no leading derivatives of $G^A=0$ and no differential consequences of any leading derivatives.
Then one can assume without loss of generality that $S^{I,J}=0$.
Therefore, in this situation,
$\Lambda'{}^*=-R_\Lambda$
holds identically.
The adjoint of this equation yields the relation
\begin{equation}\label{multr.relation}
\Lambda'{}=-R_\Lambda^* . 
\end{equation}

Every multiplier $\Lambda_A(x,u^{(k)})$ of a PDE system determines 
a conservation law $(D_i\Psi^i)|_\Esp=0$ holding on the solution space $\Esp$. 
The components $\Psi^i$ can obtained from $\Lambda_A$ by 
homotopy integral formulas \cite{Olv-book,BCA-book,Anc-review}, 
or by an algebraic formula when the given PDE system possesses 
a scaling symmetry \cite{Anc-review,Anc2016}. 
When a PDE system is regular, 
all conservation laws will arise from multipliers \cite{Anc-review}. 

\underline{Running example}: 
The low-order multipliers of the p-gKdV equation \eqref{p-gkdv} consist of 
the span of a subset of the low-order adjoint-symmetries:
\begin{equation}\label{p-gkdv.multr}
\Lambda_1 = Q_1 = u_{xx},
\quad
\Lambda_2 =Q_2 = u_{tx} . 
\end{equation}
In particular, 
the adjoint-symmetry $Q_3=2u_x + 3p t u_{t,x} + p x u_{xx}$ is not a multiplier. 
The conservation laws arising from the two multipliers are respectively given by
\begin{equation}\label{p-gkdv.momentum}
(\Psi^t,\Psi^x)=(-\tfrac{1}{2} u_x{}^2, \tfrac{1}{2}u_{xx}{}^2 + u_t u_x +\tfrac{1}{(p+1)(p+2)} u_x{}^{p+1})
\end{equation}
and 
\begin{equation}\label{p-gkdv.energy}
(\Psi^t,\Psi^x)=(-\tfrac{1}{2}u_{xx}{}^2 + \tfrac{1}{(p+1)(p+2)} u_x{}^{p+2},u_{tx}u_{xx}+\tfrac{1}{2} u_x{}^2) .
\end{equation}
These describe continuity equations for momentum and energy, 
which can be seen from the form of the conserved densities 
$\Psi^t=\tfrac{1}{2}v^2,\tfrac{1}{2}v_x{}^2 -\tfrac{1}{(p+1)(p+2)} v{}^{p+2}$
(up to an overall sign)
expressed in terms of the gKdV variable $v=u_x$
(see e.g.\  \Ref{AncNayRec}).

\section{Action of symmetries on adjoint-symmetries}\label{sec:symmaction}

Symmetries of any given PDE system 
are well-known to form a Lie algebra via their commutators. 
From the algebraic viewpoint,
if $P_1^\alpha$, $P_2^\alpha$ are symmetries,
then so is the commutator defined by 
\begin{equation}\label{symm.commutator}
[P_1,P_2]^\alpha = P_2{}'(P_1)^\alpha - P_1{}'(P_2)^\alpha . 
\end{equation}
The geometrical formulation is the same:
\begin{equation}
[\pr\X_{P_1},\pr\X_{P_2}] = \pr\X_{[P_1,P_2]} . 
\end{equation}
Stated precisely, the set of symmetries comprises a linear space
on which the commutator defines a bilinear antisymmetric bracket that obeys the Jacobi identity. 
This bracket is called the \emph{Lie bracket} of the symmetry vector fields. 
Any symmetry has a natural action on the linear space of all symmetries 
via the algebraic commutator \eqref{symm.commutator}. 
This action is commonly denoted by $\ad(P_1)P_2 = [P_1,P_2]$. 

Symmetries also have a natural action on the set of adjoint-symmetries,
since this set is a linear space that is determined by the given PDE system
whose solution set $\Esp$ is mapped into itself by a symmetry. 
Actually, 
there are two distinct actions of symmetries on the linear space of adjoint-symmetries,
as shown next. 

The first symmetry action arises directly from the prolonged action of a symmetry $P^\alpha$ 
applied to the adjoint-symmetry determining equation \eqref{adjsymm.deteqn.offsoln}.
To begin, from the lefthand side of this equation, one gets
\begin{equation}\label{eqn.lhs}
\pr\X_P(G'{}^*(Q)_\alpha)
=G'{}^*(\pr\X_P(Q))_\alpha + \pr\X_P(G'{}^*)(Q)_\alpha . 
\end{equation}
The last term can be simplified by the following steps.
First, one has 
$\pr\X_P(G'{}^*) = (\pr\X_P(G))'{}^* -P'{}^* G'{}^*$
(by identity \eqref{X.Frechet.id}),
whence 
$\pr\X_P(G'{}^*)(Q)_\alpha = (\pr\X_P(G))'{}^*(Q)_\alpha  -P'{}^*(G'{}^*(Q))_\alpha$. 
Second, through the symmetry equation \eqref{symm.deteqn.offsoln},
one can simplify
$(\pr\X_P(G))'{}^*|_\Esp= (R_P(G))'{}^*|_\Esp = (R_P G')^*|_\Esp = G'{}^*R_P^*|_\Esp$,
where $R_P^*$ is the adjoint of the linear differential operator $R_P$ (in total derivatives). 
Thus, expression \eqref{eqn.lhs} on $\Esp$ becomes
\begin{equation}\label{eqn.lhs.simp}
\pr\X_P(G'{}^*(Q)_\alpha)|_\Esp
=G'{}^*(Q'(P) + R_P^*(Q))_\alpha|_\Esp . 
\end{equation}
Next, from the righthand side of equation \eqref{adjsymm.deteqn.offsoln}, one has
\begin{equation}\label{eqn.rhs}
\pr\X_P(R_Q(G)_\alpha) 
= (\pr\X_P R_Q)(G)_\alpha + R_Q(\pr\X_P(G))_\alpha . 
\end{equation}
On $\Esp$, this yields
\begin{equation}\label{eqn.rhs.simp}
\pr\X_P(R_Q(G)_\alpha)|_\Esp =0 . 
\end{equation}
Finally, 
from equating expressions \eqref{eqn.rhs.simp} and \eqref{eqn.lhs.simp},
one gets
\begin{equation}\label{1form.eqn.simp}
G'{}^*(Q'(P) + R_P^*(Q))_\alpha|_\Esp =0
\end{equation}
which shows that $Q'(P)_A + R_P^*(Q)_A$ is an adjoint-symmetry.
Therefore, this yields a linear mapping
\begin{equation}\label{symmaction2.adjsymm}
Q_A\overset{{\X_P}}{\longrightarrow} Q'(P)_A + R_P^*(Q)_A
\end{equation}
acting on the linear space of adjoint-symmetries.

This action \eqref{symmaction2.adjsymm} 
can be interpreted geometrically as a Lie derivative \cite{AncWan2020a}
and is a generalization of a better known action of 
symmetries on conservation law multipliers,
which is found in \Ref{Anc2016,AncKar}. 
Further discussion is given in section~\ref{sec:symmaction.multr}. 

The second symmetry action arises from the adjoint relation between
the respective determining equations \eqref{symm.deteqn} and \eqref{adjsymm.deteqn}
for symmetries and adjoint-symmetries. 

As is well known \cite{AncBlu1997,Anc2017,Cav,Lun}, 
when $P^\alpha$ is a symmetry and $Q_A$ is an adjoint-symmetry,
the adjoint relation \eqref{symm.adjsymm.ibp} yields a conservation law since
\begin{equation}\label{symm.adjsymm.conslaw}
D_i\Psi^i(P,Q)|_\Esp = Q_A G'(P)^A|_\Esp - P^\alpha G'{}^*(Q)_\alpha|_\Esp
= 0 
\end{equation}  
from the determining equations \eqref{symm.deteqn} and \eqref{adjsymm.deteqn}.
Off of $\Esp$,
this formula is given by 
$D_i\Psi^i(P,Q) = Q_A R_P(G)^A - P^\alpha R_Q(G)_\alpha$
where $R_P$ and $R_Q$ are the linear differential operators (in total derivatives)
determined by equations \eqref{symm.deteqn.offsoln} and \eqref{adjsymm.deteqn.offsoln}.
Integration by parts yields
\begin{equation}\label{eqn.symmaction1}
D_i\Psi^i(P,Q) = (R_P^*(Q)_A - R_Q^*(P)_A)G^A + D_i F^i(P,Q;G)
\end{equation}
and hence $(R_P^*(Q)_A - R_Q^*(P)_A)G^A$ is a total divergence in jet space. 
This implies that the set of functions $R_P^*(Q)_A - R_Q^*(P)_A$ constitute
a conservation law multiplier.
Since every multiplier is an adjoint-symmetry,
there is a linear mapping
\begin{equation}\label{symmaction1.adjsymm}
Q_A\overset{{\X_P}}{\longrightarrow} R_P^*(Q)_A - R_Q^*(P)_A := \Lambda_A
\end{equation}
which acts on the linear space of adjoint-symmetries. 

The preceding results are a full and complete generalization of the symmetry actions 
derived for scalar PDEs in \Ref{AncWan2020b}. 
They will now be summarized, 
and then some of their consequences will be developed. 

\begin{theorem}\label{thm:symmactions.adjsymm}
For any (regular) PDE system \eqref{pde.sys}, 
there are two actions \eqref{symmaction2.adjsymm} and \eqref{symmaction1.adjsymm}
of symmetries on the linear space of adjoint-symmetries.
The second symmetry action \eqref{symmaction1.adjsymm} maps adjoint-symmetries
into conservation law multipliers. 
The difference of the first and second actions yields the linear mapping
\begin{equation}\label{symmaction3.adjsymm}
Q_A\overset{{\X_P}}{\longrightarrow} Q'(P)_A + R_Q^*(P)_A . 
\end{equation}
\end{theorem}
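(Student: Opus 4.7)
The plan is to assemble the three claims by tracking the off-shell versions of the determining equations \eqref{symm.deteqn.offsoln} and \eqref{adjsymm.deteqn.offsoln}, using the Frechet derivative product identity, and restricting the resulting identities to $\Esp$. Most of the algebra is already laid out in the paragraphs preceding the theorem; I would organize the argument into three short verifications corresponding to the two actions and their difference.

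First I would verify that \eqref{symmaction2.adjsymm} lands in the adjoint-symmetry space. Starting from $G'{}^*(Q) = R_Q(G)$, apply $\pr\X_P$ to both sides. On the left, the product identity $\pr\X_P(G'{}^*) = (\pr\X_P G)'{}^* - P'{}^* G'{}^*$ gives
\begin{equation*}
\pr\X_P(G'{}^*(Q))_\alpha = G'{}^*(Q'(P))_\alpha + (\pr\X_P G)'{}^*(Q)_\alpha - P'{}^*(G'{}^*(Q))_\alpha .
\end{equation*}
Substituting $\pr\X_P G = R_P(G)$ and taking the adjoint of $R_P G' = (R_P(G))'$ modulo terms proportional to $G$, one gets $(R_P(G))'{}^*|_\Esp = G'{}^* R_P^*|_\Esp$. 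The last term in the display vanishes on $\Esp$ since $Q$ is an adjoint-symmetry. On the right, $\pr\X_P(R_Q(G)) = (\pr\X_P R_Q)(G) + R_Q(R_P(G))$, which vanishes on $\Esp$. Equating both sides yields $G'{}^*(Q'(P)+R_P^*(Q))|_\Esp = 0$.

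Next, for the second action \eqref{symmaction1.adjsymm}, I would start from the Frechet identity \eqref{symm.adjsymm.ibp}, substitute the off-shell forms \eqref{symm.deteqn.offsoln} and \eqref{adjsymm.deteqn.offsoln}, and integrate by parts on each side to transfer the total derivatives off of $G^A$. This produces the identity $(R_P^*(Q)_A - R_Q^*(P)_A)G^A = D_i(\Psi^i - F^i)$, so the Euler-operator characterization \eqref{multr.deteqn} shows that $\Lambda_A = R_P^*(Q)_A - R_Q^*(P)_A$ is a multiplier; since every multiplier is an adjoint-symmetry, the map is well-defined into the adjoint-symmetry space and its image lies in the subspace of multipliers. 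Finally, the difference of the two actions is immediate algebra:
\begin{equation*}
\bigl(Q'(P) + R_P^*(Q)\bigr) - \bigl(R_P^*(Q) - R_Q^*(P)\bigr) = Q'(P) + R_Q^*(P),
\end{equation*}
and this is an adjoint-symmetry by linearity.

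The main technical point I expect to need care with is the identity $(R_P(G))'{}^*|_\Esp = G'{}^* R_P^*|_\Esp$ used in the first step: it is only literally true modulo operators whose coefficients vanish on $\Esp$, because $(R_P G')^* = G'{}^* R_P^* + (\text{terms with factors of } D_I G)$. I would check that these correction terms do not contribute when acting on $Q$, since they come multiplied by $G^A$ or its differential consequences and thus vanish on $\Esp$. A parallel check is needed for the ambiguity of $R_P$ and $R_Q$ themselves, which are only determined up to operators annihilating $G$; any such freedom contributes terms proportional to $G$ and so drops out of the final identities on $\Esp$. Once these points are in hand, the three claims of the theorem follow by direct assembly.
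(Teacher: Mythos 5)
Your proposal is correct and follows essentially the same route as the paper: applying $\pr\X_P$ to the off-shell adjoint-symmetry equation with the identity \eqref{X.adjFrechet.id} for the first action, extracting the multiplier from the integrated-by-parts form of \eqref{symm.adjsymm.ibp} for the second, and taking the difference for the third. Your extra check that $(R_P(G))'{}^*$ agrees with $G'{}^*R_P^*$ only modulo terms carrying factors of $D_I G$ (which vanish on $\Esp$), and that the ambiguity in $R_P$, $R_Q$ is harmless, is exactly the implicit justification in the paper's computation.
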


The action \eqref{symmaction3.adjsymm} will be trivial
when the adjoint-symmetry is a conservation law multiplier, 
as follows from the relation \eqref{multr.relation} which holds 
under certain mild conditions on the form of 
the PDE system $G^A=0$ (Lemma~\ref{lem:no.diff.identities}) 
and the functions $Q_A$. 

\begin{proposition}\label{prop:symmaction3.adjsymm}
For a (regular) PDE system $G^A=0$ with no differential identities,
the symmetry action \eqref{symmaction3.adjsymm} on 
adjoint-symmetries $Q_A$ that contain no leading derivatives (and their differential consequences) in the PDE system
is trivial iff $Q_A$ is a conservation law multiplier. 
\end{proposition}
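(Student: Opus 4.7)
The plan is to reduce both directions of the equivalence to the single operator identity $Q' + R_Q^* = 0$ in jet space, which under the hypotheses of the proposition is equivalent to $Q_A$ being a conservation law multiplier. The equivalence between this identity and the multiplier property is essentially the reversal of the derivation leading to \eqref{multr.relation}: starting from $Q' + R_Q^* = 0$, I would take the adjoint to get $Q'{}^* + R_Q = 0$; apply this to $G^A$ to get $Q'{}^*(G)_\alpha + R_Q(G)_\alpha = 0$; and then use the off-shell adjoint-symmetry equation \eqref{adjsymm.deteqn.offsoln} to substitute $R_Q(G)_\alpha = G'{}^*(Q)_\alpha$, producing $Q'{}^*(G)_\alpha + G'{}^*(Q)_\alpha = 0$. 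By the product rule for the Euler operator, this is exactly $E_{u^\alpha}(Q_A G^A) = 0$, which by \eqref{multr.deteqn} is the multiplier condition. This chain is reversible, so the multiplier property and the identity $Q'+R_Q^*=0$ are equivalent under the assumptions that $G^A=0$ has no differential identities and $Q_A$ has no leading derivatives or their consequences (exactly the hypotheses needed to suppress the antisymmetric correction term $S^{I,J}$ appearing in the discussion before \eqref{multr.relation}).

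For the ``if'' direction, I would start with the assumption that $Q_A$ is a multiplier, so that \eqref{multr.relation} gives $Q' = -R_Q^*$ as an operator identity in jet space. Substituting any function $P^\alpha$, in particular any symmetry, into this identity yields $Q'(P)_A + R_Q^*(P)_A = 0$, so the action \eqref{symmaction3.adjsymm} is trivial.

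For the ``only if'' direction, I would assume the action $Q_A \mapsto Q'(P)_A + R_Q^*(P)_A$ is trivial, which I interpret (consistently with the identity produced in the ``if'' direction) as the operator identity $Q' + R_Q^* = 0$ on arbitrary jet functions. From this identity, the chain of rewrites summarized in the first paragraph produces the multiplier condition $E_{u^\alpha}(Q_A G^A) = 0$, completing the equivalence.

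The hard part will be justifying the passage from triviality of the action on individual symmetries $P$ to the operator identity $Q' + R_Q^* = 0$ on arbitrary jet functions. There are two routes. The cleanest is to note that the ``if'' direction already produces the full operator identity (since \eqref{multr.relation} holds for any input, not merely symmetries), so the natural reading of ``trivial action'' in the proposition is operator-triviality, and the equivalence then follows directly from the algebraic chain above. Alternatively, one can argue that formal symmetries, parametrized by Cauchy data on a non-characteristic hypersurface subject only to the linearized equation $G'(P)|_\Esp=0$, form a sufficiently ample linear space in jet coordinates that a linear differential operator vanishing on all of them must vanish identically; this then reduces the weaker interpretation of ``trivial'' to the operator identity and lets the argument proceed as above.
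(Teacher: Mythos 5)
Your proposal is correct and takes essentially the same route as the paper: the result rests on relation \eqref{multr.relation} and the reversibility of its derivation, so that under the no-differential-identities and no-leading-derivatives hypotheses the multiplier condition $E_{u^\alpha}(Q_A G^A)=0$ is equivalent to the identity $Q'+R_Q^*=0$, i.e.\ to the vanishing of the action \eqref{symmaction3.adjsymm}. Your first reading of ``trivial'' (at the operator level) is the one the paper intends, consistent with $\Jop_3:=S_{3\,Q}=Q'+R_Q^*$ in Theorem~\ref{thm:symmaction.structures}, so the ampleness fallback argument is not needed (and, as stated, would be hard to justify for a finite-dimensional symmetry algebra).
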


The conditions in Proposition~\ref{prop:symmaction3.adjsymm}
are satisfied by evolution PDEs,
as shown in section~\ref{sec:evolPDEs}. 

\underline{Running example}: 
For the p-gKdV equation \eqref{p-gkdv}, 
the symmetry actions on adjoint-symmetries are shown in Table~\ref{table:p-gkdv.symmactions}. 
The non-zero commutators of the symmetries are given by 
\begin{equation}\label{p-gkdv.symmalg}
[P_1,P_4]=(p-2)P_1,
\quad
[P_2,P_4]=pP_2,
\quad
[P_3,P_4]=3pP_3 .
\end{equation}

\begin{table}[h!]
\caption{p-gKdV equation: symmetry actions on adjoint-symmetries}
\label{table:p-gkdv.symmactions}
\begin{subtable}{.5\linewidth}
\centering
\caption{action by \eqref{symmaction2.adjsymm}} 
\begin{tabular}{l||c|c|c|c}
& $P_1$
& $P_2$
& $P_3$
& $P_4$
\\
\hline
\hline  
$Q_1$
& $0$
& $0$
& $0$
& $(p-4)Q_1$
\\
\hline
$Q_2$
& $0$
& $0$
& $0$
& $-(p+4)Q_2$
\\
\hline
$Q_3$
& $0$
& $p Q_1$
& $3p Q_2$
& $2(p-2) Q_3$
\\
\end{tabular}
\end{subtable}%
\begin{subtable}{.5\linewidth}
\centering
\caption{action by \eqref{symmaction1.adjsymm}} 
\begin{tabular}{l||c|c|c|c}
& $P_1$
& $P_2$
& $P_3$
& $P_4$
\\
\hline
\hline  
$Q_1$
& $0$
& $0$
& $0$
& $(p-4)Q_1$
\\
\hline
$Q_2$
& $0$
& $0$
& $0$
& $-(p+4)Q_2$
\\
\hline
$Q_3$
& $0$
& $(4-p) Q_1$
& $(p+4) Q_2$
& $0$
\\
\end{tabular}
\end{subtable}%
\\
\begin{subtable}{.5\linewidth}
\caption{action by \eqref{symmaction3.adjsymm}}
\begin{tabular}{l||c|c|c|c}
& $P_1$
& $P_2$
& $P_3$
& $P_4$
\\
\hline
\hline  
$Q_1$
& $0$
& $0$
& $0$
& $0$
\\
\hline
$Q_2$
& $0$
& $0$
& $0$
& $0$
\\
\hline
$Q_3$
& $0$
& $2(p-2) Q_1$
& $2(p-2) Q_2$
& $2(p-2) Q_3$
\\
\end{tabular}
\end{subtable}
\end{table}

\subsection{Symmetry action on multipliers}\label{sec:symmaction.multr}

The action of a symmetry vector field $\X_P = P^\alpha\partial_{u^\alpha}$
on the multiplier equation $\Lambda_A G^A = D_i\Psi^i$
yields, for the righthand side,
\begin{equation}\label{multr.eqn.rhs}
\pr\X_P D_i\Psi^i = D_i(\pr\X_P \Psi^i), 
\end{equation}
while for the lefthand side, 
$\pr\X_P(\Lambda_A G^A) = \Lambda'(P)_A G^A + \Lambda_A G'(P)^A$. 
The last term can be simplified by using the symmetry equation \eqref{symm.deteqn.offsoln} off of $\Esp$:
\begin{equation}
\Lambda_A G'(P)^A
= \Lambda_A R_P(G)^A 
=R_P^*(\Lambda)_A G^A + D_i F^i . 
\end{equation}
Thus, 
\begin{equation}\label{multr.eqn.lhs}
\pr\X_P(\Lambda_A G^A)
= (\Lambda'(P)_A + R_P^*(\Lambda)_A)G^A\quad\text{modulo total derivatives.}  
\end{equation}
Now, from equating expressions \eqref{multr.eqn.lhs} and \eqref{multr.eqn.rhs},
one concludes that
$(\Lambda'(P)_A + R_P^*(\Lambda)_A)G^A$ is a total derivative.
Therefore, $\Lambda'(P)_A + R_P^*(\Lambda)_A$ is a multiplier.

This yields the following well-known action \cite{Anc2016,AncKar}:
\begin{equation}
\Lambda_A \overset{\X_P}{\longrightarrow} \Lambda'(P)_A + R_P^*(\Lambda)_A . 
\end{equation}

Theorem~\ref{thm:symmactions.adjsymm} shows that this action extends from
conservation law multipliers to adjoint-symmetries
through the symmetry action \eqref{symmaction2.adjsymm} on adjoint-symmetries.

\subsection{Action of Lie point symmetries}\label{sec:Liepointsymm}

An explicit expression for the first symmetry action \eqref{symmaction2.adjsymm}  
in Theorem~\ref{thm:symmactions.adjsymm}
can be derived in the case of Lie point symmetries. 

A \emph{Lie point symmetry} vector field has the form \cite{Olv-book,BCA-book}
\begin{equation}\label{pointsymm}
\X_\p = P_\p^\alpha\partial_{u^\alpha} ,
\quad
P_\p^\alpha = \eta^\alpha(x,u) -\xi^i(x,u) u^\alpha_i, 
\end{equation}
which generates a point transformation group acting on the space $(x,u)$, 
as given by exponentiation of the corresponding canonical vector field 
\begin{equation}
\Y_\p = \xi^i \partial_{x^i} + \eta^\alpha\partial_{u^\alpha} .
\end{equation}
The prolongations of these vector fields are related by \cite{Olv-book,BCA-book}
\begin{equation}\label{canonical}
\pr\Y_\p = \xi^i D_i +\pr\X .
\end{equation}
A function $F(x,u^{(k)})$ is \emph{symmetry invariant} iff 
$\pr\Y_\p F$ vanishes identically. 
More generally, a function $F(x,u^{(k)})$ is \emph{symmetry homogeneous} iff 
$\pr\Y_\p F = \sigma_F F$ holds identically
for some function $\sigma_F(x,u)$. 

The symmetry determining equation \eqref{symm.deteqn.offsoln} 
for Lie point symmetries can be expressed as 
\begin{equation}
\pr\Y_\p(G)= R_\p(G)
\end{equation}
where $R_\p=(R_\p)^{A\,I}_{B} D_I$ is some linear differential operator in total derivatives 
whose coefficients $(R_\p)^{A\,I}_{B}$ are functions that are non-singular on $\Esp$. 
When every PDE in the system $G^A=0$ has the same differential order, 
and the system has no differential identities, 
then $R_\p$ will be purely algebraic, 
namely $(R_\p)^{A\,I}_{B}$ vanishes for $I\neq\emptyset$. 

\begin{proposition}\label{prop:symmaction2.pointsymm}
The first symmetry action \eqref{symmaction2.adjsymm}  
for a Lie point symmetry \eqref{pointsymm} on an adjoint-symmetry is given by 
\begin{equation}\label{pointsymm.action2.adjsymm}
Q_A \overset{{\X_\p}}{\longrightarrow} 
\Y_p(Q)_A + R_\p^*(Q)_A  +(D_i\xi^i)Q_A 
\end{equation}
where $R_\p^*$ is the adjoint of $R_\p$. 
\end{proposition}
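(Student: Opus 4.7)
The plan is to convert the first symmetry action \eqref{symmaction2.adjsymm}, written in terms of the evolutionary vector field $\X_\p$ and the operator $R_{P_\p}$, into the stated expression, written in terms of the canonical vector field $\Y_\p$ and the operator $R_\p$. The conversion rests on two independent relations: the canonical relation \eqref{canonical} between $\pr\Y_\p$ and $\pr\X_\p$, and a corresponding relation between $R_\p$ and $R_{P_\p}$ that follows from the two equivalent forms of the symmetry determining equation for a point symmetry.

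First I would rewrite the Frechet term as $Q'(P_\p)_A = \pr\X_\p(Q_A)$, and then use \eqref{canonical} to substitute $\pr\X_\p(Q_A) = \pr\Y_\p(Q_A) - \xi^i D_i Q_A$. This already produces the $\Y_\p(Q)_A$ term in \eqref{pointsymm.action2.adjsymm}, with the leftover piece $-\xi^i D_i Q_A + R_{P_\p}^*(Q)_A$ still to be identified with $R_\p^*(Q)_A + (D_i\xi^i)Q_A$.

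Next I would derive the identity $R_\p = R_{P_\p} + \xi^i D_i$ (as an operator in total derivatives, acting on $G$). This follows by applying \eqref{canonical} to $G^A$ to get $\pr\Y_\p(G)^A = \xi^i D_i G^A + G'(P_\p)^A$, then inserting the two forms of the symmetry determining equation, $G'(P_\p)^A = R_{P_\p}(G)^A$ and $\pr\Y_\p(G)^A = R_\p(G)^A$, and matching coefficients. Taking the formal adjoint, and using that $(\xi^i D_i)^*(Q)_A = -D_i(\xi^i Q_A) = -(D_i\xi^i)Q_A - \xi^i D_i Q_A$, yields
\begin{equation*}
R_{P_\p}^*(Q)_A = R_\p^*(Q)_A + (D_i\xi^i)Q_A + \xi^i D_i Q_A .
\end{equation*}
Substituting this into the leftover piece, the two $\xi^i D_i Q_A$ contributions cancel, leaving exactly $R_\p^*(Q)_A + (D_i\xi^i)Q_A$, which combines with $\Y_\p(Q)_A$ to give \eqref{pointsymm.action2.adjsymm}.

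The only subtlety, and hence the one ``obstacle,'' is keeping track of the non-constant coefficients $\xi^i(x,u)$ when taking the adjoint of $\xi^i D_i$: this is precisely where the total-divergence term $(D_i\xi^i)Q_A$ in the final formula is produced, and it is what makes the canonical form of the action differ from the evolutionary form by more than a simple renaming of operators. Once this adjoint is computed correctly, all other steps are direct substitutions.
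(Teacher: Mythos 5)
Your proposal is correct and follows essentially the same route as the paper's proof: rewrite $Q'(P_\p)_A=\pr\Y_\p(Q)_A-\xi^iD_iQ_A$ via the canonical relation, obtain $R_{P_\p}(G)^A=R_\p(G)^A-\xi^iD_iG^A$ from the two forms of the determining equation, take the adjoint to get $R_{P_\p}^*(Q)_A=R_\p^*(Q)_A+D_i(\xi^iQ_A)$, and cancel the $\xi^iD_iQ_A$ terms. Your identification of the adjoint of $\xi^iD_i$ as the source of the divergence term $(D_i\xi^i)Q_A$ is exactly the computation the paper performs.
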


The proof is a straightforward computation of the terms $Q'(P_\p)_A +R_{P_\p}^*(Q)_A$ 
in the action \eqref{symmaction2.adjsymm}.
One has $Q'(P_\p)_A = \pr\Y_\p(Q)_A - \xi^iD_i Q_A$ 
and $R_{P_\p}(G)^A= R_\p(G)^A - \xi^i D_i G^A$
from identity \eqref{canonical}. 
Hence, $R_{P_\p}^*(Q)_A= R_\p^*(Q)_A +D_i(\xi^i Q_A)$,
and thus after cancellation of terms, 
one obtains the action \eqref{pointsymm.action2.adjsymm}. 

Similar explicit expressions can be obtained for the other two symmetry actions \eqref{symmaction1.adjsymm}, \eqref{symmaction3.adjsymm} 
in Theorem~\ref{thm:symmactions.adjsymm}
in the case of adjoint-symmetries with a first-order linear form 
\begin{equation}\label{firstord.linear.adjsymm}
Q_A = \kappa_A(x,u) + \rho^i_{A\alpha}(x,u) u^\alpha_i . 
\end{equation}
This form is a counterpart of Lie point symmetries 
(more generally, first-order linear symmetries). 
The adjoint-symmetry determining equation \eqref{adjsymm.deteqn.offsoln} 
implies that
\begin{equation}\label{firstord.linear.R_Q}
G'{}^*(Q)_\alpha = \rho^i_{A\alpha} D_i G^A + K_{A\alpha} G^A
\end{equation}
for some functions $K_{A\alpha}$ that are non-singular on $\Esp$,
when every PDE in the system $G^A=0$ has the same differential order, 
and the system has no differential identities. 

This leads to the following result. 

\begin{proposition}\label{prop:symmaction1and3.pointsymm}
For a Lie point symmetry \eqref{pointsymm}, 
the second and third symmetry actions \eqref{symmaction1.adjsymm} and \eqref{symmaction3.adjsymm} 
on a first-order linear adjoint-symmetry \eqref{firstord.linear.adjsymm}--\eqref{firstord.linear.R_Q}
are given by 
\begin{align}
& Q_A \overset{{\X_\p}}{\longrightarrow} 
R_\p^*(Q)_A  
+ u^\alpha_j D_i( 2\xi^{[i} \rho^{j]}_{A\alpha}) 
+D_i(\xi^i\kappa_A+\rho^i_{A\alpha} \eta^\alpha) 
-K_{A\alpha}(\eta^\alpha - \xi^i u^\alpha_i), 
\label{pointsymm.action1.1stordlinear.adjsymm}
\\
& Q_A \overset{{\X_\p}}{\longrightarrow} 
\Y_\p(Q)_A +(D_i\xi^i) Q_A
-u^\alpha_j D_i(2\xi^{[i} \rho^{j]}_{A\alpha}) 
-D_i(\xi^i\kappa_A +\rho^i_{A\alpha} \eta^\alpha)  
+ K_{A\alpha}(\eta^\alpha - \xi^i u^\alpha_i) ,
\label{pointsymm.action3.1stordlinear.adjsymm}
\end{align}
where $R_\p^*$ is the adjoint of $R_\p$. 
\end{proposition}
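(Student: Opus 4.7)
The plan is to evaluate the two actions \eqref{symmaction1.adjsymm} and \eqref{symmaction3.adjsymm} of Theorem~\ref{thm:symmactions.adjsymm} directly by substituting the Lie point symmetry $P_\p^\alpha = \eta^\alpha - \xi^i u^\alpha_i$ and the first-order linear adjoint-symmetry \eqref{firstord.linear.adjsymm}--\eqref{firstord.linear.R_Q}. First I would read off from \eqref{firstord.linear.R_Q} the operator $R_Q = \rho^i_{A\alpha} D_i + K_{A\alpha}$, whose adjoint applied to $P_\p$ is
\begin{equation*}
R_Q^*(P_\p)_A = -D_i(\rho^i_{A\alpha}\eta^\alpha) + D_i(\rho^i_{A\alpha}\xi^j u^\alpha_j) + K_{A\alpha}(\eta^\alpha - \xi^i u^\alpha_i) .
\end{equation*}
This is the common ingredient appearing, with opposite signs, in the two target formulas.

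To obtain \eqref{pointsymm.action1.1stordlinear.adjsymm}, I would combine this expression with the identity $R_{P_\p}^*(Q)_A = R_\p^*(Q)_A + D_i(\xi^i Q_A)$ already established in the proof of Proposition~\ref{prop:symmaction2.pointsymm}, and subtract. The $\kappa_A$ and $\eta^\alpha$ contributions collect immediately into the divergence $D_i(\xi^i\kappa_A + \rho^i_{A\alpha}\eta^\alpha)$, and the $K_{A\alpha}$ term carries over unchanged. The algebraically interesting step is simplifying the quadratic combination $D_i(\xi^i \rho^j_{A\alpha} u^\alpha_j) - D_i(\rho^i_{A\alpha}\xi^j u^\alpha_j)$: expanding by Leibniz, the second-order piece has coefficient $(\xi^i\rho^j_{A\alpha} - \xi^j\rho^i_{A\alpha}) u^\alpha_{ij}$, which vanishes since the coefficient is antisymmetric in $(i,j)$ while $u^\alpha_{ij}$ is symmetric, and the surviving first-order piece is precisely $u^\alpha_j D_i(2\xi^{[i}\rho^{j]}_{A\alpha})$.

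To obtain \eqref{pointsymm.action3.1stordlinear.adjsymm}, I would decompose $Q'(P_\p)_A = \pr\Y_\p(Q)_A - \xi^i D_i Q_A$ via the canonical-vector-field relation \eqref{canonical}, and then add $R_Q^*(P_\p)_A$. To surface the $(D_i\xi^i)Q_A$ factor present in the claimed formula, the transport term is rewritten as $\xi^i D_i Q_A = D_i(\xi^i Q_A) - (D_i\xi^i)Q_A = D_i(\xi^i\kappa_A) + D_i(\xi^i\rho^j_{A\alpha} u^\alpha_j) - (D_i\xi^i)Q_A$. The same Leibniz-plus-antisymmetry cancellation then appears with reversed sign, producing $-u^\alpha_j D_i(2\xi^{[i}\rho^{j]}_{A\alpha})$ together with $-D_i(\xi^i\kappa_A + \rho^i_{A\alpha}\eta^\alpha)$, while the $K_{A\alpha}$ term survives with a plus sign.

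The only real obstacle is mechanical: tracking which index $i$ in each triple product is contracted against $D_i$. The structural observation that makes the proof go through is that the difference of the two natural contractions $\xi^i\rho^j$ and $\rho^i\xi^j$ is exactly the pointwise antisymmetrization $2\xi^{[i}\rho^{j]}$, which forces the genuinely second-order $u^\alpha_{ij}$ terms to drop out by symmetry and leaves an expression first-order linear in the jet variables, consistent with the order of $Q_A$.
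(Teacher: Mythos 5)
Your proposal is correct and follows essentially the same route as the paper: it uses $R_{P_\p}^*(Q)_A = R_\p^*(Q)_A + D_i(\xi^i Q_A)$ from the proof of Proposition~\ref{prop:symmaction2.pointsymm}, reads off $R_Q = \rho^i_{A\alpha}D_i + K_{A\alpha}$ from \eqref{firstord.linear.R_Q} to compute $R_Q^*(P_\p)_A$, and combines the terms, with the cancellation of the $u^\alpha_{ij}$ contributions against the antisymmetric coefficient $2\xi^{[i}\rho^{j]}_{A\alpha}$ made explicit. The only difference is presentational: you spell out the symmetry/antisymmetry argument that the paper leaves implicit in its expansion of $D_i(\rho^i_{A\alpha}P_\p^\alpha)$.
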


The proof is similar to that for the action \eqref{symmaction2.adjsymm}.
One has $R_{P_\p}^*(Q)_A= R_\p^*(Q)_A +D_i(\xi^i Q_A)$,
where 
$D_i(\xi^i Q_A) = D_i(\xi^i\kappa_A) + D_i(\xi^i\rho^j_{A\alpha}) u^\alpha_j 
+ \xi^i\rho^j_{A\alpha} u^\alpha_{ij}$. 
Next, from relation \eqref{firstord.linear.R_Q}, 
one obtains 
$R_Q^*(P_\p)_A = K_{A\alpha} P_\p^\alpha -D_i(\rho^i_{A\alpha} P_\p^\alpha)$
where 
$D_i(\rho^i_{A\alpha} P_\p^\alpha) = 
D_i(\rho^i_{A\alpha} \eta^\alpha)  - D_i(\rho^i_{A\alpha} \xi^j) u^\alpha_j 
- \rho^i_{A\alpha} \xi^j u^\alpha_{ij}$
and 
$K_{A\alpha} P_\p^\alpha = K_{A\alpha}(\eta^\alpha - \xi^i u^\alpha_i)$. 
Then, combining the terms $R_{P_\p}^*(Q)_A -R_Q^*(P_\p)_A$, 
one gets expression \eqref{pointsymm.action1.1stordlinear.adjsymm}. 
Likewise, 
combining the terms $Q'(P_\p)_A+ R_Q^*(P_\p)_A$
yields expression \eqref{pointsymm.action3.1stordlinear.adjsymm}. 

Two basic types of Lie point symmetries which appear in numerous applications 
are translations $\Y_\trans = a^i\partial_{x^i}$
and scalings $\Y_\scal = w_{(i)} x^i \partial_{x^i} + w_{(\alpha)} u^\alpha \partial_{u^\alpha}$.
Here the vector $a^i$ 
represents the direction of the translation; 
the scalars $w_{(\alpha)}, w_{(i)}$ 
represent the scaling weights of $u^\alpha$ and $x^i$. 
The corresponding evolutionary form of these symmetries is given by 
\begin{equation}\label{translation.symm}
P_\trans^\alpha = -a^i u^\alpha_i
\end{equation}
and
\begin{equation}\label{scaling.symm}
P_\scal = w^{(\alpha)} u^\alpha -w^{(i)} x^i u^\alpha_i . 
\end{equation}
Their action on adjoint-symmetries has a very simple form,
which is an immediate consequence of 
Propositions~\ref{prop:symmaction2.pointsymm} and~\ref{prop:symmaction1and3.pointsymm}. 

\begin{corollary}\label{cor:symmaction.translations.scalings}
(i) Suppose $Q_A$ and $G^A$ are translation invariant: 
$\Y_\trans(Q)_A=0$ and $\Y_\trans(G)^A=0$. 
Then the three symmetry actions respectively consist of 
\begin{align}
& Q_A \overset{{\X_\p}}{\longrightarrow} 
0 ,
\label{translation.symmaction2.adjsymm}
\\
& Q_A \overset{{\X_\p}}{\longrightarrow} 
2 u^\alpha_j a^{[i} D_i \rho^{j]}_{A\alpha}
+ a^i D_i \kappa_A 
+a^i u^\alpha_i  K_{A\alpha} , 
\label{translation.symmaction1.1stordlinear.adjsymm}
\\
& Q_A \overset{{\X_\p}}{\longrightarrow} 
-2 u^\alpha_j a^{[i} D_i \rho^{j]}_{A\alpha}
-a^i D_i \kappa_A
- a^i u^\alpha_i K_{A\alpha} . 
\label{translation.symmaction3.1stordlinear.adjsymm}
\end{align}
(ii) Suppose $Q_A$ and $G^A$ are scaling homogeneous: 
$\Y_\scal(Q)_A=w^{(A)} Q_A$ and $\Y_\scal(G)^A=\omega^{(A)} G^A$. 
Then the three symmetry actions respectively consist of 
\begin{align}
&\begin{aligned}
Q_A \overset{{\X_\p}}{\longrightarrow}\ & 
(\omega^{(A)} + w^{(A)} + \smallsum_i w^{(i)}) Q_A ,
\end{aligned}
\label{scaling.symmaction2.adjsymm}
\\
&\begin{aligned}
Q_A \overset{{\X_\p}}{\longrightarrow}\ & 
\omega^{(A)} Q_A  
+ u^\alpha_j w^{(i)} D_i( 2x^{[i} \rho^{j]}_{A\alpha}) 
+ w^{(i)} D_i(x^i\kappa_A)+ w^{(\alpha)} D_i(\rho^i_{A\alpha} u^\alpha) 
\\&\qquad
-K_{A\alpha}(w^{(\alpha)} u^\alpha - w^{(i)} x^i u^\alpha_i), 
\end{aligned}
\label{scaling.symmaction1.1stordlinear.adjsymm}
\\
& \begin{aligned}
Q_A \overset{{\X_\p}}{\longrightarrow}\ & 
(w^{(A)} + \smallsum_i w^{(i)}) Q_A
-u^\alpha_j w^{(i)} D_i(2x^{[i} \rho^{j]}_{A\alpha})
- w^{(i)} D_i(x^i\kappa_A) + w^{(\alpha)} D_i(\rho^i_{A\alpha} u^\alpha)  
\\&\qquad
+ K_{A\alpha}(w^{(\alpha)} u^\alpha - w^{(i)} xi^i u^\alpha_i) . 
\end{aligned}
\label{scaling.symmaction3.1stordlinear.adjsymm}
\end{align}
For both translations and scalings, 
the second and third symmetry actions here are considered only for
first-order linear adjoint-symmetries \eqref{firstord.linear.adjsymm}--\eqref{firstord.linear.R_Q}. 
\end{corollary}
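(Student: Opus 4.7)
The plan is to derive each of the six formulas by substituting the translation and scaling data into the three general formulas of Propositions~\ref{prop:symmaction2.pointsymm} and~\ref{prop:symmaction1and3.pointsymm}, then using the invariance/homogeneity hypotheses to evaluate $R_\p$, $R_\p^*$, and the Lie derivative terms.

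For part (i), I would set $\xi^i = a^i$ (constants) and $\eta^\alpha = 0$, giving $P_\trans^\alpha = -a^i u^\alpha_i$ and $\pr\Y_\trans = a^i\partial_{x^i}$, since the $u$-derivative coefficients of the prolongation $\phi^\alpha_J = D_J(-a^i u^\alpha_i) + a^i u^\alpha_{iJ}$ vanish identically for a constant translation. The hypothesis $\Y_\trans(G)^A = 0$ then yields $R_\p(G)^A = \pr\Y_\trans(G)^A = 0$ identically, so one may take $R_\p = 0$ and hence $R_\p^* = 0$; the hypothesis $\Y_\trans(Q)_A = 0$ kills the Lie derivative term; and $D_i\xi^i = 0$ because $a^i$ is constant. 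Substituting into \eqref{pointsymm.action2.adjsymm} immediately gives \eqref{translation.symmaction2.adjsymm}. Substitution into \eqref{pointsymm.action1.1stordlinear.adjsymm} and \eqref{pointsymm.action3.1stordlinear.adjsymm} produces \eqref{translation.symmaction1.1stordlinear.adjsymm} and \eqref{translation.symmaction3.1stordlinear.adjsymm} after pulling the constants $a^i$ outside total derivatives as in $D_i(a^i\kappa_A) = a^iD_i\kappa_A$ and $D_i(2a^{[i}\rho^{j]}_{A\alpha}) = 2a^{[i}D_i\rho^{j]}_{A\alpha}$; the two resulting expressions are negatives of each other, consistent with the vanishing of the first action and the relation that \eqref{symmaction3.adjsymm} equals the difference of \eqref{symmaction2.adjsymm} and \eqref{symmaction1.adjsymm}.

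For part (ii), I would set $\xi^i = w^{(i)} x^i$ (no sum on $i$) and $\eta^\alpha = w^{(\alpha)} u^\alpha$ (no sum on $\alpha$). The scaling homogeneity $\pr\Y_\scal(G)^A = \omega^{(A)} G^A$ identifies $R_\p$ as the zeroth-order diagonal multiplication operator $(R_\p)^A_B = \omega^{(A)}\delta^A_B$; its adjoint then acts on $Q_A$ as $R_\p^*(Q)_A = \omega^{(A)} Q_A$, since a multiplication operator is self-adjoint and the diagonal matrix transposes to itself. Together with $\pr\Y_\scal(Q)_A = w^{(A)} Q_A$ and $D_i\xi^i = \sum_i w^{(i)}$, substitution into \eqref{pointsymm.action2.adjsymm} produces the combined weight $(\omega^{(A)} + w^{(A)} + \sum_i w^{(i)})Q_A$ of \eqref{scaling.symmaction2.adjsymm}. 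Substitution into \eqref{pointsymm.action1.1stordlinear.adjsymm} and \eqref{pointsymm.action3.1stordlinear.adjsymm} then produces \eqref{scaling.symmaction1.1stordlinear.adjsymm} and \eqref{scaling.symmaction3.1stordlinear.adjsymm} after expanding $D_i(\xi^i\kappa_A) = w^{(i)} D_i(x^i\kappa_A)$, $D_i(\rho^i_{A\alpha}\eta^\alpha) = w^{(\alpha)} D_i(\rho^i_{A\alpha} u^\alpha)$, and $D_i(2\xi^{[i}\rho^{j]}_{A\alpha}) = w^{(i)} D_i(2x^{[i}\rho^{j]}_{A\alpha})$, pulling the constant weights out of each total derivative while leaving $x^i$ and $u^\alpha$ inside.

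No genuinely deep obstacle arises. The only judgement calls are the choice $R_\p = 0$ for translations (justified because any residual freedom in $R_\p$ vanishes on $\Esp$ by Lemma~\ref{lem:no.diff.identities} and does not affect the adjoint-symmetry action) and the recognition that the diagonal matrix-valued multiplication operator $R_\p$ arising from scaling homogeneity carries no transpose complication under adjointing. Everything else is elementary bookkeeping with constants, summation indices, and total derivatives.
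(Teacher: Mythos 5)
Your proposal is correct and follows essentially the same route as the paper, which obtains the corollary as an immediate consequence of substituting the translation data ($\xi^i=a^i$, $\eta^\alpha=0$, $R_\p=0$) and the scaling data ($\xi^i=w^{(i)}x^i$, $\eta^\alpha=w^{(\alpha)}u^\alpha$, $R_\p^*(Q)_A=\omega^{(A)}Q_A$) into Propositions~\ref{prop:symmaction2.pointsymm} and~\ref{prop:symmaction1and3.pointsymm}. One minor remark: the substitution into \eqref{pointsymm.action3.1stordlinear.adjsymm} actually yields $-w^{(\alpha)}D_i(\rho^i_{A\alpha}u^\alpha)$ in the third scaling action, so the $+$ sign on that term printed in \eqref{scaling.symmaction3.1stordlinear.adjsymm} appears to be a typo in the statement rather than a gap in your argument.
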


The first symmetry actions \eqref{translation.symmaction2.adjsymm} and \eqref{scaling.symmaction2.adjsymm}
are a generalization of the same actions derived on multipliers 
in \Ref{Anc2017,Anc2016}. 
The other results are new.

\section{Generalized pre-symplectic and pre-Hamiltonian structures\\ (Noether operators) from symmetry actions}\label{sec:noetherops}

It will be useful to begin with a general discussion. 
Let
\begin{align}
\symmsp_G:= & \{ P^\alpha(x,u^{(k)}), k\geq 0,\text{ s.t. } G'(P)^A|_\Esp=0\}
\label{symmsp}
\\
\adjsymmsp_G:= & \{ Q_A(x,u^{(k)}), k\geq 0,\text{ s.t. } G'{}^*(Q)_\alpha|_\Esp=0\}
\label{adjsymmsp}
\end{align}
denote the linear spaces of symmetries and adjoint-symmetries
for a given PDE system $G^A(x,u^{(N)})=0$. 
Also, let 
\begin{equation}
\multrsp_G:= \{ \Lambda_A(x,u^{(k)}), k\geq 0,\text{ s.t. } G'{}^*(\Lambda)_\alpha + \Lambda'{}^*(G)_\alpha =0 \}
\end{equation}
denote the linear space of multipliers,
which is a subspace of the linear space of adjoint-symmetries \eqref{adjsymmsp}.

Suppose that the PDE system possesses the extra structure
\begin{equation}\label{J.op}
\Dop G' = G'{}^* \Jop 
\end{equation}
where $\Dop$ and $\Jop$ are linear differential operators in total derivatives 
whose coefficients are non-singular on $\Esp$. 
Then, for any symmetry $P^\alpha$, 
$G'{}^*(\Jop(P))|_\Esp=\Dop G'(P)|_\Esp =0$
shows that 
\begin{equation}\label{J.PtoQ}
Q_A:= \Jop(P)_A
\end{equation}
is an adjoint-symmetry. 
If $\Jop(P)_A$ is a multiplier, 
then $\Jop$ represents a \emph{pre-symplectic operator} for the PDE system,
in the sense that it is a mapping from $\symmsp_G$ into $\multrsp_G$,
analogous to a symplectic operator in the case of Hamiltonian systems. 
When $\Jop(P)_A$ is an adjoint-symmetry but not a multiplier, 
it will be called a \emph{Noether operator} \cite{FucFok}. 

Similarly, 
suppose that a PDE system \eqref{pde.sys} possesses the extra structure
\begin{equation}\label{H.op}
\Dop G'{}^*  = G'\Hop
\end{equation}
where $\Dop$ and $\Hop$ are linear differential operators in total derivatives 
whose coefficients are non-singular on $\Esp$. 
For any adjoint-symmetry $Q_A$, 
$G'(\Hop(Q))|_\Esp=\Dop G'{}^*(Q)|_\Esp =0$
whereby 
\begin{equation}
P^\alpha :=\Hop(Q)^\alpha
\end{equation}
is a symmetry. 
Since $\Hop$ is a mapping from $\adjsymmsp_G\supseteq\multrsp_G$ into $\symmsp_G$,
it represents a \emph{pre-Hamiltonian operator} (or \emph{inverse Noether operator})  
for the PDE system, 
analogous to a Hamiltonian operator in the case of Hamiltonian systems \cite{FucFok}.

When the inverses of $\Jop$ and $\Hop$ are well defined, 
then $\Jop^{-1} :=\Hop$ defines a pre-Hamiltonian (inverse Noether) operator, 
and $\Hop^{-1} :=\Jop$ defines a Noether operator. 

These definitions can be generalized to allow 
$\Jop$, $\Hop$, and $\Dop$ to be linear operators 
in partial derivatives with respect to jet space variables in addition to total derivatives. 
In this case, $\Jop$ and $\Hop$ will be respectively called 
a \emph{generalized pre-symplectic (Noether) structure} 
and a \emph{generalized pre-Hamiltonian (inverse Noether) structure}. 

\begin{remark}\label{rem:symplectic.2form}
For $\Hop$ to be a Hamiltonian structure, 
there must exist a non-degenerate integral pairing $\PQpair{Q}{P}$ (modulo total derivatives) 
between symmetries and adjoint-symmetries 
such that $\{Q_1,Q_2\}_\Hop := \PQpair{Q_1}{\Hop(Q_2)}$ 
is a Poisson bracket, 
namely it must be skew-symmetric and satisfy the Jacobi identity. 
Similarly, for $\Jop$ to be a symplectic structure, 
the analogous bilinear-form $\w_\Jop(P_1,P_2) := \PQpair{\Jop(P_1)}{P_2}$
must be skew-symmetric and closed. 
\end{remark}

Now, it will be shown how an action of symmetries on adjoint-symmetries
can be used itself to define a generalized pre-symplectic (Noether) structure 
and, when its inverse exists, a generalized pre-Hamiltonian (inverse Noether) structure.

Consider, in general, any symmetry action
\begin{equation}\label{S_P.op}
Q_A\overset{{\X_P}}{\longrightarrow} S_P(Q)_A
\end{equation}
on $\adjsymmsp_G$, 
where $S_P$ is a linear operator which is also linear in $P^\alpha$. 
Note that $S_P$ may be constructed from both
total derivatives $D_I$ and partial derivatives $\partial_{u^\alpha_I}$.
The action $S_P(Q)_A$ also defines a dual linear operator
\begin{equation}\label{S_Q.op}
S_Q(P)_A:= S_P(Q)_A
\end{equation}  
from $\symmsp_G$ into $\adjsymmsp_G$,
which constitutes a generalized pre-symplectic (Noether) structure. 
For a fixed adjoint-symmetry $Q_A$, 
$S_Q$ will have an inverse $S_Q^{-1}$
which is defined modulo its kernel, 
$\ker(S_Q)\subset\symmsp_G$, 
and which acts on the linear subspace given by its range, 
$S_Q(\symmsp_G) \subseteq\adjsymmsp_G$.
This inverse $S_Q^{-1}$ constitutes a generalized pre-Hamiltonian (inverse Noether) structure 
when $S_Q(\symmsp_G) = \adjsymmsp_G$, 
and otherwise it is a restricted type of that structure. 

From the three symmetry actions in Theorem~\ref{thm:symmactions.adjsymm},
the following structures are obtained. 

\begin{theorem}\label{thm:symmaction.structures}
For a general PDE system \eqref{pde.sys}, 
let $Q_A$ be any fixed adjoint-symmetry. 
Then, 
a generalized Noether structure is given by 
the first symmetry action \eqref{symmaction2.adjsymm}, 
\begin{equation}\label{Jop.symmaction2}
\Jop_1(P)_A := S_{1\,Q}(P)_A = Q'(P)_A + R_{P}^*(Q)_A ;
\end{equation}
a generalized pre-symplectic structure is given by 
and the second symmetry action \eqref{symmaction1.adjsymm}, 
\begin{equation}\label{Jop.symmaction1}
\Jop_2(P)_A := S_{2\,Q}(P)_A = R_P^*(Q)_A - R_Q^*(P)_A ;
\end{equation}
a Noether operator is given by the third symmetry action \eqref{symmaction3.adjsymm}, 
\begin{equation}\label{Jop.symmaction3}
\Jop_Q := S_{3\,Q} = Q' + R_Q^* . 
\end{equation}
The formal inverse of each structure \eqref{Jop.symmaction2} and \eqref{Jop.symmaction1}
gives a generalized pre-Hamiltonian (inverse Noether) structure,
while the formal inverse of the operator \eqref{Jop.symmaction3} 
gives a pre-Hamiltonian (inverse Noether) operator. 
\end{theorem}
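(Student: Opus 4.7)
The plan is to read off each of the three assertions directly from Theorem~\ref{thm:symmactions.adjsymm} by dualizing in $P$ and $Q$, then to distinguish the ``generalized'' structures from the ungeneralized one by inspecting which jet-space derivatives appear. Each action $S_i(P,Q)$ listed in Theorem~\ref{thm:symmactions.adjsymm} is bilinear in the pair $(P,Q)$. Freezing an adjoint-symmetry $Q_A\in\adjsymmsp_G$ and letting $P$ range over $\symmsp_G$ yields the operators $\Jop_1,\Jop_2,\Jop_3$ of the theorem statement. By Theorem~\ref{thm:symmactions.adjsymm}, $S_1(P,Q)$ and $S_3(P,Q)$ lie in $\adjsymmsp_G$ for all $(P,Q)\in\symmsp_G\times\adjsymmsp_G$, and $S_2(P,Q)$ lies in $\multrsp_G$; consequently $\Jop_1,\Jop_3:\symmsp_G\to\adjsymmsp_G$ and $\Jop_2:\symmsp_G\to\multrsp_G$, which is the basic mapping content of the three claims.

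Second, I would justify the qualifier ``generalized'' for $\Jop_1$ and $\Jop_2$ and its absence for $\Jop_3$. The term $R_P^*(Q)$ appearing in $\Jop_1$ and $\Jop_2$ depends on $P$ through the coefficients of the operator $R_P$ defined off $\Esp$ by $G'(P)=R_P(G)$; generically these coefficients depend on $P$ via partial derivatives $\partial_{u^\alpha_I}$ with respect to jet-space variables, not only via the total derivatives $D_I P$. So, viewed as operators in the argument $P$, both $\Jop_1$ and $\Jop_2$ carry jet-space partial derivatives in addition to total derivatives, which is precisely the definition of a \emph{generalized} pre-symplectic / Noether structure given in Section~\ref{sec:noetherops}. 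In contrast, $\Jop_3 = Q' + R_Q^*$ has $Q$ frozen, so $R_Q^*$ is a linear differential operator in total derivatives with fixed coefficients and $Q'$ is the Frechet derivative, itself a linear differential operator in total derivatives; no jet-partial derivatives enter, and $\Jop_3$ is therefore an ungeneralized Noether operator.

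Third, for the pre-Hamiltonian (inverse Noether) claim I would invoke the general observation recorded just before the theorem: if a linear map $\Jop:\symmsp_G\to\adjsymmsp_G$ admits a formal inverse, that inverse is defined modulo $\ker(\Jop)$ on the range $\Jop(\symmsp_G)\subseteq\adjsymmsp_G$ and has codomain contained in $\symmsp_G$, which is exactly the defining property of a (generalized) pre-Hamiltonian / inverse-Noether structure. Applied to $\Jop_1$ and $\Jop_2$ this yields generalized pre-Hamiltonian (inverse Noether) structures, while applied to $\Jop_3$, whose defining expression involves only total derivatives, it yields an ungeneralized pre-Hamiltonian operator. The main obstacle is not computational but taxonomic: one must verify that swapping fixed and varying arguments in the bilinear actions preserves the target-space conclusions of Theorem~\ref{thm:symmactions.adjsymm}, and then track carefully which jet-space differentiations survive when $Q$ is frozen but $P$ varies, so that $\Jop_3$ is correctly identified as a genuine Noether operator rather than only a generalized one.
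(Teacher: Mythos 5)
Your first two steps do track the paper: the forward assertions about $\Jop_1,\Jop_2,\Jop_3$ are indeed just Theorem~\ref{thm:symmactions.adjsymm} read through the dual operator $S_Q(P):=S_P(Q)$, and the ``generalized structure'' versus ``operator'' distinction is exactly the jet-space--partial-derivative criterion of section~\ref{sec:noetherops} (spelled out again in section~\ref{sec:computational}). The genuine gap is in your third step, which is the only part of the theorem the paper actually proves. Observing that the inverse of the abstract linear map $S_Q\colon\symmsp_G\to\adjsymmsp_G$, restricted to its range and taken modulo $\ker(S_Q)$, lands in $\symmsp_G$ is a tautology and is not the content of the claim. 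For $\Jop_3=Q'+R_Q^*$ the claim concerns the \emph{formal operator inverse}: $\Jop_3$ is a linear differential operator in total derivatives that can be applied to arbitrary differential functions $P^\alpha$, and one must show that whenever $\Jop_3(P)_A$ is an adjoint-symmetry, $P^\alpha$ is necessarily a symmetry. Only then does $\Jop_3^{-1}$ map adjoint-symmetries (not merely elements of $S_{3\,Q}(\symmsp_G)$ with preimages chosen by hand inside $\symmsp_G$) into symmetries, which is the defining property of a pre-Hamiltonian (inverse Noether) operator in the sense of \eqref{H.op}; it also guarantees that the ambiguity of the formal inverse --- the operator kernel of $\Jop_3$ --- consists of symmetries. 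Your restriction-of-range argument conflates the set-theoretic inverse of $S_{3\,Q}$ with the formal inverse of the differential operator and therefore does not establish this.

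The paper closes this gap with a computation absent from your proposal: for an arbitrary $P^\alpha$, applying $\pr\X_P$ to the off-shell identity $G'{}^*(Q)_\alpha=R_Q(G)_\alpha$ and $E_{u^\alpha}$ to the divergence identity \eqref{symm.adjsymm.ibp}, then using \eqref{X.adjFrechet.id} together with $(R_P(G))'{}^*|_\Esp=G'{}^*R_P^*|_\Esp$, yields on $\Esp$ the intertwining relation $G'{}^*(Q'(P)+R_Q^*(P))_\alpha|_\Esp=(Q'{}^*+R_Q)(H)_\alpha|_\Esp$ with $H^A=G'(P)^A-R_P(G)^A$, i.e.\ $G'{}^*(\Jop_3(P))=\Jop_3^*(G'(P))$ on $\Esp$. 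Assuming $\Jop_3^*=Q'{}^*+R_Q$ is formally invertible, $\Jop_3(P)_A\in\adjsymmsp_G$ forces $H^A|_\Esp=0$, i.e.\ $P^\alpha\in\symmsp_G$, so the formal inverse of \eqref{Jop.symmaction3} maps adjoint-symmetries into symmetries; the paper indicates that the analogous statements for $\Jop_1^{-1}$ and $\Jop_2^{-1}$ follow by rerunning the derivations behind Theorem~\ref{thm:symmactions.adjsymm} in the same way. Some version of this argument is needed; the taxonomic bookkeeping of which derivatives survive when $Q$ is frozen cannot substitute for it.
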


The statement about the inverse of $\Jop_Q$ is proven as follows, 
relying on a direct derivation of the symmetry action 
$S_{3\,P}(Q)_A = Q'(P)_A + R_Q^*(P)_A$. 
Similar proofs hold for the inverse of $\Jop_1$ and $\Jop_2$, 
using the derivations that were given in establishing Theorem~\ref{thm:symmactions.adjsymm}. 

For any set of differential functions $P^\alpha$, one has
$\pr\X_P(G'{}^*(Q)_\alpha - R_Q(G)_\alpha)=0$
from the determining equation \eqref{adjsymm.deteqn.offsoln},
where $Q_A$ is any fixed adjoint-symmetry. 
One also has 
$E_{u^\alpha}(P^\beta G'{}^*(Q)_\beta - Q_A G'(P)^A) = 0$
from the adjoint relation \eqref{symm.adjsymm.ibp}. 
These two expressions can be simplified, on $\Esp$, by the following steps
with $H^A:=G'(P)^A - R_P(G)^A$: 
\begin{equation}\label{euler.term}
\begin{aligned}
E_{u^\alpha}(P^\beta G'{}^*(Q)_\beta - Q_A G'(P)^A)|_\Esp
& = E_{u^\alpha}( P^\beta R_Q(G) _\beta -Q_A R_P(G)^A )|_\Esp  -E_{u^\alpha}(Q_A H^A)|_\Esp
\\
& = E_{u^\alpha}(G^A(R_Q^*(P) -R_P^*(Q))_A)|_\Esp  -E_{u^\alpha}(Q_A H^A)|_\Esp
\\
& = G'{}^*(R_Q^*(P) -R_P^*(Q))_\alpha|_\Esp  
-Q'{}^*(H)_\alpha|_\Esp -H'{}^*(Q)_\alpha|_\Esp  ,
\end{aligned}
\end{equation}
which has used the product rule for the Euler operator and integration by parts;
and 
\begin{equation}\label{XP.term}
\begin{aligned}
(\pr\X_P(G'{}^*(Q) - R_Q(G))_\alpha)|_\Esp
& = G'{}^*(Q'(P))_\alpha|_\Esp + (G'(P))'{}^*(Q)_\alpha|_\Esp 
- R_Q(G'(P))_\alpha)|_\Esp 
\\
& (\pr\X_F f'{}^*) = (\pr\X_F f)'{}^* - F'{}^* f'{}^* . 
\\
& = G'{}^*(Q'(P))_\alpha|_\Esp + (R_P(G))'{}^*(Q)_\alpha|_\Esp 
+H'{}^*(Q)_\alpha|_\Esp - R_Q(H)_\alpha)|_\Esp 
\\
& = G'{}^*(Q'(P) +R_P^*(Q))_\alpha|_\Esp 
+H'{}^*(Q)_\alpha|_\Esp - R_Q(H)_\alpha)|_\Esp ,
\end{aligned}
\end{equation}
which has used the identity \eqref{X.adjFrechet.id}, 
combined with the adjoint-symmetry determining equation \eqref{adjsymm.deteqn},
in addition to $(R_P(G))'{}^*|_\Esp = (R_P G')^*|_\Esp = G'{}^* R_P^*|_\Esp$. 
Then, combining the two expressions \eqref{euler.term} and \eqref{XP.term}, 
both of which vanish, 
one obtains
\begin{equation}
\begin{aligned}
0 & =
E_{u^\alpha}(P^\beta G'{}^*(Q)_\beta - Q_A G'(P)^A)|_\Esp
+ (\pr\X_P(G'{}^*(Q) - R_Q(G))_\alpha)|_\Esp
\\
& = G'{}^*(Q'(P) +R_Q^*(P))_\alpha|_\Esp 
-(Q'{}^*(H) + R_Q(H))_\alpha)_\Esp .
\end{aligned}
\end{equation}
This equation shows that 
$G'{}^*(S_{3\,Q}(P))|_\Esp =0$ iff $(Q'{}^*+R_Q)H_\Esp=0$. 
Assuming that $Q'{}^*+R_Q$ is formally invertible, 
one can conclude that $H^A|_\Esp =0$ whenever $S_{3\,Q}(P)_A$ is an adjoint-symmetry,
thereby showing that $P^\alpha$ is a symmetry. 
Hence, $S_{3\,Q}^{-1}=\Jop_Q^{-1}$ maps adjoint-symmetries into symmetries. 
This completes the proof. 

It is worth noting that this proof gives the relation 
$G'{}^*(\Jop_Q(P)) = \Jop_Q^*(G'(P))$ on $\Esp$,
which is stronger than the structure \eqref{J.op}. 

Finally, 
the Noether operator \eqref{Jop.symmaction3} can be combined with 
the Frechet derivative identity \eqref{symm.adjsymm.ibp} to construct a bilinear form 
as follows. 

\begin{proposition}\label{prop:noether.pairing}
Let $Q_A$ be any fixed adjoint-symmetry such that the Noether operator \eqref{Jop.symmaction3} is non-trivial,
and let $\Psi^i(P,Q)$ be the components of the vector 
in the Frechet derivative identity \eqref{symm.adjsymm.ibp}. 
A bilinear form on the linear space of symmetries $P^\alpha\partial_{u^\alpha}$ is 
defined by 
\begin{equation}\label{noether.symm.pairing}
\w_Q(P_1,P_2) = \int_{\Omega} \Psi^i(P_1,J_Q(P_2)) \hat n_i\,d^{n-1}V
\end{equation}
where $\Omega$ is a domain of codimension $1$ in $\Rnum^n$,
with $\hat n_i$ denoting a unit normal 1-form of $\Omega$,
and with $d^{n-1}V$ denoting the volume element on $\Omega$. 
\end{proposition}

Further developments related to the structures 
in Theorem~\ref{thm:symmaction.structures} and Proposition~\ref{prop:noether.pairing}
will given for evolution PDEs in section~\ref{sec:evolPDEs}. 

\underline{Running example}: 
The p-gKdV equation \eqref{p-gkdv} has 
$\symmsp_\text{p-gKdV}= \spn(P_1,P_2,P_3,P_4)$ for Lie point symmetries 
and $\adjsymmsp_\text{p-gKdV} =\spn(Q_1,Q_2,Q_3)$ for low-order adjoint-symmetries. 
The dual linear maps given by the three symmetry actions \eqref{symmaction2.adjsymm}, \eqref{symmaction1.adjsymm}, \eqref{symmaction3.adjsymm} 
using $Q=c_1Q_1+c_2Q_2+c_3Q_3$ 
are shown in Table~\ref{table:p-gkdv.dualsymmactions},
where $c_1,c_2,c_3$ are arbitrary constants. 
The corresponding structures \eqref{Jop.symmaction2}, \eqref{Jop.symmaction1}, \eqref{Jop.symmaction3} 
are defined by 
$\Jop_i(a_1P_1 +a_2P_2 +a_3P_3+a_4P_4)= a_1S_{i\,Q}(P_1) + a_2S_{i\,Q}(P_2)+a_3S_{i\,Q}(P_3)+a_4S_{Q}(P_4)$,
$i=1,2,3$,
where $a_1,a_2,a_3,a_4$ are arbitrary constants. 
In particular, the explicit form of the Noether operator is 
\begin{equation}\label{p-gkdv.noetherop}
\Jop_Q = Q_3' + R_{Q_3}^* = 2(2-p)D_x ,
\end{equation}
since 
$Q_3'=(2u_x + 3p t u_{t,x} + p x u_{xx})'
=2D_x + 3p t D_tD_x + p x D_x^2$
and 
$R_{Q_3}^* = (-p x D_x^2 -3p t D_tD_x - (3p+2)D_x)^* 
= 2(1-p)D_x -p x D_x^2 -3p t D_tD_x$
from equations \eqref{p-gkdv.Q} and \eqref{p-gkdv.RQ}. 

\begin{table}[h!]
\caption{dual of p-gKdV symmetry actions with $Q=c_1Q_1+c_2Q_2+c_3Q_3$} 
\label{table:p-gkdv.dualsymmactions}
\begin{tabular}{l||c|c|c}
& $S_{1\,Q}$
& $S_{2\,Q}$
& $S_{3\,Q}$
\\
\hline 
\hline
$P_1$
& $0$
& $0$
& $0$
\\
\hline
$P_2$
& $p c_3 Q_1$
& $(4-p) c_3 Q_1$
& $2(p-2) c_3 Q_1$
\\
\hline
$P_3$
& $3p c_3 Q_2$
& $(p+4) c_3 Q_2$
& $2(p-2) c_3 Q_2$
\\
\hline
$P_4$
& $(p-4)c_1 Q_1 -(p+4)c_2 Q_2 + 2(p-2)c_3 Q_3$
& $(p-4)c_1 Q_1 -(p+4)c_2 Q_2$
& $2(p-2)c_3 Q_3$
\\
\end{tabular}
\end{table}

\section{Bracket structures for adjoint-symmetries}\label{sec:adjsymmbracket}

The commutator \eqref{symm.commutator} of symmetries
defines a Lie bracket on the linear space of symmetries \eqref{symmsp}. 
An interesting fundamental question is whether there exists
any bilinear bracket on the linear space of adjoint-symmetries \eqref{adjsymmsp}. 
Such a structure would allow the possibility for
a pair of known adjoint-symmetries to generate a new adjoint-symmetry, 
just as a pair of known symmetries can generate a new symmetry. 
Additionally, if a bilinear bracket has a projection into the linear space of multipliers, 
then this would provide a generalization of a Poisson bracket. 

Every action of symmetries on adjoint-symmetries will now be shown to give rise to 
two different bilinear bracket structures on adjoint-symmetries.
The first bracket is a Lie bracket constructed 
from the pull-back of the symmetry commutator \eqref{symm.commutator} 
under an inverse of the symmetry action on adjoint-symmetries. 
This yields a homomorphism from the Lie algebra of symmetries into a Lie algebra of adjoint-symmetries. 
The second bracket does not involve the symmetry commutator \eqref{symm.commutator} 
and instead uses the symmetry action composed with an inverse action to construct a recursion operator on adjoint-symmetries. 

These constructions will be carried out in terms of the dual linear operator \eqref{S_Q.op}
associated to a general symmetry action \eqref{S_P.op}. 
Afterward, the properties of the resulting brackets will be discussed
for each of the three actions \eqref{symmaction2.adjsymm}, 
\eqref{symmaction3.adjsymm}.

\subsection{Adjoint-symmetry commutator brackets from symmetry actions}

The construction of the first bracket goes as follows. 

\begin{proposition}\label{prop:adjsymm.bracket}
Fix an adjoint-symmetry $Q_A$ in $\adjsymmsp_G$,
and let $S_Q$ be the dual linear operator \eqref{S_Q.op}
associated to a symmetry action $S_P$ on $\adjsymmsp_G$.
If the kernel of $S_Q$ is an ideal in $\symmsp_G$,
then 
\begin{equation}\label{adjsymm.bracket}
{}^Q[Q_1,Q_2]_A := S_Q([S_Q^{-1}Q_1,S_Q^{-1}Q_2])_A
\end{equation}
defines a bilinear bracket on the linear space $S_Q(\symmsp_G)\subseteq\adjsymmsp_G$. 
This bracket can be expressed as
\begin{equation}\label{adjsymm.bracket.explicit}
{}^Q[Q_1,Q_2]_A 
 = Q_2'(S_Q^{-1}Q_1) - Q_1'(S_Q^{-1}Q_2) - S_Q'(S_Q^{-1}Q_2)(S_Q^{-1}Q_1) + S_Q'(S_Q^{-1}Q_1)(S_Q^{-1}Q_2)
\end{equation}
where $S_Q'$ denotes the Frechet derivative of $S_Q$. 
\end{proposition}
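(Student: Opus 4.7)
The plan is to establish three things in turn: well-definedness of the bracket ${}^Q[\cdot,\cdot]$ on the image subspace $S_Q(\symmsp_G) \subseteq \adjsymmsp_G$, bilinearity, and the explicit expansion \eqref{adjsymm.bracket.explicit}.

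For well-definedness, fix $Q_1,Q_2 \in S_Q(\symmsp_G)$ and choose any preimages $P_i \in \symmsp_G$ with $S_Q(P_i)=Q_i$. Since $S_Q^{-1}$ is defined only modulo $\ker(S_Q)$, I must check that the value of $S_Q([P_1,P_2])$ is unaffected by replacements $P_i \mapsto P_i + K_i$ with $K_i \in \ker(S_Q)$. Bilinear expansion of the commutator gives
\begin{equation*}
[P_1+K_1,\,P_2+K_2] \;=\; [P_1,P_2] + [K_1,P_2] + [P_1,K_2] + [K_1,K_2] .
\end{equation*}
The ideal hypothesis $[\ker(S_Q),\symmsp_G] \subseteq \ker(S_Q)$ places the last three terms inside $\ker(S_Q)$, where $S_Q$ annihilates them. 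I expect this to be the main obstacle, since without the ideal condition the bracket is genuinely ambiguous on $S_Q(\symmsp_G)$. Bilinearity then follows at once from bilinearity of the commutator on $\symmsp_G$ together with linearity of $S_Q$ and of the induced map $S_Q^{-1}: S_Q(\symmsp_G) \to \symmsp_G/\ker(S_Q)$. Membership of the output in $S_Q(\symmsp_G)$ is automatic, because $[P_1,P_2]\in\symmsp_G$ (the symmetries form a Lie algebra under the commutator) and the bracket is defined as the image of this symmetry under $S_Q$.

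For the explicit formula, set $P_i = S_Q^{-1}Q_i$ and use $[P_1,P_2] = P_2'(P_1) - P_1'(P_2)$ from \eqref{symm.commutator}, which reduces the computation to evaluating $S_Q(P_j'(P_i))$. Since $S_Q$ is a linear differential operator whose coefficients are functions of jet variables (depending on $u^{(k)}$ through the fixed $Q$), applying the prolonged vector field $\pr\X_{P_i}$ to $Q_j = S_Q(P_j)$ via the product rule yields
\begin{equation*}
Q_j'(P_i) \;=\; S_Q'(P_i)(P_j) + S_Q\bigl(P_j'(P_i)\bigr),
\end{equation*}
where $S_Q'(P_i)$ denotes the operator whose coefficients are the Frechet derivatives (in the direction $P_i$) of those of $S_Q$. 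Solving this identity for $S_Q(P_j'(P_i))$, substituting into $S_Q([P_1,P_2]) = S_Q(P_2'(P_1)) - S_Q(P_1'(P_2))$, and then restoring $P_i = S_Q^{-1}Q_i$ produces \eqref{adjsymm.bracket.explicit}. This last piece is a routine calculation once the operator product rule is in hand, so the real content of the proposition lies in the ideal verification above.
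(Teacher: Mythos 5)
Your argument is correct and is essentially the paper's: the ideal hypothesis on $\ker(S_Q)$ is used in exactly the same way to make $S_Q([S_Q^{-1}Q_1,S_Q^{-1}Q_2])$ independent of the choice of pre-images modulo $\ker(S_Q)$, and the explicit formula follows from the product rule $Q_j'(P_i)=(\pr\X_{P_i}S_Q)(P_j)+S_Q(P_j'(P_i))$ applied to $S_Q([P_1,P_2])$ with $P_i=S_Q^{-1}Q_i$ and the commutator \eqref{symm.commutator}. One small caveat: under your stated convention that the argument of $S_Q'(\cdot)$ is the direction of differentiation, the expression you derive is $Q_2'(P_1)-Q_1'(P_2)-S_Q'(P_1)(P_2)+S_Q'(P_2)(P_1)$, which coincides with \eqref{adjsymm.bracket.explicit} only when one parses $S_Q'(A)(B)$ there as the derivative of the coefficients of $S_Q$ in the direction $B$ applied to $A$ --- a purely notational point, not a gap in the proof.
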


Any one of the symmetry actions \eqref{symmaction2.adjsymm}, \eqref{symmaction1.adjsymm}, \eqref{symmaction3.adjsymm}
can be used to write down formally a corresponding bracket \eqref{adjsymm.bracket}.
However, $S_Q^{-1}$ is well-defined only modulo $\ker(S_Q)$,
and so in the absence of any extra structure to fix this arbitrariness, 
the condition that $\ker(S_Q)$ is an ideal is necessary and sufficient 
for the bracket to be well defined 
(namely, invariant under $S_Q^{-1}\rightarrow S_Q^{-1}+\ker(S_Q)$). 
This condition will select a set of adjoint-symmetries $Q_A$ 
that can be used in constructing the bracket. 
When $\ker(S_Q)$ is an ideal, so is $\ker(S_{\lambda Q}) = \lambda \ker(S_{Q})$, 
for any constant $\lambda$. 
Hence, the set of adjoint-symmetries $Q_A$ for which $\ker(S_Q)$ is an ideal
comprises a projective subspace in $\adjsymmsp_G$. 
In the case when the dimension of this subspace is larger than $1$, 
it is natural to select $Q_A$ such that $\ran(S_Q)$ is maximal in $\adjsymmsp_G$. 

For the linear space $\ker(S_Q)\subseteq\symmsp_G$ to be an ideal, 
it must be a subalgebra that is preserved by the action of $\symmsp_G$ 
given by the Lie bracket \eqref{symm.commutator}. 
The subalgebra condition 
\begin{equation}\label{kerSQ.condition}
[\ker(S_Q),\ker(S_Q)] \subseteq \ker(S_Q)
\end{equation}
states that $S_Q([P_1,P_2])=0$ is required to hold 
for all pairs of symmetries
$\X_{P_1}=P_1^\alpha\partial_{u^\alpha}$ and $\X_{P_2}=P_2^\alpha\partial_{u^\alpha}$
such that 
$S_Q(P_1)_A= S_{P_1}(Q)_A = 0$ and $S_Q(P_2)_A= S_{P_2}(Q)_A = 0$.
The question of whether this condition \eqref{kerSQ.condition}
is satisfied by each of the three symmetry actions 
will now be addressed. 

For the first symmetry action \eqref{symmaction2.adjsymm}, 
consider 
\begin{equation}\label{2ndaction.ker.eqns}
0= S_{1\,Q}(P_1)_A = Q'(P_1)_A + R_{P_1}^*(Q)_A ,
\quad
0= S_{1\,Q}(P_2)_A = Q'(P_2)_A + R_{P_2}^*(Q)_A . 
\end{equation}
Applying the symmetries $\X_{P_2}$ and $\X_{P_1}$
respectively to these two equations and subtracting them 
yields
\begin{equation}\label{2ndaction.commutator.eqn}
\begin{aligned}
0 = & \pr\X_{P_2}(Q'(P_1)_A + R_{P_1}^*(Q)_A) - \pr\X_{P_1}(Q'(P_2)_A + R_{P_2}^*(Q)_A) \\
= & Q'(\pr\X_{P_2}(P_1) -\pr\X_{P_1}(P_2))_A 
+ \pr\X_{P_2}(R_{P_1}^*)(Q)_A - \pr\X_{P_1}(R_{P_2}^*)(Q)_A 
\\&\qquad
+ R_{P_1}^*(\pr\X_{P_2}(Q))_A - R_{P_2}^*(\pr\X_{P_1}(Q))_A
\end{aligned}
\end{equation}
using $\pr\X_{P_2}Q'(P_1) - \pr\X_{P_1}Q'(P_2) = Q''(P_2,P_1) - Q''(P_1,P_2) = 0$. 
The first term in equation \eqref{2ndaction.commutator.eqn} 
reduces to a commutator expression 
\begin{equation}\label{2ndaction.commutator.eqn.term1}
Q'(\pr\X_{P_2}(P_1) -\pr\X_{P_1}(P_2))_A = Q'([P_2,P_1])_A . 
\end{equation}
The middle two terms can be expressed as
\begin{equation}\label{2ndaction.commutator.eqn.term2}
\pr\X_{P_2}(R_{P_1}^*)(Q)_A -\pr\X_{P_1}(R_{P_2}^*)(Q)_A 
= R_{[P_2,P_1]}^*(Q)_A - [R_{P_2}^*,R_{P_1}^*](Q)_A 
\end{equation}
by use of the identity
\begin{equation}
R_{[P_2,P_1]}^* = [R_{P_2}^*,R_{P_1}^*] + \pr\X_{P_2}(R_{P_1}^*) -\pr\X_{P_1}(R_{P_2}^*)
\end{equation}
which can derived straightforwardly from the symmetry determining equation \eqref{symm.deteqn.offsoln} off of $\Esp$. 
Next, the last term in equation \eqref{2ndaction.commutator.eqn.term2} 
can be combined with
the last two terms in equation \eqref{2ndaction.commutator.eqn}, 
yielding
\begin{equation}
R_{P_1}^*(Q'(P_2) +R_{P_2}^*(Q))_A - R_{P_2}^*(Q'(P_1) +R_{P_1}^*(Q))_A 
=0
\end{equation}
due to equations \eqref{2ndaction.ker.eqns}. 
Hence, after these simplifications, 
equation \eqref{2ndaction.commutator.eqn} becomes 
$0=Q'([P_2,P_1])_A + R_{[P_2,P_1]}^*(Q)_A = S_{1\,Q}([P_2,P_1])_A$. 
This establishes the following result. 

\begin{lemma}\label{lem:symmaction2.ker.subalgebra}
For the first symmetry action \eqref{symmaction2.adjsymm},
$\ker(S_Q)$ is a subalgebra in $\symmsp_G$. 
\end{lemma}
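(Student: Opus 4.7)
The plan is to take two symmetries $P_1^\alpha, P_2^\alpha \in \ker(S_{1\,Q})$, namely ones satisfying
\begin{equation*}
Q'(P_1)_A + R_{P_1}^*(Q)_A = 0, \qquad Q'(P_2)_A + R_{P_2}^*(Q)_A = 0,
\end{equation*}
and then show that the commutator symmetry $[P_1,P_2]^\alpha$ likewise lies in $\ker(S_{1\,Q})$, i.e.\ $Q'([P_1,P_2])_A + R_{[P_1,P_2]}^*(Q)_A = 0$. The natural way to produce the commutator on the left is to differentiate each kernel equation along the other symmetry, so I would apply $\pr\X_{P_2}$ to the first equation and $\pr\X_{P_1}$ to the second, then subtract. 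This is a legitimate operation because both quantities vanish identically off of $\Esp$ (they are the components of $S_{1\,Q}(P_i)$, which are adjoint-symmetries; but in fact we only need them to vanish as functions on jet space, which is what $P_i \in \ker(S_{1\,Q})$ means).

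First I would deal with the $Q'$ terms. Using the symmetry of the second Frechet derivative, $\pr\X_{P_2}Q'(P_1) - \pr\X_{P_1}Q'(P_2) = Q''(P_2,P_1) - Q''(P_1,P_2) = 0$ modulo a piece $Q'(\pr\X_{P_2}P_1 - \pr\X_{P_1}P_2) = Q'([P_2,P_1])_A$; this yields precisely the first half of the desired expression, up to a sign convention for the bracket. Next I would handle the $R^*$ terms, where the key identity I need is
\begin{equation*}
R_{[P_2,P_1]}^* \;=\; [R_{P_2}^*,R_{P_1}^*] + \pr\X_{P_2}(R_{P_1}^*) - \pr\X_{P_1}(R_{P_2}^*).
\end{equation*}
I would derive this independently by differentiating the off-shell symmetry equation $G'(P_i)^A = R_{P_i}(G)^A$ along $\pr\X_{P_j}$, using $(\pr\X_{P_j}G')^A = (G'(P_j))'^A - P_j'G'^A = (R_{P_j}G')^A - P_j'G'^A$, and taking adjoints; this is the main technical obstacle, though it is a standard manipulation once one pays careful attention to the prolongation rule for how $\pr\X_P$ acts on linear differential operators.

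With that identity, the $\pr\X_{P_j}(R_{P_i}^*)(Q)$ pieces reassemble as $R_{[P_2,P_1]}^*(Q)_A - [R_{P_2}^*,R_{P_1}^*](Q)_A$, giving the second half of the desired expression plus the commutator remainder $-[R_{P_2}^*,R_{P_1}^*](Q)$. The remaining leftover terms from the original differentiation are $R_{P_1}^*(\pr\X_{P_2}Q)_A - R_{P_2}^*(\pr\X_{P_1}Q)_A$, and one recognizes $\pr\X_{P_i}Q = Q'(P_i)$ on the relevant jet space. Substituting the two kernel equations $Q'(P_i) = -R_{P_i}^*(Q)$ into these leftover terms produces exactly $-R_{P_1}^*R_{P_2}^*(Q) + R_{P_2}^*R_{P_1}^*(Q) = [R_{P_2}^*,R_{P_1}^*](Q)$, which cancels the commutator remainder. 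The cancellation leaves $Q'([P_2,P_1])_A + R_{[P_2,P_1]}^*(Q)_A = S_{1\,Q}([P_2,P_1])_A = 0$, establishing that $[P_1,P_2] \in \ker(S_{1\,Q})$, and hence $\ker(S_{1\,Q})$ is a Lie subalgebra of $\symmsp_G$. The whole scheme is essentially a bookkeeping exercise once the commutator identity for $R_P^*$ is in hand; that identity is where the real content sits.
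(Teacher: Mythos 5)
Your proposal is correct and follows essentially the same route as the paper: apply $\pr\X_{P_2}$ and $\pr\X_{P_1}$ to the two kernel equations and subtract, use the symmetry of $Q''$ to extract $Q'([P_2,P_1])$, invoke the identity $R_{[P_2,P_1]}^* = [R_{P_2}^*,R_{P_1}^*] + \pr\X_{P_2}(R_{P_1}^*) -\pr\X_{P_1}(R_{P_2}^*)$ (which the paper likewise derives from the off-shell symmetry determining equation), and cancel the leftover $R_{P_i}^*(\pr\X_{P_j}Q)$ terms against the operator commutator by substituting the kernel equations. The bookkeeping and cancellations you describe match the paper's argument step for step.
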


To continue,
consider the third symmetry action \eqref{symmaction3.adjsymm}. 
Similar steps will now be carried out, starting from
\begin{equation}\label{3rdaction.ker.eqns}
0= S_{3\,Q}(P_1)_A = Q'(P_1)_A + R_Q^*(P_1)_A ,
\quad
0= S_{3\,Q}(P_2)_A = Q'(P_2)_A + R_Q^*(P_2)_A . 
\end{equation}
Respectively applying the symmetries $\X_{P_2}$ and $\X_{P_1}$
to these two equations and subtracting, 
one obtains
\begin{equation}\label{3rdaction.commutator.eqn}
0 = Q'([P_2,P_1])_A +R_Q^*([P_2,P_1])_A 
+ \pr\X_{P_2}(R_Q^*)(P_1)_A -\pr\X_{P_1}(R_Q^*)(P_2)_A . 
\end{equation}
Hence, one sees that 
$S_{3\,Q}([P_2,P_1]) = Q'([P_2,P_1])_A +R_Q^*([P_2,P_1])_A 
= \pr\X_{P_1}(R_Q^*)(P_2)_A -\pr\X_{P_2}(R_Q^*)(P_1)_A$
does not vanish in general.
This represents an obstruction to the bracket being well-defined. 
A useful remark is that if $Q=\Lambda$ is a conservation law multiplier 
for a PDE system with no differential identities (Lemma~\ref{lem:no.diff.identities}),
then the relation \eqref{multr.relation}
shows that 
\begin{equation}
\begin{aligned}
\pr\X_{P_1}(R_Q^*)(P_2)_A -\pr\X_{P_2}(R_Q^*)(P_1)_A
& = \pr\X_{P_2}(Q')(P_1)_A -\pr\X_{P_1}(Q')(P_2)_A \\
& = Q''(P_1,P_2) - Q''(P_2,P_1) =0
\end{aligned}
\end{equation}
whereby the obstruction vanishes. 

A similar obstruction arises for the bracket given by the second symmetry action \eqref{symmaction1.adjsymm}.
Specifically, by the same steps used for the first and third symmetry actions, 
one obtains 
$S_{2\,Q}([P_2,P_1]) = R_{[P_2,P_1]}^*(Q)_A -R_Q^*([P_2,P_1])_A 
= \pr\X_{P_2}(R_Q^*)(P_1)_A -\pr\X_{P_1}(R_Q^*)(P_2)_A
+R_{P_2}^*(S_{1\,Q}(P_1))_A - R_{P_1}^*(S_{1\,Q}(P_2))_A$. 
This expression contains the same obstruction terms as for the third symmetry action, 
as well as terms that involve the first symmetry action itself. 
If $Q=\Lambda$ is a conservation law multiplier for a PDE system 
with no differential identities (Lemma~\ref{lem:no.diff.identities}),
then this obstruction vanishes. 

Consequently, the following two results have been established. 

\begin{lemma}\label{lem:symmaction3.ker.subalgebra}
For the third symmetry action \eqref{symmaction3.adjsymm}, 
$\ker(S_{3\,Q})$ is a subalgebra in $\symmsp_G$
iff the condition  
\begin{equation}\label{symmaction3.subalg.condition}
\pr\X_{P_2}(R_Q^*)(P_1)_A - \pr\X_{P_1}(R_Q^*)(P_2)_A =0
\end{equation}
holds for all symmetries
$\X_{P_1}=P_1^\alpha\partial_{u^\alpha}$ and $\X_{P_2}=P_2^\alpha\partial_{u^\alpha}$
in $\ker(S_{3\,Q})$. 
\end{lemma}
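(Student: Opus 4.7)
The plan is to exploit the derivation already carried out in the paragraph preceding the lemma, which produces equation \eqref{3rdaction.commutator.eqn}, and to recognize that the lemma is essentially a clean reformulation of that identity.

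First I would take the two kernel equations \eqref{3rdaction.ker.eqns} defining $P_1, P_2 \in \ker(S_{3\,Q})$, apply $\pr\X_{P_2}$ to the first and $\pr\X_{P_1}$ to the second, and subtract. On the $Q'(P_j)$ pieces, the Leibniz rule gives
\begin{equation*}
\pr\X_{P_2}(Q'(P_1)) - \pr\X_{P_1}(Q'(P_2)) = \bigl(Q''(P_2,P_1)-Q''(P_1,P_2)\bigr) + Q'(\pr\X_{P_2}P_1 - \pr\X_{P_1}P_2),
\end{equation*}
where the first parenthesized term vanishes by symmetry of the mixed second Frechet derivative, and the second term equals $Q'([P_2,P_1])$ by the definition of the symmetry commutator \eqref{symm.commutator}. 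On the $R_Q^*(P_j)$ pieces, using that $R_Q^*$ acts linearly on its argument, I would similarly obtain $R_Q^*([P_2,P_1])$ plus the residual $\pr\X_{P_2}(R_Q^*)(P_1) - \pr\X_{P_1}(R_Q^*)(P_2)$. Combining these contributions reproduces exactly \eqref{3rdaction.commutator.eqn}, which in view of the defining formula $S_{3\,Q}=Q'+R_Q^*$ from \eqref{Jop.symmaction3} rewrites as
\begin{equation*}
S_{3\,Q}([P_2,P_1])_A = \pr\X_{P_1}(R_Q^*)(P_2)_A - \pr\X_{P_2}(R_Q^*)(P_1)_A .
\end{equation*}

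With this identity established, the lemma is immediate. The subalgebra property $[\ker(S_{3\,Q}),\ker(S_{3\,Q})]\subseteq\ker(S_{3\,Q})$ says precisely that $S_{3\,Q}([P_2,P_1])=0$ whenever both $P_1$ and $P_2$ lie in $\ker(S_{3\,Q})$, and by the displayed identity this is equivalent to condition \eqref{symmaction3.subalg.condition} holding on all such pairs. Both directions of the iff then follow at once.

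There is no substantive obstacle here: the only non-trivial ingredient is the symmetry of the second Frechet derivative $Q''(P_1,P_2)=Q''(P_2,P_1)$, which is standard in jet-space calculus, together with a bookkeeping application of the Leibniz rule. The real content of the lemma is already captured in the derivation of \eqref{3rdaction.commutator.eqn}; the role of the proof is simply to package that derivation as a necessary and sufficient condition for the kernel to close under the symmetry commutator.
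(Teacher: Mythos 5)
Your argument is correct and is essentially the paper's own: you apply $\pr\X_{P_2}$ and $\pr\X_{P_1}$ to the kernel equations \eqref{3rdaction.ker.eqns}, use the symmetry of $Q''$ and the commutator identity to reproduce \eqref{3rdaction.commutator.eqn}, and read off $S_{3\,Q}([P_2,P_1])_A = \pr\X_{P_1}(R_Q^*)(P_2)_A - \pr\X_{P_2}(R_Q^*)(P_1)_A$, from which the iff is immediate. No gaps; this matches the derivation the paper gives immediately before the lemma.
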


\begin{lemma}\label{lem:symmaction1.ker.subalgebra}
For the second symmetry action \eqref{symmaction1.adjsymm}, 
$\ker(S_{2\,Q})$ is a subalgebra in $\symmsp_G$
iff the condition  
\begin{equation}\label{symmaction1.subalg.condition}
\pr\X_{P_2}(R_Q^*)(P_1)_A -\pr\X_{P_1}(R_Q^*)(P_2)_A
+R_{P_2}^*(S_{1\,Q}(P_1))_A - R_{P_1}^*(S_{1\,Q}(P_2))_A
=0
\end{equation}
holds for all symmetries
$\X_{P_1}=P_1^\alpha\partial_{u^\alpha}$ and $\X_{P_2}=P_2^\alpha\partial_{u^\alpha}$
in $\ker(S_{2\,Q})$. 
\end{lemma}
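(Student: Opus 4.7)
The plan is to derive, under the hypothesis $P_1,P_2\in\ker(S_{2\,Q})$, the explicit identity
\[
S_{2\,Q}([P_2,P_1])_A = \pr\X_{P_2}(R_Q^*)(P_1)_A - \pr\X_{P_1}(R_Q^*)(P_2)_A + R_{P_2}^*(S_{1\,Q}(P_1))_A - R_{P_1}^*(S_{1\,Q}(P_2))_A,
\]
from which the iff is immediate: $[P_2,P_1]\in\ker(S_{2\,Q})$ holds for all such $P_1,P_2$ iff the right-hand side vanishes, which is precisely condition \eqref{symmaction1.subalg.condition}.

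To produce this identity I would follow the same template used in Lemmas~\ref{lem:symmaction2.ker.subalgebra} and~\ref{lem:symmaction3.ker.subalgebra}: start from the kernel equations $R_{P_i}^*(Q) = R_Q^*(P_i)$, apply $\pr\X_{P_2}$ and $\pr\X_{P_1}$ respectively, and subtract. On the left, use the product rule $\pr\X_{P_j}(R_{P_i}^*(Q)) = \pr\X_{P_j}(R_{P_i}^*)(Q) + R_{P_i}^*(Q'(P_j))$ combined with the operator-commutator identity
\[
R_{[P_2,P_1]}^* = [R_{P_2}^*,R_{P_1}^*] + \pr\X_{P_2}(R_{P_1}^*) - \pr\X_{P_1}(R_{P_2}^*)
\]
(established in the proof of Lemma~\ref{lem:symmaction2.ker.subalgebra}) to regroup the $\pr\X_{P_j}(R_{P_i}^*)(Q)$ pieces as $R_{[P_2,P_1]}^*(Q) - [R_{P_2}^*,R_{P_1}^*](Q)$. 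On the right, the product rule yields $\pr\X_{P_j}(R_Q^*)(P_i) + R_Q^*(\pr\X_{P_j}(P_i))$, and the $R_Q^*(\pr\X_{P_j}(P_i))$ pair assembles into $R_Q^*([P_2,P_1])$. Combining everything rewrites the difference as $S_{2\,Q}([P_2,P_1])$ plus the advertised obstruction terms, once the leftover $R_{P_i}^*(Q'(P_j))$ and $[R_{P_2}^*,R_{P_1}^*](Q)$ contributions are repackaged through $S_{1\,Q}(P_i) = Q'(P_i) + R_{P_i}^*(Q)$ into $R_{P_j}^*(S_{1\,Q}(P_i))$ and the $Q''$-pieces cancel by symmetry of the second Frechet derivative.

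The main obstacle will be the bookkeeping in this last repackaging step: verifying that the $[R_{P_2}^*,R_{P_1}^*](Q)$ cross-term combines cleanly with the $R_{P_i}^*(Q'(P_j))$ terms to give exactly $R_{P_2}^*(S_{1\,Q}(P_1)) - R_{P_1}^*(S_{1\,Q}(P_2))$, with no residual $R_{P_i}^*(R_Q^*(P_j))$ pieces. A useful sanity check is the multiplier case $Q = \Lambda$ for a system with no differential identities: relation \eqref{multr.relation} gives $R_Q^* = -Q'$, so both obstruction pairs in \eqref{symmaction1.subalg.condition} vanish by equality of mixed Frechet derivatives, recovering the expected fact that multipliers always generate a subalgebra.
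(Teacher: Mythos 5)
Your proposal is correct and takes essentially the same route as the paper, whose proof consists of deriving, ``by the same steps used for the first and third symmetry actions,'' exactly the identity you state, $S_{2\,Q}([P_2,P_1])_A = \pr\X_{P_2}(R_Q^*)(P_1)_A - \pr\X_{P_1}(R_Q^*)(P_2)_A + R_{P_2}^*(S_{1\,Q}(P_1))_A - R_{P_1}^*(S_{1\,Q}(P_2))_A$, from which the iff is read off; your bookkeeping worry resolves cleanly, since $[R_{P_2}^*,R_{P_1}^*](Q)$ combined with $R_{P_1}^*(Q'(P_2)) - R_{P_2}^*(Q'(P_1))$ is precisely $R_{P_1}^*(S_{1\,Q}(P_2)) - R_{P_2}^*(S_{1\,Q}(P_1))$ with no residual $R_{P_i}^*(R_Q^*(P_j))$ terms. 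The only cosmetic slip is that no $Q''$ pieces actually arise in this derivation (the kernel equations $R_{P_i}^*(Q)=R_Q^*(P_i)$ contain no explicit $Q'(P_i)$), so the mixed-Frechet-derivative cancellation is needed only for the first and third actions, not here.
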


The preceding developments can be summarized as follows. 

\begin{proposition}\label{prop:bracket.condition}
The adjoint-symmetry commutator bracket \eqref{adjsymm.bracket}
associated to each of the symmetry actions \eqref{symmaction2.adjsymm}, \eqref{symmaction1.adjsymm}, \eqref{symmaction3.adjsymm}
is well-defined on $S_Q(\symmsp_G)\subseteq \adjsymmsp_G$
if $\ad(\symmsp_G)\ker(S_Q) \subseteq \ker(S_Q)$ 
and, for the actions \eqref{symmaction1.adjsymm} and \eqref{symmaction3.adjsymm}, 
if the respective conditions \eqref{symmaction1.subalg.condition} and \eqref{symmaction3.subalg.condition} hold when $\dim\ker(S_Q)>1$. 
These latter conditions are identically satisfied
when $Q$ is a conservation law multiplier for a PDE system 
with no differential identities. 
\end{proposition}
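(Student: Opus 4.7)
The plan is to combine a direct check of well-definedness with the subalgebra characterizations already recorded in Lemmas~\ref{lem:symmaction2.ker.subalgebra}, \ref{lem:symmaction3.ker.subalgebra}, and~\ref{lem:symmaction1.ker.subalgebra}, and then to verify the multiplier case using the relation \eqref{multr.relation}.

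For well-definedness, I would note that if $P_i$ and $\tilde P_i = P_i + K_i$, with $K_i \in \ker(S_Q)$, are two preimages of $Q_i$ under $S_Q$ ($i=1,2$), then the difference of the two candidate values of the bracket is $S_Q\bigl([K_1,P_2] + [P_1,K_2] + [K_1,K_2]\bigr)$, by bilinearity of the commutator. The ideal hypothesis $\ad(\symmsp_G)\ker(S_Q)\subseteq\ker(S_Q)$ forces each of the three terms inside $S_Q$ to lie in $\ker(S_Q)$ (the last one because an ideal is \emph{a fortiori} a subalgebra), so $S_Q$ annihilates the sum. Moreover $S_Q([P_1,P_2])$ lies in $S_Q(\symmsp_G)\subseteq\adjsymmsp_G$ since $[P_1,P_2]\in\symmsp_G$, so the bracket indeed takes values in the stated space.

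Second, the ideal hypothesis requires in particular that $\ker(S_Q)$ be a subalgebra of $\symmsp_G$. For the first symmetry action, Lemma~\ref{lem:symmaction2.ker.subalgebra} guarantees this unconditionally. For the second and third actions, Lemmas~\ref{lem:symmaction1.ker.subalgebra} and~\ref{lem:symmaction3.ker.subalgebra} reduce the subalgebra property to the explicit conditions \eqref{symmaction1.subalg.condition} and \eqref{symmaction3.subalg.condition}, which must therefore be imposed as additional hypotheses. I would point out that these conditions are vacuous when $\dim\ker(S_Q)\leq 1$, since a one-dimensional kernel spanned by a single $P_0$ obeys $[P_0,P_0]=0$ trivially; this is why the proposition only invokes them in the regime $\dim\ker(S_Q)>1$.

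For the final claim, I would exploit the multiplier relation \eqref{multr.relation}, namely $\Lambda'=-R_\Lambda^*$, which holds when $Q=\Lambda$ is a multiplier of a PDE system without differential identities. Substituting $R_Q^* = -\Lambda'$ into the Frechet obstruction terms reduces them to $-\Lambda''(P_2,P_1)+\Lambda''(P_1,P_2)=0$ by symmetry of mixed second Frechet derivatives, so \eqref{symmaction3.subalg.condition} is identically satisfied. For \eqref{symmaction1.subalg.condition} there are the additional terms $R_{P_2}^*(S_{1\,Q}(P_1))-R_{P_1}^*(S_{1\,Q}(P_2))$; the key observation is that the three actions satisfy $S_{1\,Q}=S_{2\,Q}+S_{3\,Q}$ directly from their definitions in Theorem~\ref{thm:symmactions.adjsymm}, while Proposition~\ref{prop:symmaction3.adjsymm} shows that $S_{3\,\Lambda}$ vanishes on $\symmsp_G$ for any multiplier $\Lambda$ in this setting. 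Hence $S_{1\,\Lambda}=S_{2\,\Lambda}$, so $S_{1\,\Lambda}(P_i)=0$ for every $P_i\in\ker(S_{2\,\Lambda})$, killing the remaining terms. The main obstacle I anticipate is precisely these extra $R_{P}^*$-terms in \eqref{symmaction1.subalg.condition}: unlike the Frechet obstruction, they do not collapse by a direct mixed-derivative identity and require the splitting $S_{1\,Q}=S_{2\,Q}+S_{3\,Q}$ together with Proposition~\ref{prop:symmaction3.adjsymm} to be dispatched.
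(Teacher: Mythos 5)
Your proposal is correct and follows essentially the same route as the paper: well-definedness of the bracket \eqref{adjsymm.bracket} from the ideal condition (the content of Proposition~\ref{prop:adjsymm.bracket}), the subalgebra characterizations in Lemmas~\ref{lem:symmaction2.ker.subalgebra}, \ref{lem:symmaction3.ker.subalgebra}, \ref{lem:symmaction1.ker.subalgebra}, and the multiplier relation \eqref{multr.relation} together with symmetry of the second Frechet derivative to kill the obstructions. Your explicit disposal of the extra terms $R_{P_2}^*(S_{1\,Q}(P_1))-R_{P_1}^*(S_{1\,Q}(P_2))$ via the decomposition $S_{1\,Q}=S_{2\,Q}+S_{3\,Q}$ and the vanishing of $S_{3\,\Lambda}$ for multipliers is exactly the detail the paper leaves implicit when it asserts that the obstruction \eqref{symmaction1.subalg.condition} vanishes, so no gap remains.
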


An alternative way to have the bracket be well defined is 
if the quotient $\symmsp_G/\ker(S_Q)$ can be naturally identified with 
a subspace in $\symmsp_G$. 
This is equivalent to requiring that the symmetry Lie algebra 
admits an extra structure of a direct-sum decomposition as a linear space
\begin{equation}\label{S_Q:decomp}
\symmsp_G = \ker(S_Q)\oplus\coker(S_Q)
\end{equation}
such that the decomposition is independent of a choice of basis. 
Then $S_Q^{-1}$ can be defined as belonging to the subspace $\coker(S_Q)$, 
and hence the bracket will be well defined. 

It will now be shown that the extra structure \eqref{S_Q:decomp} will typically exist 
for a symmetry Lie algebra that contains a scaling symmetry \eqref{scaling.symm}. 

Every symmetry in $\symmsp_G$ can be decomposed 
into a sum of symmetries that are scaling homogeneous. 
Consequently, there will exist a basis for $\symmsp_G$ consisting of 
$P_\scal$ and $\{P_k\}_{k=1,\ldots,\dim(\symmsp_G)-1}$, 
such that $[P_\scal,P_k] = r^{(k)} P_k$
where the constant $r_{(k)}$ is the scaling weight of the symmetry $P_k$. 
Then there exists a direct-sum decomposition 
\begin{equation}\label{scaling.decomp}
\symmsp_G = \spn(P_\scal) \oplus\sum_k{}^{\textstyle\oplus}\spn(P_k) . 
\end{equation}
which is basis independent. 
This will provide the extra structure \eqref{S_Q:decomp} if 
the subspaces $\ker(S_Q)$ and $\coker(S_Q)$ can be uniquely characterized 
in terms of their scaling weights. 

\begin{proposition}\label{prop:bracket.condition.alt}
Suppose $\symmsp_G$ contains a scaling symmetry \eqref{scaling.symm}. 
For each of the symmetry actions \eqref{symmaction2.adjsymm}, \eqref{symmaction1.adjsymm}, \eqref{symmaction3.adjsymm}, 
if $\ker(S_Q)$ is a scaling-homogeneous subspace in $\symmsp_G$, 
then the adjoint-symmetry commutator bracket \eqref{adjsymm.bracket}
is well-defined on the linear space $S_Q(\symmsp_G)\subseteq \adjsymmsp_G$
by taking $S_Q^{-1}$ to belong to a sum of scaling-homogeneous subspaces
with scaling weights that are different than that of $\ker(S_Q)$. 
\end{proposition}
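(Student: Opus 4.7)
The plan is to use the scaling symmetry $P_\scal$ to impose a canonical grading on $\symmsp_G$ that uniquely picks out a complement to $\ker(S_Q)$, thereby removing the ambiguity in $S_Q^{-1}$ without reference to any basis. This will provide the extra structure \eqref{S_Q:decomp} needed to make the bracket \eqref{adjsymm.bracket} well defined.

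First I will decompose $\symmsp_G$ into weight spaces of $\ad(P_\scal)$. Since every symmetry can be split into scaling-homogeneous pieces, as used in constructing \eqref{scaling.decomp}, one has $\symmsp_G = \bigoplus_r V_r$ with $V_r := \{P\in\symmsp_G : [P_\scal,P] = r P\}$. This refinement is intrinsic to the linear operator $\ad(P_\scal)$. The hypothesis that $\ker(S_Q)$ is scaling-homogeneous then means $\ker(S_Q)=\bigoplus_{r\in I}V_r$ for some set of weights $I$. Setting $\coker(S_Q) := \bigoplus_{r\notin I} V_r$ yields the direct-sum decomposition $\symmsp_G = \ker(S_Q)\oplus\coker(S_Q)$ of the form \eqref{S_Q:decomp}, and this decomposition is canonical because it involves no choice of basis within any $V_r$. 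The restriction $S_Q|_{\coker(S_Q)} \colon \coker(S_Q)\to S_Q(\symmsp_G)$ is surjective by definition of $\ran(S_Q)$ and injective since $\coker(S_Q)\cap\ker(S_Q)=0$, hence a linear isomorphism. I will then take $S_Q^{-1}$ to be its inverse, so that $S_Q^{-1}(Q_j)$ is uniquely specified as the element of $\coker(S_Q)$ mapped to $Q_j$ by $S_Q$.

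With this uniquely determined $S_Q^{-1}$, the bracket \eqref{adjsymm.bracket} becomes unambiguous: for any $Q_1,Q_2 \in S_Q(\symmsp_G)$, the pre-images $S_Q^{-1}Q_1$, $S_Q^{-1}Q_2 \in \coker(S_Q)$ are definite symmetries, their commutator is a definite element of $\symmsp_G$, and $S_Q$ applied to it produces a definite element of $S_Q(\symmsp_G) \subseteq \adjsymmsp_G$. This yields a bilinear pairing on $S_Q(\symmsp_G)$, for each of the three symmetry actions, as claimed.

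The main subtlety I anticipate is confirming that no further structural requirement is being smuggled in by the construction. In particular, $\coker(S_Q)$ need not be a subalgebra under the symmetry commutator; this is harmless because any $\ker(S_Q)$-component of $[S_Q^{-1}Q_1,S_Q^{-1}Q_2]$ is annihilated by the outer application of $S_Q$. Likewise, although the splitting $I$ vs.\ its complement depends on $Q$, it does not depend on representatives chosen within each weight space, so the bracket is intrinsic given $Q$. These observations together establish that the scaling homogeneity of $\ker(S_Q)$ is sufficient on its own to make the bracket \eqref{adjsymm.bracket} well defined.
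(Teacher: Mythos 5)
Your proposal is correct and follows essentially the same route as the paper: both use the eigenspace (weight) decomposition of $\symmsp_G$ under $\ad(P_\scal)$ to obtain a canonical complement $\coker(S_Q)$ of $\ker(S_Q)$, realize the decomposition \eqref{S_Q:decomp}, and fix $S_Q^{-1}$ uniquely as the inverse of the restriction of $S_Q$ to that complement, after which well-definedness of \eqref{adjsymm.bracket} is immediate. Your reading of the hypothesis --- that $\ker(S_Q)$ is a sum of full weight spaces, so its weights do not occur in the complement --- is exactly the paper's requirement that $\ker(S_Q)$ and $\coker(S_Q)$ be uniquely characterized by their scaling weights, and your remark that $\coker(S_Q)$ need not be a subalgebra is consistent with the construction.
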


This result can be generalized 
if $\ker(S_Q)$ is a direct sum of scaling homogeneous subspaces that have no scaling weights in common with any scaling homogeneous subspace in $\coker(S_Q)$. 

Now, the basic properties of the general adjoint-symmetry commutator bracket \eqref{adjsymm.bracket}
will be studied. 
Recall that the underlying symmetry commutator bracket
is antisymmetric and obeys the Jacobi identity. 
This implies that the same properties are inherited by the bracket \eqref{adjsymm.bracket}. 

\begin{theorem}\label{thm:adjsymm.bracket.properties}
The adjoint-symmetry commutator bracket \eqref{adjsymm.bracket} is a Lie bracket, 
namely it is antisymmetric
\begin{equation}\label{adjsymm.bracket.skew}
{}^Q[Q_1,Q_2]_A +{}^Q[Q_2,Q_1]_A=0
\end{equation}  
and obeys the Jacobi identity
\begin{equation}\label{adjsymm.bracket.jacobi}
{}^Q[Q_1,{}^Q[Q_2,Q_3]]_A + {}^Q[Q_2,{}^Q[Q_3,Q_1]]_A + {}^Q[Q_3,{}^Q[Q_1,Q_2]]_A =0 . 
\end{equation}    
Hence, the linear subspace $S_Q(\symmsp_G)\subseteq\adjsymmsp_G$ of adjoint-symmetries 
acquires a Lie algebra structure which is homomorphic to the symmetry Lie algebra. 
If there exists an adjoint-symmetry $Q_A$ such that $S_Q(\symmsp_G)=\adjsymmsp_G$
where $\ker(S_Q)$ satisfies the conditions in either of 
Propositions~\ref{prop:bracket.condition} and~\ref{prop:bracket.condition.alt}, 
then the whole space $\adjsymmsp_G$ will be a Lie algebra. 
\end{theorem}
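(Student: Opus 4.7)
\medskip

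\noindent\textbf{Proof proposal.} The plan is to reduce everything to the fact that $S_Q$ behaves as a Lie algebra homomorphism from $\symmsp_G$ (with the commutator bracket) onto $S_Q(\symmsp_G)$ (with the bracket ${}^Q[\cdot,\cdot]$), modulo the kernel which, by hypothesis (Proposition~\ref{prop:bracket.condition} or Proposition~\ref{prop:bracket.condition.alt}), is an ideal. Concretely, for $Q_1,Q_2\in S_Q(\symmsp_G)$ pick lifts $P_i = S_Q^{-1}(Q_i)$ (defined up to addition of elements of $\ker(S_Q)$). The defining formula then reads ${}^Q[Q_1,Q_2]_A = S_Q([P_1,P_2])_A$, and well-definedness is exactly the statement that different choices of lifts produce commutators differing by an element of $\ker(S_Q)$, which $S_Q$ annihilates. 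This identification is the single structural fact that drives the whole proof.

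Granting this, antisymmetry \eqref{adjsymm.bracket.skew} is immediate from antisymmetry of the symmetry commutator and linearity of $S_Q$: ${}^Q[Q_1,Q_2]_A + {}^Q[Q_2,Q_1]_A = S_Q([P_1,P_2]+[P_2,P_1])_A = 0$. For the Jacobi identity \eqref{adjsymm.bracket.jacobi}, first I would show the \emph{lifting identity} ${}^Q[Q_1,{}^Q[Q_2,Q_3]]_A = S_Q([P_1,[P_2,P_3]])_A$. The only subtle point is that $S_Q^{-1}(S_Q([P_2,P_3]))$ equals $[P_2,P_3] + k$ for some $k\in\ker(S_Q)$; however, the ideal property $[P_1,k]\in\ker(S_Q)$ ensures this ambiguity is killed upon applying the outer $S_Q$. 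Summing the lifting identity cyclically and invoking the Jacobi identity for the symmetry commutator yields
\begin{equation*}
{}^Q[Q_1,{}^Q[Q_2,Q_3]]_A + \text{cyclic}
= S_Q\bigl([P_1,[P_2,P_3]] + [P_2,[P_3,P_1]] + [P_3,[P_1,P_2]]\bigr)_A = S_Q(0)_A = 0,
\end{equation*}
which is \eqref{adjsymm.bracket.jacobi}.

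The explicit formula \eqref{adjsymm.bracket.explicit} emerges as a by-product: expanding the commutator $[S_Q^{-1}Q_1,S_Q^{-1}Q_2]$ using definition \eqref{symm.commutator} gives the first two terms, while the chain-rule expansion of $S_Q$ applied to it accounts for the Frechet derivative terms $S_Q'$, as the Frechet derivative of a composition of $P$-dependent operators. A brief cross-check using the explicit form also confirms antisymmetry term by term.

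For the last assertions, the lifting identity shows precisely that the surjective linear map $\pi\colon\symmsp_G \twoheadrightarrow S_Q(\symmsp_G)$, $P\mapsto S_Q(P)$, intertwines $[\cdot,\cdot]$ with ${}^Q[\cdot,\cdot]$, so $\pi$ is a Lie algebra homomorphism with $\ker\pi = \ker(S_Q)$. Hence $S_Q(\symmsp_G)\subseteq\adjsymmsp_G$ is a Lie algebra, and by the first isomorphism theorem it is isomorphic to $\symmsp_G/\ker(S_Q)$. When $Q_A$ can be chosen so that $S_Q(\symmsp_G)=\adjsymmsp_G$ and the hypotheses of Proposition~\ref{prop:bracket.condition} or~\ref{prop:bracket.condition.alt} hold, the whole space $\adjsymmsp_G$ acquires this Lie algebra structure. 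The main obstacle throughout is simply bookkeeping of the kernel ambiguity in the definition of $S_Q^{-1}$; the ideal property of $\ker(S_Q)$ is what resolves it cleanly, and once that is used the rest of the argument is formal algebra.
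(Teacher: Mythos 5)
Your proposal is correct and follows essentially the same route as the paper: the bracket is by construction the pull-back of the symmetry commutator under $S_Q$, so antisymmetry and the Jacobi identity are inherited from the Lie bracket on $\symmsp_G$ once the ideal property of $\ker(S_Q)$ (Proposition~\ref{prop:bracket.condition}, or the canonical choice of $S_Q^{-1}$ in $\coker(S_Q)$ from Proposition~\ref{prop:bracket.condition.alt}) makes $S_Q^{-1}$ unambiguous, which is exactly the inheritance argument the paper invokes. Your lifting identity and explicit kernel bookkeeping simply spell out what the paper leaves implicit, so no further changes are needed.
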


Since $S_Q$ is a linear mapping,
the condition $S_Q(\symmsp_G)=\adjsymmsp_G$ can be expressed equivalently as 
\begin{equation}\label{SQ.maximal.condition}
\dim\adjsymmsp_G +\dim\ker(S_Q) = \dim\symmsp_G . 
\end{equation}
Hence, $\dim\symmsp_G \geq \dim\adjsymmsp_G$ is a necessary condition. 
This version is most useful when the dimensions are finite.

\underline{Running example}: 
For the p-gKdV equation \eqref{p-gkdv}, 
the three dual linear maps $S_Q$ with $Q=c_1Q_1+c_2Q_2+c_3Q_3$ 
have a 4-dimensional domain $\spn(P_1,P_2,P_3,P_4)$,
while their range is at most 3-dimensional, since it belongs to $\spn(Q_1,Q_2,Q_3)$.
Hence, the kernel of each $S_Q$ is at least 1-dimensional. 
The specific ranges and kernels, 
which depend on the choice of the constants $c_1,c_2,c_3$, 
can be found from Table~\ref{table:p-gkdv.dualsymmactions} as follows. 

Firstly, for $S_{3\,Q}$, 
there is no loss of generality in taking $c_3=1$, $c_1=c_2=0$. 
The range of $S_{3\,Q}$ is $\spn(Q_1,Q_2,Q_3)$,
while the kernel is $\spn(P_1)$. 
This subspace in $\symmsp_\text{p-gKdV}$ is an ideal,
as shown by the symmetry commutators \eqref{p-gkdv.symmalg}.
The resulting adjoint-symmetry bracket ${}^{Q_3}[\ \cdot\ ,\ \cdot\ ]$ 
is thereby well-defined. 
It is computed by: 
$S_{3\,Q_3}^{-1}(Q_1) = \tfrac{1}{2(p-2)}P_2$,
$S_{3\,Q_3}^{-1}(Q_2) = \tfrac{1}{2(p-2)}P_3$, 
$S_{3\,Q_3}^{-1}(Q_3) = \tfrac{1}{2(p-2)}P_4$,
modulo $\spn(P_1)$,
and thus 
\begin{subequations}
\begin{align}
& {}^{Q_3}[Q_1,Q_2] = S_{3\,Q_3}([\tfrac{1}{2(p-2)}P_2,\tfrac{1}{2(p-2)}P_3])
=\tfrac{1}{4(p-2)^2}S_{3\,Q_3}(0) 
=0, 
\\
& 
{}^{Q_3}[Q_1,Q_3] = S_{3\,Q_3}([[\tfrac{1}{2(p-2)}P_2,\tfrac{1}{2(p-2)}P_4])
= \tfrac{1}{4(p-2)^2} S_{3\,Q_3}(pP_2) 
= \tfrac{p}{2(p-2)}Q_1,
\\
& 
{}^{Q_3}[Q_2,Q_3] = S_{3\,Q_3}([\tfrac{1}{2(2-p)}P_3,\tfrac{1}{2(p-2)}P_4])
= \tfrac{1}{4(p-2)^2} S_{3\,Q_3}(3pP_3) 
= \tfrac{3p}{2(p-2)}Q_2,
\end{align}
\end{subequations}
using the symmetry commutators \eqref{p-gkdv.symmalg}.
This bracket is a Lie bracket such that 
$\spn(Q_1,Q_2,Q_3)$ is homomorphic to the symmetry algebra $\spn(P_1,P_2,P_3,P_4)$
and isomorphic to the subalgebra $\spn(P_2,P_3,P_4)$. 
In particular, 
the bracket can be expressed in terms of the scaled Noether operator $D_x$ 
(cf \eqref{p-gkdv.noetherop}):
\begin{equation}\label{p-gkdv.adjsymm.bracket}
{}^{Q_3}[\ \cdot\ ,\ \cdot\ ] = D_x([D_x^{-1}\ \cdot\ ,D_x^{-1}\ \cdot\ ]) .
\end{equation}

Secondly, for $S_{1\,Q}$, 
the range is maximal if $c_3\neq0$ with any values of $c_2,c_3$. 
The simplest choice is $c_1=c_2=0$, $c_3=1$, namely $Q=Q_3$. 
Then the range and the kernel are the same as for $S_{3\,Q}$,
and so the resulting adjoint-symmetry bracket is well-defined. 
It is computed by: 
$S_{1\,Q_3}^{-1}(Q_1) = \tfrac{1}{p}P_2$,
$S_{1\,Q_3}^{-1}(Q_2) = \tfrac{1}{3p}P_3$, 
$S_{1\,Q_3}^{-1}(Q_3) = \tfrac{1}{2(p-2)}P_4$,
modulo $\spn(P_1)$,
and thus 
\begin{subequations}
\begin{align}
& {}^{Q_3}[Q_1,Q_2] = S_{1\,Q_3}([\tfrac{1}{p}P_2,\tfrac{1}{3p}P_3])
=\tfrac{1}{3p^2}S_{1\,Q_3}(0) 
=0, 
\\
& 
{}^{Q_3}[Q_1,Q_3] = S_{1\,Q_3}([[\tfrac{1}{p}P_2,\tfrac{1}{2(p-2)}P_4])
= \tfrac{1}{2p(p-2)} S_{1\,Q_3}(pP_2) 
= \tfrac{p}{2(p-2)}Q_1,
\\
& 
{}^{Q_3}[Q_2,Q_3] = S_{1\,Q_3}([\tfrac{1}{3p}P_3,\tfrac{1}{2(p-2)}P_4])
= \tfrac{1}{6p(p-2)} S_{1\,Q_3}(3pP_3) 
= \tfrac{3}{2(p-2)}Q_2. 
\end{align}
\end{subequations}
This yields the same bracket as for $S_{3\,Q}$. 

Thirdly, for $S_{2\,Q}$, 
the maximal range is $\spn(Q_1,Q_2)$, 
which arises if $c_3\neq0$ with any values for $c_1,c_2$. 
The kernel is spanned by $P_1$, $P_4 +\tfrac{c_1}{c_3}P_2 +\tfrac{c_2}{c_3}P_3$. 
However, this space is not an ideal in $\symmsp_\text{p-gKdV}$,
as shown by the symmetry commutators \eqref{p-gkdv.symmalg}.
Hence a corresponding adjoint-symmetry bracket cannot be defined 
without the use of extra structure. 
The scaling symmetry $P_4$ is available to provide a direct-sum decomposition 
$\spn(P_1,P_2,P_3,P_4) = \ker(S_{2\,Q})\oplus\coker(S_{2\,Q})$
where, for the choice $c_1=c_2=0$ and $c_3=1$, 
$\ker(S_{2\,Q}) = \spn(P_4)\oplus\spn(P_1)$ 
and $\coker(S_{2\,Q}) = \spn(P_2)\oplus\spn(P_3)$
are characterized by their distinct scaling weights with respect to $\ad(P_4)$: 
$(0,2-p)$ and $(-p,-3p)$. 
Then an adjoint-symmetry bracket can be defined via
$S_{2\,Q_3}^{-1}(Q_1) = \tfrac{1}{4-p} P_2$,
$S_{2\,Q_3}^{-1}(Q_2) = \tfrac{1}{p+4} P_3$, 
in $\coker(S_{2\,Q})$, 
and thus 
\begin{equation}
{}^{Q_3}[Q_1,Q_2] = S_{2\,Q_3}([\tfrac{1}{4-p}P_2,\tfrac{1}{p+4}P_3])
= \tfrac{1}{16-p^2} S_{2\,Q_3}(0) 
= 0. 
\end{equation}
This yields an abelian Lie bracket on the subspace 
$\spn(Q_1,Q_2)\subset \adjsymmsp_\text{p-gKdV}$. 
It coincides with the previous Lie bracket restricted to this subspace. 

These three Lie brackets are summarized in Table~\ref{table:p-gkdv.adjsymm.brackets}. 

\begin{table}[h!]
\caption{p-gKdV equation: adjoint-symmetry Lie brackets}
\label{table:p-gkdv.adjsymm.brackets}
\begin{subtable}{.5\linewidth}
\centering
\caption{bracket using $S_{1\,Q_3}=S_{3\,Q_3}$}
\begin{tabular}{l||c|c|c}
& $Q_1$
& $Q_2$
& $Q_3$
\\
\hline
\hline  
$Q_1$
& $0$
& $0$
& $\tfrac{p}{2(p-2)}Q_1$
\\
\hline
$Q_2$
& 
& $0$
& $\tfrac{3}{2(p-2)}Q_2$
\\
\hline
$Q_3$
& 
& 
& $0$
\\
\end{tabular}
\end{subtable}%
\begin{subtable}{.5\linewidth}
\centering
\caption{bracket using $S_{2\,Q_3}$}
\begin{tabular}{l||c|c}
& $Q_1$
& $Q_2$
\\
\hline
\hline  
$Q_1$
& $0$
& $0$
\\
\hline
$Q_2$
& 
& $0$
\\
\end{tabular}
\end{subtable}%
\end{table}

\subsection{Adjoint-symmetry commutators associated to symmetry subalgebras}

The Lie algebra structure identified in Theorem~\ref{thm:adjsymm.bracket.properties}
motivates a related construction of adjoint-symmetry commutator brackets 
given by a pull-back of Lie subalgebras in $\symmsp_G$ under $S_Q^{-1}$. 

As the starting point, 
the linear subspace $S_Q(\symmsp_G)$ will be replaced by $S_Q({\mathcal A})$
where ${\mathcal A}$ is any Lie subalgebra in $\symmsp_G$
and where $Q_A$ is chosen such that $\ker(S_Q)\cap {\mathcal A}$ is empty. 
The set of such adjoint-symmetries will, as before, 
be a projective subspace in $\adjsymmsp_G$. 

Then the construction of the commutator bracket given in Proposition~\ref{prop:adjsymm.bracket} is modified as follows. 

\begin{proposition}\label{prop:adjsymm.bracket.subalg}
Given a Lie subalgebra $\mathcal A$ in $\symmsp_G$
and a symmetry action $S_P$ on $\adjsymmsp_G$, 
fix an adjoint-symmetry $Q_A$ in $\adjsymmsp_G$ 
such that the kernel of $S_Q$ restricted to $\mathcal A$ is empty,
where $S_Q$ is the dual linear operator \eqref{S_Q.op} of the symmetry action. 
Then the commutator bracket \eqref{adjsymm.bracket} is well-defined 
on the linear space $S_Q(\symmsp_G) \subseteq\adjsymmsp_G$,
and this structure is isomorphic to the Lie subalgebra $\mathcal A$. 
\end{proposition}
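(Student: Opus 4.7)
The plan is to leverage the hypothesis $\ker(S_Q|_{\mathcal A})=\{0\}$ to render $S_Q^{-1}$ literally single-valued once one requires it to take values in $\mathcal A$, so that the bracket of Proposition~\ref{prop:adjsymm.bracket} descends without needing any analog of the ideal or obstruction conditions in Proposition~\ref{prop:bracket.condition}. The natural domain of the resulting bracket is the subspace $S_Q(\mathcal A)\subseteq S_Q(\symmsp_G)$.

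First I would note that the restriction $\Phi:=S_Q|_{\mathcal A}:\mathcal A\to S_Q(\mathcal A)$ is a linear map with trivial kernel, hence a linear bijection onto its image. Its inverse $T:=\Phi^{-1}$ is therefore a genuine single-valued map $S_Q(\mathcal A)\to\mathcal A$, with no residual freedom modulo $\ker(S_Q)\subseteq\symmsp_G$: the convention $T(Q_i)\in\mathcal A$ selects a unique representative out of each coset. For $Q_1,Q_2\in S_Q(\mathcal A)$, the preimages $T(Q_1),T(Q_2)$ lie in $\mathcal A$, and since $\mathcal A$ is closed under the symmetry commutator, $[T(Q_1),T(Q_2)]\in\mathcal A$ as well. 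Interpreting $S_Q^{-1}$ as $T$ in \eqref{adjsymm.bracket} therefore produces
\begin{equation*}
{}^Q[Q_1,Q_2]_A := S_Q\bigl([T(Q_1),T(Q_2)]\bigr)_A \in S_Q(\mathcal A),
\end{equation*}
a bilinear operation on $S_Q(\mathcal A)$ in which every choice has been forced by the hypotheses. The obstruction terms appearing in \eqref{symmaction3.subalg.condition} and \eqref{symmaction1.subalg.condition} do not show up here, because they arose solely from the freedom $S_Q^{-1}\to S_Q^{-1}+\ker(S_Q)$, which has now been eliminated.

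Next, the Lie algebra structure is transported through $\Phi$. By the very definition of the bracket, $\Phi([P_1,P_2])={}^Q[\Phi(P_1),\Phi(P_2)]$ for all $P_1,P_2\in\mathcal A$, and $\Phi$ is a linear bijection. Hence antisymmetry and the Jacobi identity for ${}^Q[\cdot,\cdot]$ on $S_Q(\mathcal A)$ are inherited directly from the corresponding properties of the symmetry commutator on $\mathcal A$, and $\Phi$ is a Lie algebra isomorphism from $\mathcal A$ onto $\bigl(S_Q(\mathcal A),{}^Q[\cdot,\cdot]\bigr)$, which is the claimed isomorphism.

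The argument is almost entirely formal; the only subtle point — and hence what I regard as the main obstacle, if any — is recognizing that working inside the Lie subalgebra $\mathcal A$ simultaneously removes the kernel ambiguity of $S_Q^{-1}$ and keeps the commutator $[T(Q_1),T(Q_2)]$ inside the domain on which $T$ inverts $S_Q$. Once this is observed, no further conditions of the type used in Propositions~\ref{prop:bracket.condition} and~\ref{prop:bracket.condition.alt} need to be imposed.
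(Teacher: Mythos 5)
Your proof is correct and takes essentially the same route as the paper: since $\ker(S_Q|_{\mathcal A})$ is trivial, $S_Q|_{\mathcal A}$ is a linear bijection onto its image, the bracket is the push-forward of the commutator on $\mathcal A$ with no residual $\ker(S_Q)$ ambiguity, and $S_Q|_{\mathcal A}$ is thereby a Lie algebra isomorphism. Your identification of the natural domain as $S_Q(\mathcal A)$ (rather than the $S_Q(\symmsp_G)$ written literally in the proposition) agrees with the paper's intent in the preceding discussion, where $S_Q(\symmsp_G)$ is explicitly replaced by $S_Q(\mathcal A)$.
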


In particular, $S_Q^{-1}$ provides an isomorphism under which 
the commutator bracket \eqref{adjsymm.bracket} on $S_Q(\symmsp_G)$ 
is the pull-back of the Lie bracket on $\mathcal A$. 
The condition 
\begin{equation}\label{adjsymm.bracket.subalg.cond}
\ker(S_Q)\cap {\mathcal A}=\emptyset
\end{equation}
will select the adjoint-symmetries $Q_A$ that can be used in constructing this bracket. 
If this condition fails to be satisfied by all adjoint-symmetries, 
then it implies that there is no subspace in $\adjsymmsp_G$ on which the bracket 
produces a Lie algebra isomorphic to $\mathcal A$. 

The question of which Lie subalgebras $\mathcal A$ in $\symmsp_G$ 
have counterparts in $\adjsymmsp_G$ for a PDE system $G^A=0$ 
thereby becomes an interesting algebraic classification problem.

\subsection{Adjoint-symmetry non-commutator brackets from symmetry actions}

The construction of the second bracket disregards the symmetry commutator 
but lacks the attendant properties. 

\begin{proposition}\label{prop:adjsymm.bracket2}
Fix an adjoint-symmetry $Q_A$ in $\adjsymmsp_G$,
and let $S_Q$ be the dual linear operator \eqref{S_Q.op}
associated to a symmetry action $S_P$ on $\adjsymmsp_G$.
If the kernel of $S_Q$ satisfies 
\begin{equation}\label{bracket2.Scondition}
S_P=0 \text{ for all } P\in\ker(S_Q), 
\end{equation}
then a bilinear bracket from $\adjsymmsp_G\times S_Q(\symmsp_G)$ into $S_Q(\symmsp_G)\subseteq\adjsymmsp_G$ 
is defined by 
\begin{equation}\label{adjsymm.bracket2}
{}^Q(Q_1,Q_2)_A := S_{Q_1}(S_Q^{-1}Q_2)_A . 
\end{equation}
\end{proposition}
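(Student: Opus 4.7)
The plan is to verify three things for the bracket ${}^Q(Q_1,Q_2)_A = S_{Q_1}(S_Q^{-1}Q_2)_A$: (a) that it is well-defined in spite of the ambiguity in $S_Q^{-1}Q_2$, (b) that it is bilinear, and (c) that its value lies in $\adjsymmsp_G$ (with the sharper target $S_Q(\symmsp_G)$ holding in the relevant case). The only substantive point is (a); (b) and (c) are essentially formal consequences of the definitions and of the symmetry-action property from Theorem~\ref{thm:symmactions.adjsymm}.

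For (a), I would argue as follows. Since $Q_2 \in S_Q(\symmsp_G)$, there exists $P \in \symmsp_G$ with $S_Q(P) = Q_2$, and any other preimage differs by an element of $\ker(S_Q)$. So $S_Q^{-1}Q_2$ is only specified modulo the kernel, and to make $S_{Q_1}(S_Q^{-1}Q_2)$ unambiguous I must check that $S_{Q_1}(P_0) = 0$ for every $P_0 \in \ker(S_Q)$. This is immediate from the duality $S_{Q_1}(P_0) = S_{P_0}(Q_1)$ built into definition \eqref{S_Q.op}, combined with the standing hypothesis \eqref{bracket2.Scondition} that $S_{P_0} \equiv 0$ as an operator on $\adjsymmsp_G$ whenever $P_0 \in \ker(S_Q)$. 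Hence $S_{Q_1}(P + P_0) = S_{Q_1}(P)$, and the value of ${}^Q(Q_1,Q_2)_A$ is independent of the choice of representative.

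For (b), bilinearity in $Q_2$ follows from linearity of $S_Q^{-1}$ (as a map from $S_Q(\symmsp_G)$ to the quotient $\symmsp_G/\ker(S_Q)$, which is well-defined by the previous step) composed with the linear operator $S_{Q_1}$. Bilinearity in $Q_1$ follows from the linearity of $S_P(Q)$ in its second argument $Q$, which is equivalent via \eqref{S_Q.op} to linearity of $S_{Q}(P)$ in $Q$. For (c), setting $P = S_Q^{-1}Q_2 \in \symmsp_G$ gives ${}^Q(Q_1,Q_2)_A = S_P(Q_1)_A$, which is a genuine adjoint-symmetry because every symmetry action $S_P$ is, by construction, a linear map on $\adjsymmsp_G$ (this is the content of Theorem~\ref{thm:symmactions.adjsymm} for the three actions, and is built into the general framework \eqref{S_P.op} for an abstract one). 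Thus the image lies in $\adjsymmsp_G$, and in the case $S_Q(\symmsp_G) = \adjsymmsp_G$ the stronger statement in the proposition is automatic.

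The main obstacle is really only the first step, and it is the hypothesis \eqref{bracket2.Scondition} that makes it trivial. This is the key trade-off with the earlier construction: condition \eqref{bracket2.Scondition} is stronger than merely asking $\ker(S_Q)$ to be a subalgebra (as in Propositions~\ref{prop:bracket.condition} and~\ref{prop:bracket.condition.alt}), but it removes the need to invoke any commutator of symmetries, and correspondingly the resulting bracket ${}^Q(\cdot,\cdot)$ does not inherit antisymmetry or the Jacobi identity — in contrast with Theorem~\ref{thm:adjsymm.bracket.properties}.
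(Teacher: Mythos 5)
Your proof is correct and is essentially the argument the paper intends: the proposition carries no separate proof precisely because its whole content is the well-definedness of $S_{Q_1}(S_Q^{-1}Q_2)$ modulo $\ker(S_Q)$, which condition \eqref{bracket2.Scondition} supplies via the duality $S_{Q_1}(P_0)=S_{P_0}(Q_1)$, together with bilinearity of $S_P(Q)$ in both arguments and the fact that any symmetry action lands in $\adjsymmsp_G$. Your caveat about the codomain is also apt: the argument in general only places the value in $\adjsymmsp_G$ (indeed in $S_{Q_1}(\symmsp_G)$), and the stated target $S_Q(\symmsp_G)$ is verified exactly when $S_Q(\symmsp_G)=\adjsymmsp_G$, which is the situation in which the bracket is subsequently applied.
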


Any one of the symmetry actions \eqref{symmaction1.adjsymm}, \eqref{symmaction2.adjsymm}, \eqref{symmaction3.adjsymm}
can be used to write down formally a corresponding bracket \eqref{adjsymm.bracket2}.
Note that, 
unlike the situation for the commutator bracket \eqref{adjsymm.bracket}, 
the condition \eqref{bracket2.Scondition} only involves the properties of the symmetry action $S_Q$
and does not depend on the Lie algebra structure of $\symmsp_G$. 
This condition can be by-passed when a scaling symmetry \eqref{scaling.symm} is contained in the symmetry Lie algebra. 

\begin{proposition}\label{prop:adjsymm.bracket2.alt}
Suppose $\symmsp_G$ contains a scaling symmetry \eqref{scaling.symm}. 
For any symmetry action, 
if $\ker(S_Q)$ is a scaling-homogeneous subspace in $\symmsp_G$, 
then the adjoint-symmetry bracket \eqref{adjsymm.bracket2}
is well-defined on $S_Q(\symmsp_G)\subseteq \adjsymmsp_G$
by taking $S_Q^{-1}$ to belong to a sum of scaling-homogeneous subspaces. 
\end{proposition}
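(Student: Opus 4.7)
The plan is to mimic the strategy used in the proof of Proposition~\ref{prop:bracket.condition.alt}, but now applied to the non-commutator bracket \eqref{adjsymm.bracket2}. Since \eqref{adjsymm.bracket2} invokes no Lie bracket of symmetries, the only obstruction to being well-defined is the ambiguity of $S_Q^{-1}Q_2$ modulo $\ker(S_Q)$. The task therefore reduces to producing a canonical lift of $S_Q(\symmsp_G)$ back into $\symmsp_G$ using the scaling-homogeneity hypothesis on $\ker(S_Q)$.

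First I would observe that $\ad(P_\scal)$ acts on $\symmsp_G$, since $P_\scal$ is itself a symmetry, and that every symmetry admits a decomposition into scaling-homogeneous components with respect to this action. This yields a basis-independent direct-sum decomposition
\begin{equation*}
\symmsp_G = \bigoplus_{r} \symmsp_G^{(r)},\qquad \symmsp_G^{(r)} := \{ P\in\symmsp_G : [P_\scal,P] = r P \} ,
\end{equation*}
refining the scaling decomposition \eqref{scaling.decomp}. Under the hypothesis that $\ker(S_Q)$ is scaling-homogeneous, one has $\ker(S_Q) = \bigoplus_{r} \bigl(\ker(S_Q)\cap\symmsp_G^{(r)}\bigr)$, so its structure is controlled weight-by-weight.

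Next I would construct a canonical complement. At each weight $r$, pick a scaling-homogeneous subspace $V^{(r)}\subseteq\symmsp_G^{(r)}$ with $\symmsp_G^{(r)}=(\ker(S_Q)\cap\symmsp_G^{(r)})\oplus V^{(r)}$, taking $V^{(r)}=\symmsp_G^{(r)}$ whenever the intersection is trivial. Setting $\coker(S_Q):=\bigoplus_r V^{(r)}$ produces a sum of scaling-homogeneous subspaces such that $\symmsp_G=\ker(S_Q)\oplus\coker(S_Q)$, and the restriction $S_Q|_{\coker(S_Q)}\colon\coker(S_Q)\to S_Q(\symmsp_G)$ is a linear bijection. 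Defining $S_Q^{-1}$ as its inverse, the formula $S_{Q_1}(S_Q^{-1}Q_2)_A$ is unambiguously evaluated and inherits bilinearity, giving a well-defined map on $\adjsymmsp_G\times S_Q(\symmsp_G)$ into $\adjsymmsp_G$ exactly as in Proposition~\ref{prop:adjsymm.bracket2}, now without needing to verify condition \eqref{bracket2.Scondition}.

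The main obstacle I foresee is that the construction of $\coker(S_Q)$ may still depend on the choice of complements $V^{(r)}$ at weights $r$ for which $\ker(S_Q)$ is a proper subspace of $\symmsp_G^{(r)}$. The cleanest situation, analogous to the generalization mentioned after Proposition~\ref{prop:bracket.condition.alt}, is when $\ker(S_Q)$ has no scaling weights in common with any complementary scaling-homogeneous subspace, in which case $\coker(S_Q)=\bigoplus_{r:\,\ker(S_Q)\cap\symmsp_G^{(r)}=0}\symmsp_G^{(r)}$ is forced and the lift is basis-independent. I expect this is the intended reading of the phrase ``by taking $S_Q^{-1}$ to belong to a sum of scaling-homogeneous subspaces'' in the statement, and the proof would conclude by invoking this uniqueness to guarantee independence of the bracket from the chosen lift.
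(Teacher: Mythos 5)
Your proposal is correct and follows essentially the same route as the paper: the paper justifies this proposition by the scaling-weight decomposition \eqref{scaling.decomp} introduced before Proposition~\ref{prop:bracket.condition.alt}, taking $S_Q^{-1}$ to lie in the scaling-homogeneous complement of $\ker(S_Q)$, exactly as you do. Your closing caveat about weights shared between $\ker(S_Q)$ and its complement is also the paper's intended reading, matching the remark made after Proposition~\ref{prop:bracket.condition.alt}.
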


In contrast to the commutator bracket \eqref{adjsymm.bracket}, 
the bracket \eqref{adjsymm.bracket2} is non-symmetric. 
Its only general property is that
\begin{equation}\label{adjsymm.bracket2.property}
{}^Q(Q,Q_2)= Q_2
\end{equation}
for all $Q_2$ in the linear subspace $S_Q(\symmsp_G) \subseteq \adjsymmsp_G$.

There are two worthwhile remarks that can be made. 

\begin{remark}
(i) The bracket \eqref{adjsymm.bracket2} can be viewed 
as arising from the property that $S_{Q_1}S_{Q_2}^{-1}$ is a recursion operator 
on adjoint-symmetries in $S_Q(\symmsp_G)$. 
(ii) A symmetric version and a skew-symmetric version of the bracket \eqref{adjsymm.bracket2} 
can be defined by respectively symmetrizing and antisymmetrizing 
on the pair $Q_1$ and $Q_2$: 
\begin{equation}\label{adjsymm.skewbracket}
{}^Q(Q_1,Q_2)^{-}_A := \tfrac{1}{2}\big( S_{Q_1}(S_Q^{-1}Q_2)_A - S_{Q_2}(S_Q^{-1}Q_1)_A \big)
\end{equation}
and
\begin{equation}\label{adjsymm.antiskewbracket}
{}^Q(Q_1,Q_2)^{+}_A := \tfrac{1}{2}\big( S_{Q_1}(S_Q^{-1}Q_2)_A + S_{Q_2}(S_Q^{-1}Q_1)_A \big) . 
\end{equation}
\end{remark}

The recursion operator $S_{Q_1}S_{Q_2}^{-1}$ was derived originally 
for scalar PDE systems in \Ref{AncWan2020b}. 

\underline{Running example}: 
For the p-gKdV equation \eqref{p-gkdv}, 
the condition \eqref{bracket2.Scondition} holds 
when $Q=Q_3$ where $\ker(S_Q)=\spn(P_1)$ for the symmetry actions $S_1$ and $S_3$,
as seen from Tables~\ref{table:p-gkdv.symmactions} and~\ref{table:p-gkdv.dualsymmactions}. 
The resulting brackets \eqref{adjsymm.skewbracket} and \eqref{adjsymm.antiskewbracket}
on the linear space $\spn(Q_1,Q_2,Q_3)$ 
have the form 
\begin{subequations}
\begin{align}
{}^{Q_3}(Q_1,Q_1)^\pm = {}^{Q_3}(Q_2,Q_2)^\pm = 0,
\quad
{}^{Q_3}(Q_3,Q_3)^+ = Q_3, 
\quad
{}^{Q_3}(Q_3,Q_3)^- = 0,
\\
{}^{Q_3}(Q_1,Q_2)^\pm = 0, 
\quad
{}^{Q_3}(Q_1,Q_3)^\pm = \tfrac{1}{2} \lambda_1^\pm Q_1,
\quad
{}^{Q_3}(Q_2,Q_3)^\pm = \tfrac{1}{2} \lambda_2^\pm Q_2,
\end{align}
\end{subequations}
where, 
for $S_1$: 
$\lambda_1^+=\tfrac{3p-8}{2p-4}$, $\lambda_1^-=\tfrac{p}{4-2p}$, 
$\lambda_2^+=\tfrac{p-8}{2p-4}$, $\lambda_2^-=\tfrac{3p}{4-2p}$;
and for $S_3$:
$\lambda_1^\pm=\pm 1$, $\lambda_2^\pm=\pm 1$.

\subsection{Properties and computational aspects}\label{sec:computational}

As emphasized already, 
the definition of the two brackets \eqref{adjsymm.bracket} and \eqref{adjsymm.bracket2} 
involves the dual linear map $S_Q$ defined by a symmetry action \eqref{S_P.op},
which can be chosen to be any one of the three symmetry actions 
given in Theorem~\ref{thm:symmactions.adjsymm}. 
The different properties of these actions imply corresponding properties for the brackets.

In the case of the second symmetry action \eqref{symmaction1.adjsymm}, 
since it maps any fixed adjoint-symmetry $Q_A$ into a multiplier, 
the resulting brackets will be defined on the linear (sub) space of multipliers 
given by the range of the symmetry action, namely 
$\ran(S_{2\,Q}) \subseteq \multrsp_G \subseteq \adjsymmsp_G$. 
Thus, each of the two brackets will implicitly define a bracket structure
on conservation laws of the given PDE system $G^A=0$
and thereby will constitute a generalized Poisson bracket on 
the conserved integrals associated to the conservation laws. 
Further development of this structure will be left to subsequent work. 

If $Q_A$ is chosen to be a multiplier itself, 
then since the first symmetry action \eqref{symmaction2.adjsymm} coincides 
with the second symmetry action, 
the two brackets defined using the dual linear map $S_{1\,Q}$ 
will be the same as the preceding two brackets. 
Moreover, use of the third symmetry action \eqref{symmaction3.adjsymm} 
when $Q_A$ is a multiplier will produce trivial brackets, 
since $S_{3\,Q}$ vanishes in this case. 

Both of the brackets \eqref{adjsymm.bracket} and \eqref{adjsymm.bracket2} 
are constructed explicitly in terms of $S_Q$ and its inverse $S_Q^{-1}$. 
For the two symmetry actions \eqref{symmaction2.adjsymm} and \eqref{symmaction1.adjsymm}, 
$S_Q$ viewed as an operator involves 
total derivatives $D_I$ and partial derivatives $\partial_{u^\alpha_I}$. 
This means that $S_Q^{-1}$ will involve an integral (operator) with respect to the variables $u^\alpha_I$ in jet space,
whereby the brackets are essentially nonlocal in jet space. 
Nevertheless, as an alternative, 
$S_Q$ can be represented in terms of structure constants that are defined 
with respect to any fixed basis of the linear spaces $\symmsp_G$ and $\adjsymmsp_G$. 
With such a representation, 
the pre-image of any given adjoint-symmetry can be found directly in terms of these structure constants. 
The resulting brackets thus should be viewed as an a posteriori structure 
on the linear space $\adjsymmsp_G$. 

In contrast, for the third symmetry action \eqref{symmaction3.adjsymm}, 
$S_{3\,Q} = Q' + R_Q^*$ is a linear operator in total derivatives, 
where $Q_A$ is any adjoint-symmetry that is not multiplier. 
Consequently, $S_{3\,Q}^{-1}$ only involves the inverse total derivatives $D_I^{-1}$,
and thus the two brackets \eqref{adjsymm.bracket} and \eqref{adjsymm.bracket2} 
are local in jet space and thereby constitute an a priori structure, 
just like the symmetry commutator. 

The same considerations pertain to the corresponding 
pre-symplectic and pre-Hamiltonian (Noether) structures 
shown in Theorem~\ref{thm:symmaction.structures}.

\section{Results for evolution PDEs}\label{sec:evolPDEs}

The preceding general results will next be specialized to evolution PDEs. 

Consider a general system of evolution PDEs for $u^\alpha(t,x)$,
\begin{equation}\label{evol.pde}
u^\alpha_t = g^\alpha(x,u,\partial_x u,\ldots,\partial_x^N u)
\end{equation}
where $x$ now denotes the spatial independent variables $x^i$, $i=1,\ldots,n$,
while $t$ is the time variable. 
In this setting,
the number of PDEs and the number of dependent variables in the system
are equal, $M=m$,
and so the corresponding indices can be identified, $A=\alpha$. 
In particular, 
\begin{equation}
G^\alpha(t,x,u^{(N)}) = u_t^\alpha - g^\alpha(x,u,\partial_x u,\ldots,\partial_x^N u) . 
\end{equation}

It will be useful to note that, 
on the solution space $\Esp$ of the evolution system \eqref{evol.pde}, 
all $t$-derivatives of $u^\alpha$ can be eliminated in any expression
through substituting the equation \eqref{evol.pde} and its spatial derivatives. 
This demonstrates, in particular, that any evolution system 
satisfies Lemma~\ref{lem:Hadamard} and cannot obey any differential identities
\cite{Olv-book,Anc-review}. 
In particular, 
all of the technical conditions assumed in section~\ref{sec:symms.adjointsymms} 
for general PDE systems hold automatically for evolution systems \eqref{evol.pde}. 

The determining equation \eqref{symm.deteqn} for symmetries takes the form 
$(D_tP^\alpha - g'(P)^\alpha)|_\Esp = 0$
for a set of functions $P^\alpha(t,x,u,\partial_x u,\ldots,\partial_x^k u)$
containing no $t$-derivatives of $u^\alpha$. 
The first term can be expressed as
$D_tP^\alpha = \partial_t P^\alpha + P'(u_t)^\alpha
= \partial_t P^\alpha + P'(g)^\alpha +P'(G)^\alpha$,
whence 
\begin{equation}\label{evol.symm.deteqn}
\partial_t P^\alpha + P'(g)^\alpha - g'(P)^\alpha
= \partial_t P^\alpha + [g,P]^\alpha =0
\end{equation}
is the symmetry determining equation in simplified form.
This equation implies that $G'(P)^\alpha = P'(G)^\alpha$ holds off of $\Esp$.
Consequently, one has 
\begin{equation}\label{evol.R_P}
R_P = P' . 
\end{equation}

Likewise,
the determining equation \eqref{adjsymm.deteqn} for adjoint-symmetries 
is given by $(-D_t Q_\alpha - g'{}^*(Q)_\alpha)|_\Esp = 0$
for a set of functions 
$Q_\alpha(t,x,u,\partial_x u,\ldots,\partial_x^k u)$
containing no $t$-derivatives of $u^\alpha$. 
This equation simplifies to the form
\begin{equation}\label{evol.adjsymm.deteqn}
-(\partial_t Q_\alpha +Q'(g)_\alpha + g'{}^*(Q)_\alpha)
=0 . 
\end{equation}
Hence, off of $\Esp$,
one has $G'{}^*(Q)_\alpha = -Q'(G)_\alpha$,
which yields 
\begin{equation}\label{evol.R_Q}
R_Q = -Q' . 
\end{equation}

A useful remark is that the adjoint-symmetry determining equation \eqref{evol.adjsymm.deteqn} can be expressed in the form 
\begin{equation}\label{evol.adjsymm.deteqn.alt}
\partial_t Q_\alpha +\{Q,g\}^*_\alpha=0
\end{equation}
in terms of the anti-commutator $\{A,B\}=A'(B) + B'(A)$,
where $\{A,B\}^*=A'{}^*(B) + B'{}^*(A)$. 
This formulation emphasizes the adjoint relationship 
between the determining equations for adjoint-symmetries and symmetries. 

The necessary and sufficient condition for an adjoint-symmetry to be a conservation law multiplier is that its Frechet derivative is self-adjoint 
\cite{Zha86,FucFok,Olv-book,AncBlu1997,AncBlu2002b,BCA-book,Anc-review}
\begin{equation}\label{evol.multr}
Q'=Q'{}^* . 
\end{equation}
This well-known condition can be expressed more explicitly as the system of Helmholtz equations \cite{AncBlu2002b}
\begin{equation}\label{evol.multr.split}
\partial_{u^\beta_I}Q_\alpha  = (-1)^{|I|} E_{u^\alpha}^{I}(Q_\beta),
\quad
|I|=0,1,\ldots
\end{equation}
in terms of the higher Euler operators $E_{u^\alpha}^{I}$ 
(cf equation \eqref{higher.Euler.op}).
The determining system for multipliers thereby consists of equations \eqref{evol.multr.split} and \eqref{evol.adjsymm.deteqn}.

Self-adjointness \eqref{evol.multr} is also necessary and sufficient 
for $Q_\alpha$ to be a variational derivative (gradient) 
\begin{equation}
\Lambda_\alpha = E_{u^\alpha}(\Phi)
\end{equation}
for some function $\Phi(x,u^{(k)})$, $k\geq 0$. 
Consequently, as is well-known, 
multipliers are variational (gradient) adjoint-symmetries. 

Note that the Frechet derivative identity \eqref{symm.adjsymm.ibp} is given by
\begin{equation}\label{evol.symm.adjsymm.ibp}
Q_\alpha (D_t P^\alpha - g'(P)^\alpha) 
+P^\alpha (D_t Q_\alpha + g'{}^*(Q)_\alpha)
= D_t\Psi^t(P,Q) + D_{x^i} \Psi^i(P,Q)
\end{equation}  
where
\begin{equation}\label{evol.Psi.t}
\Psi^t(P,Q) = Q_\alpha P^\alpha . 
\end{equation}  

\underline{Running example}: 
The low-order adjoint-symmetries \eqref{p-gkdv.Q} of the p-gKdV equation \eqref{p-gkdv}
have the equivalent form 
\begin{equation}\label{p-gkdv.Qevol}
Q_1=u_{xx},
\quad
Q_2=-(u_x{}^p u_{xx} + u_{xxxx}),
\quad
Q_3=2u_x + p x u_{xx} - 3p t (u_x{}^p u_{xx} + u_{xxxx})
\end{equation}
after $t$-derivative of $u$ have been eliminated. 
The first two have the property 
\begin{equation}
Q_1' = D_x^2 = Q_1'{}^*,
\quad
Q_2' = -( p u_x{}^{p-1} u_{xx} D_x + u_x{}^p D_x^2 + D_x^4) 
= -(D_x u_x{}^p D_x +D_x^4) 
= Q_2'{}^* ,
\end{equation}
showing that they are self-adjoint and hence are Euler-Lagrange expressions
\begin{equation}
Q_1 = -\tfrac{1}{2}E_u(u_x{}^2), 
\quad
Q_2 = E_u(\tfrac{1}{(p+1)(p+2)}u_x{}^{p+2} -\tfrac{1}{2}u_{xx}^2) . 
\end{equation}
Correspondingly, they are multipliers. 
The third one satisfies
\begin{equation}\label{g-pkdv.Q3}
Q_3' = 2D_x + p x D_x^2 - 3p t (D_x u_x{}^p D_x + D_x^4)
= Q_3'{}^* + 2(p-2) D_x ,
\end{equation}
showing that it is not self-adjoint and hence is not a multiplier.

\subsection{Symmetry actions on adjoint-symmetries}

The symmetry actions in Theorem~\ref{thm:symmactions.adjsymm} can be simplified
by use of the relations \eqref{evol.R_P} and \eqref{evol.R_Q}.
Combined with the condition \eqref{evol.multr} characterizing multipliers, 
this yields the following result. 

\begin{theorem}\label{thm:evol.symmactions.adjsymm}
The actions \eqref{symmaction2.adjsymm} and \eqref{symmaction1.adjsymm} of symmetries on the linear space of adjoint-symmetries
are respectively given by
\begin{align}
& Q_\alpha \overset{{\X_P}}{\longrightarrow} Q'(P)_\alpha + P'{}^*(Q)_\alpha, 
\label{evol.symmaction2.adjsymm}  
\\
& Q_\alpha \overset{{\X_P}}{\longrightarrow} Q'{}^*(P)_\alpha + P'{}^*(Q)_\alpha ,  
= E_{u^\alpha}(P^\beta Q_\beta) ,
\label{evol.symmaction1.adjsymm}  
\end{align}
which coincide if $Q_\alpha$ is a conservation law multiplier. 
The action \eqref{symmaction3.adjsymm} given by the difference of these two actions consists of 
\begin{equation}\label{evol.symmaction3.adjsymm}
Q_\alpha\overset{{\X_P}}{\longrightarrow} Q'(P)_\alpha -Q'{}^*(P)_\alpha
\end{equation}
which vanishes if $Q_\alpha$ is a conservation law multiplier. 
\end{theorem}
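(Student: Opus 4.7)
The plan is to derive Theorem~\ref{thm:evol.symmactions.adjsymm} as a direct specialization of Theorem~\ref{thm:symmactions.adjsymm} by substituting the two relations \eqref{evol.R_P} and \eqref{evol.R_Q} that are specific to evolution systems, namely $R_P = P'$ and $R_Q = -Q'$. No new derivation of the actions themselves is required, since the general symmetry actions are already established; the task is purely to simplify the operators $R_P^*$ and $R_Q^*$ appearing in them.

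First, I would substitute $R_P^* = P'{}^*$ into the first action \eqref{symmaction2.adjsymm}, which immediately gives the simplified form $Q'(P)_\alpha + P'{}^*(Q)_\alpha$ stated in \eqref{evol.symmaction2.adjsymm}. Second, for the action \eqref{symmaction1.adjsymm}, I would use both $R_P^* = P'{}^*$ and $R_Q^* = -Q'{}^*$ (the latter obtained by taking adjoints of \eqref{evol.R_Q}) to get $P'{}^*(Q)_\alpha + Q'{}^*(P)_\alpha$, establishing the first equality in \eqref{evol.symmaction1.adjsymm}. The second equality, rewriting this expression as $E_{u^\alpha}(P^\beta Q_\beta)$, follows from the standard product rule for the Euler operator applied to $P^\beta Q_\beta$ (this is the same identity that underlies the multiplier characterization $E_{u^\alpha}(\Lambda_A G^A)=0$ reformulated via Frechet derivatives earlier in the paper). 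Third, for the action \eqref{symmaction3.adjsymm}, substituting $R_Q^* = -Q'{}^*$ yields $Q'(P)_\alpha - Q'{}^*(P)_\alpha$, as in \eqref{evol.symmaction3.adjsymm}.

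To establish the two claims about multipliers, I would invoke the characterization \eqref{evol.multr}, namely $Q' = Q'{}^*$, which holds iff $Q_\alpha$ is a conservation law multiplier. Applied to the third action \eqref{evol.symmaction3.adjsymm}, this relation immediately forces $Q'(P)_\alpha - Q'{}^*(P)_\alpha = 0$, so the action is trivial. Applied to the difference between the first and second actions \eqref{evol.symmaction2.adjsymm} and \eqref{evol.symmaction1.adjsymm}, self-adjointness collapses $Q'(P)_\alpha - Q'{}^*(P)_\alpha$ to zero, so the two actions coincide on multipliers. This is also consistent with the general statement in Proposition~\ref{prop:symmaction3.adjsymm}, whose hypotheses (no differential identities, no leading derivatives in $Q_\alpha$) are automatic for evolution systems since all $t$-derivatives of $u^\alpha$ can be eliminated on $\Esp$.

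There is essentially no obstacle: the entire proof is a bookkeeping substitution together with one appeal to the Euler-operator product rule. The only point that merits care is the sign that comes from adjointing $R_Q = -Q'$ and making sure the algebraic sign in each of the three resulting expressions matches the signs stated in \eqref{evol.symmaction2.adjsymm}--\eqref{evol.symmaction3.adjsymm}.
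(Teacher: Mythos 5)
Your proposal is correct and follows essentially the same route as the paper: the paper obtains Theorem~\ref{thm:evol.symmactions.adjsymm} precisely by substituting $R_P=P'$ and $R_Q=-Q'$ from \eqref{evol.R_P}--\eqref{evol.R_Q} into the three actions of Theorem~\ref{thm:symmactions.adjsymm}, identifying $P'{}^*(Q)_\alpha+Q'{}^*(P)_\alpha$ with $E_{u^\alpha}(P^\beta Q_\beta)$ via the Euler-operator product rule, and then using the multiplier characterization $Q'=Q'{}^*$ from \eqref{evol.multr} for the coincidence and triviality claims. Your sign bookkeeping for $R_Q^*=-Q'{}^*$ is also exactly what the paper's simplification requires.
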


For the sequel, 
indices will be omitted for simplicity of notation wherever it is convenient.

\subsection{Adjoint-symmetry brackets}

For evolution PDEs \eqref{evol.pde}, 
the dual linear map $S_Q$ in the form of 
the adjoint-symmetry commutator bracket \eqref{adjsymm.bracket}
and the non-commutator bracket \eqref{adjsymm.bracket2} 
is given by any of the three symmetry actions in Theorem~\ref{thm:evol.symmactions.adjsymm}. 

Recall that the commutator bracket is well defined 
when $\ker(S_Q)$ satisfies the conditions in either of 
Propositions~\ref{prop:bracket.condition} and~\ref{prop:bracket.condition.alt}. 
The conditions in the first Proposition 
can be expressed entirely in terms of $Q$ and a pair of symmetries $P_1$, $P_2$,
by means of the relations \eqref{evol.R_Q} and \eqref{evol.R_P}. 
In particular, 
condition \eqref{symmaction3.subalg.condition} takes the form 
\begin{equation}\label{evol.symmaction3.subalg.condition}
\pr\X_{P_1}(Q'{}^*)(P_2) -\pr\X_{P_2}(Q'{}^*)(P_1) =0
\end{equation}
while condition \eqref{symmaction1.subalg.condition} takes the form 
\begin{equation}\label{evol.symmaction1.subalg.condition}
\pr\X_{P_1}(Q'{}^*)(P_2) -\pr\X_{P_2}(Q'{}^*)(P_1)
+P_2'{}^*(Q'(P_1)-Q'{}^*(P_1)) - P_1'{}^*(Q'(P_2)-Q'{}^*(P_2))
=0
\end{equation}
for all symmetries 
$\X_{P_1}=P_1^\alpha\partial_{u^\alpha}$ and $\X_{P_2}=P_2^\alpha\partial_{u^\alpha}$
in $\ker(S_Q)$ when $\dim\ker(S_Q)>1$.
When $Q$ is a conservation law multiplier, 
each condition is identically satisfied,
which can be seen from the properties \eqref{evol.multr} and $Q''(P_1,P_2) = Q''(P_2,P_1)$. 

It is worth emphasizing that the existence of these adjoint-symmetry brackets 
does not rely on a PDE system having any variational structure. 
Indeed, 
examples of non-trivial brackets for dissipative PDE systems will be given 
in a subsequent paper.

\subsection{A Noether operator and a symplectic 2-form}

The third symmetry action \eqref{evol.symmaction3.adjsymm} yields 
\begin{equation}\label{evol.Jop}
\Jop=Q'-Q'{}^*
\end{equation}
which is the form of the Noether operator in Theorem~\ref{thm:symmaction.structures}
specialized to evolution PDEs through the relation \eqref{evol.R_Q}.
Note that it will be non-trivial if, and only if, 
$Q$ is a non-variational (non-gradient) adjoint-symmetry. 
This operator is skew, $\Jop^* = -\Jop$. 

From Proposition~\ref{prop:noether.pairing}, 
there is an associated integral bilinear form \eqref{noether.symm.pairing}
on the linear space of symmetries $P^\alpha\partial_{u^\alpha}$. 
Its explicit form for evolution equations is obtained 
by taking the integration domain $\Omega$ to be the spatial domain $\Rnum^n$,  
substituting expression \eqref{evol.Psi.t}, 
and integrating by parts to get 
\begin{equation}\label{evol.2form}
\w_Q(P_1,P_2) = \int_{\Rnum^n} \Psi^t(P_1,\Jop(P_2)) \, d^nx 
= \int_{\Rnum^n} (P_1^\alpha Q'(P_2)_\alpha - P_2^\alpha Q'(P_1)_\alpha)\, d^nx 
\end{equation}
which is manifestly skew. 
Hence, this defines a 2-form on the linear space of symmetries. 
As discussed in Remark~\ref{rem:symplectic.2form}, 
a 2-form is symplectic if it is closed. 
The closure condition, $\d\w_Q =0$, can be formulated as 
\begin{equation}\label{evol.closed.2form}
\pr\X_{f_3}\w_Q(f_1,f_2) +\text{ cyclic } =0
\end{equation}
which must hold for all functions $f_1^\alpha(t,x)$, $f_2^\alpha(t,x)$, $f_3^\alpha(t,x)$. 

\begin{theorem}\label{thm:evol.sys.symplectic.2form}
For any evolution system \eqref{evol.pde}, 
the 2-form \eqref{evol.2form} is symplectic. 
Hence, 
whenever an evolution system admits a 
non-variational (non-gradient) adjoint-symmetry, 
the system possesses a non-trivial associated symplectic structure. 
\end{theorem}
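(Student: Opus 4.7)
The plan is to verify the closedness condition \eqref{evol.closed.2form} by a direct pointwise computation on the integrand of $\w_Q$. Two structural simplifications come for free. Skew-symmetry of $\w_Q$ is manifest from its defining formula, so only closedness needs to be checked. Moreover, the three test functions $f_1^\alpha(t,x)$, $f_2^\alpha(t,x)$, $f_3^\alpha(t,x)$ have no dependence on jet variables, so the evolutionary commutators $f_i'(f_j)-f_j'(f_i)$ vanish identically. Consequently Cartan's formula for $\d\w_Q(f_1,f_2,f_3)$ reduces to precisely the cyclic sum in \eqref{evol.closed.2form}, with no commutator-correction terms to track.

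First I would compute the action of $\pr\X_{f_3}$ on $Q'(f_2)_\alpha$. Because $f_2^\beta$ and all of its total derivatives depend only on $(t,x)$, $\pr\X_{f_3}$ acts only on the $u$-dependent coefficients $\partial_{u^\beta_I}Q_\alpha$ appearing in $Q'(f_2)_\alpha = \sum_I (\partial_{u^\beta_I}Q_\alpha)\,D_I f_2^\beta$. A direct chain-rule calculation gives
\begin{equation*}
\pr\X_{f_3}\,Q'(f_2)_\alpha
= \sum_{I,J}\partial_{u^\gamma_J}\partial_{u^\beta_I}Q_\alpha\,(D_J f_3^\gamma)(D_I f_2^\beta)
=: Q''(f_3,f_2)_\alpha,
\end{equation*}
and this second Frechet derivative is symmetric in its two arguments by equality of mixed partials with respect to jet coordinates. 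Substituting into the defining formula for $\w_Q$ yields
\begin{equation*}
\pr\X_{f_3}\,\w_Q(f_1,f_2)
= \int\bigl(f_1^\alpha Q''(f_3,f_2)_\alpha - f_2^\alpha Q''(f_3,f_1)_\alpha\bigr)\,dx .
\end{equation*}

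Second, I would sum cyclically over $(f_1,f_2,f_3)$ and collect, in the integrand, the coefficient of each $f_i^\alpha$. Each such coefficient reduces to a combination of the form $Q''(f_j,f_k)_\alpha - Q''(f_k,f_j)_\alpha$, which vanishes by symmetry of $Q''$. The entire integrand is therefore zero pointwise---no integration by parts is required---so $\d\w_Q=0$, i.e.\ $\w_Q$ is symplectic in the sense of Remark~\ref{rem:symplectic.2form}. Non-triviality is automatic: $\Jop=Q'-Q'^*$ is nonzero precisely when $Q$ fails to be self-adjoint, i.e.\ is non-variational, and then $\w_Q\not\equiv 0$.

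The main ``obstacle'' is really just bookkeeping in step one: one must check carefully that $\pr\X_{f_3}$ commutes with the total derivatives $D_I$ acting on $f_2^\beta$ (which holds precisely because $f_i$ carries no $u$-dependence), and that the resulting object is in fact the symmetric second Frechet derivative rather than an asymmetric variant. Notably, no integration by parts enters and no on-shell restriction to $\Esp$ is invoked, so the identity $\d\w_Q=0$ holds off-shell in jet space; this is why no hypothesis beyond the existence of a non-variational adjoint-symmetry $Q$ (and the evolutionary form of the PDE system, which is what permits the relations \eqref{evol.R_P}--\eqref{evol.R_Q}) is needed.
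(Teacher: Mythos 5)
Your proposal is correct and follows essentially the same route as the paper's proof: compute $\pr\X_{f_3}\w_Q(f_1,f_2)=\int\bigl(f_1^\alpha Q''(f_3,f_2)_\alpha-f_2^\alpha Q''(f_3,f_1)_\alpha\bigr)\,dx$ (using that the $f_i$ are $u$-independent) and observe that the cyclic sum cancels pairwise by the symmetry of $Q''$ in its arguments. The extra bookkeeping you supply — why $\pr\X_{f_3}$ hits only the coefficients $\partial_{u^\beta_I}Q_\alpha$, and the off-shell, pointwise nature of the cancellation — is consistent with the paper's argument, just spelled out in more detail.
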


\begin{proof}
Consider 
\begin{equation}
\begin{aligned}
\pr\X_{f_3}\w_Q(f_1,f_2) 
& = \int (f_1^\alpha \pr\X_{f_3}Q'(f_2)_\alpha - f_2^\alpha \pr\X_{f_3}Q'(f_1)_\alpha)\, d^nx
\\
& 
= \int (f_1^\alpha Q''(f_3,f_2)_\alpha - f_2^\alpha Q''(f_3,f_1)_\alpha)\, d^nx . 
\end{aligned}
\end{equation}
Then, in the cyclic sum  
$\pr\X_{f_3}\w_Q(f_1,f_2)  + \pr\X_{f_2}\w_Q(f_3,f_1)  + \pr\X_{f_1}\w_Q(f_2,f_3)$, 
all terms cancel pairwise, due to the symmetry of $Q''$ in its two arguments.
Hence the condition \eqref{evol.closed.2form} is satisfied. 
\end{proof}

The proof can be straightforwardly generalized (using the methods in \Ref{Olv-book}) 
to show that 
\begin{equation}\label{evol.closed.2form.P}
\pr\X_{P_3}\w_Q(P_1,P_2) +\text{ cyclic } =0
\end{equation}
holds for all symmetries $P_1^\alpha\partial_{u^\alpha}$, $P_2^\alpha\partial_{u^\alpha}$, $P_3^\alpha\partial_{u^\alpha}$. 

The formal inverse of the Noether operator \eqref{evol.Jop}
defines a pre-Hamiltonian (inverse Noether) operator $\Jop^{-1}$
which maps adjoint-symmetries into symmetries. 
It also formally yields a Poisson bracket defined by 
\begin{equation}\label{evol.PB}
\{\F_1,\F_2\}_{\Jop^{-1}}:= \int_{\Rnum^n} (\delta \F_1/\delta u)\Jop^{-1}(\delta \F_2/\delta u) \, d^nx
\end{equation}
for functionals $\F=\int_{\Rnum^n} f(x,u^{(k)})\,d^nx$,
where $\delta/\delta u$ denotes the variational derivative, 
namely, $\delta\F/\delta u^\alpha = E_{u^\alpha}(f)$. 

\begin{proposition}\label{prop:evol.PB}
For any non-variational (non-gradient) adjoint-symmetry $Q_\alpha$, 
the bracket \eqref{evol.PB} given by the Noether operator \eqref{evol.Jop} 
is skew and obeys the Jacobi identity 
as a consequence of $\w_Q$ being symplectic. 
\end{proposition}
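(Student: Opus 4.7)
Skew-symmetry is immediate. Since $\Jop = Q' - Q'^*$ is formally skew-adjoint, so is $\Jop^{-1}$, and integration by parts gives
\[ \{\F_1,\F_2\}_{\Jop^{-1}} = \int (\delta\F_1/\delta u)_\alpha\, \bigl(\Jop^{-1}(\delta\F_2/\delta u)\bigr)^\alpha\,dx = -\{\F_2,\F_1\}_{\Jop^{-1}}. \]

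For the Jacobi identity the plan is to adapt the classical symplectic argument. Attach to each functional $\F$ the Hamiltonian characteristic $P_{\F}:=\Jop^{-1}(\delta\F/\delta u)$, so that $\Jop(P_{\F})=\delta\F/\delta u$ and $\{\F_1,\F_2\}_{\Jop^{-1}} = \w_Q(P_{\F_1},P_{\F_2})$. By integration by parts one also has $\pr\X_{P_{\F_1}}\F_2 = \int(\delta\F_2/\delta u)_\alpha P_{\F_1}^\alpha\,dx = \{\F_2,\F_1\}_{\Jop^{-1}}$, so the Hamiltonian vector field $\pr\X_{P_{\F}}$ acts on functionals as the Poisson operator $\{\,\cdot\,,\F\}_{\Jop^{-1}}$. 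Consequently the Jacobi cyclic sum recasts as
\[ \sum_{\text{cyc}}\{\{\F_1,\F_2\}_{\Jop^{-1}},\F_3\}_{\Jop^{-1}} = \sum_{\text{cyc}}\pr\X_{P_{\F_3}}\w_Q(P_{\F_1},P_{\F_2}). \]

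The crux is to establish the PDE analog of $L_{P_{\F}}\w_Q = 0$ via a Cartan-type identity together with two closedness statements: $d\w_Q=0$, furnished by Theorem~\ref{thm:evol.sys.symplectic.2form} and its symmetry-argument extension~\eqref{evol.closed.2form.P}; and $d(\delta\F/\delta u)=0$, which is the Helmholtz self-adjointness condition automatically satisfied by every variational derivative. These together yield both the Leibniz-type identity
\[ \pr\X_{P_{\F_3}}\w_Q(P_{\F_1},P_{\F_2}) = \w_Q([P_{\F_3},P_{\F_1}],P_{\F_2}) + \w_Q(P_{\F_1},[P_{\F_3},P_{\F_2}]) \]
and the Lie-homomorphism property $[P_{\F_1},P_{\F_2}] = -P_{\{\F_1,\F_2\}_{\Jop^{-1}}}$. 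Substituting the latter into the cyclic sum of the former and invoking the skew-symmetry of $\w_Q$, the Jacobi sum is found to equal its own negative, and therefore vanishes.

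The main obstacle is executing these Cartan-type identities rigorously within the jet-space variational calculus: while $d\w_Q=0$ is supplied in the specific form~\eqref{evol.closed.2form.P}, passing to the full Lie-derivative statement $L_{P_{\F}}\w_Q=0$ for $u$-dependent Hamiltonian characteristics requires careful accounting of the Leibniz-rule contributions that arise when the prolongation acts on the symmetry arguments of $\w_Q$ (not only on the coefficient tensor $Q$). Modulo this formal bookkeeping, the proof reduces to the well-known correspondence in symplectic geometry between closedness of the symplectic 2-form and the Jacobi identity of the associated Poisson bracket.
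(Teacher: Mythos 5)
Your proposal is correct and follows essentially the same route as the paper: the paper offers no separate argument for Proposition~\ref{prop:evol.PB}, asserting it precisely as the classical consequence of the skewness of $\Jop=Q'-Q'{}^*$ and the closedness of $\w_Q$ established in Theorem~\ref{thm:evol.sys.symplectic.2form} and \eqref{evol.closed.2form.P}, which is the correspondence you flesh out via the Hamiltonian characteristics $P_\F=\Jop^{-1}(\delta\F/\delta u)$ and the identity $\{\F_1,\F_2\}_{\Jop^{-1}}=\w_Q(P_{\F_1},P_{\F_2})$. The bookkeeping issue you flag (the prolongation in \eqref{evol.closed.2form.P} acting on the coefficient $Q$ versus on the $u$-dependent arguments) is real but is exactly the formal step the paper also leaves implicit, so your treatment is at the same level of rigor as the paper's.
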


An interesting general question for future work is 
to determine under what conditions on $\Jop^{-1}$ or $Q_\alpha$ 
will a given evolution equation possessing a Hamiltonian formulation. 

\underline{Running example}: 
The non-gradient adjoint-symmetry 
$Q_3=2u_x + p x u_{xx} - 3p t (u_x{}^p u_{xx} + u_{xxxx})$ 
of the p-gKdV equation \eqref{p-gkdv}
yields the Noether operator (cf \eqref{p-gkdv.noetherop})
$\Jop = Q_3' - Q_3'{}^* = 2(p-2) D_x$
from the relation \eqref{g-pkdv.Q3}. 
Note that the inverse operator acting on $Q_3$ yields
a multiple of the scaling symmetry $P_4=(p-2)u -3p t u_t -xp u_x$. 
The associated symplectic 2-form on $\spn(P_1,P_2,P_3,P_4)$ is explicitly given by 
\begin{equation}\label{p-gkdv.2form}
\w_{Q_3}\big(\sum_{i=1}^{4} a_i P_i, \sum_{j=1}^{4} b_j P_j\big) = 2 \sum_{i,j=1}^{4} a_i b_j \int P_i D_x P_j \, dx . 
\end{equation}
Its components are shown in Table~\ref{table:p-gkdv.2form}. 
Note that, by skew-symmetry, the omitted entries in the lower left 
will be the negative of the entries in the upper right. 
Also observe that the non-zero entries are precisely the 
conserved integrals for momentum \eqref{p-gkdv.momentum} and energy \eqref{p-gkdv.energy}. 
The scaled Noether operator $D_x$ is in fact the inverse of 
the well-known Hamiltonian operator $D_x^{-1}$ 
for which the p-gKdV equation \eqref{p-gkdv} has the Hamiltonian structure
\begin{equation}
u_t = -\tfrac{1}{p+1} u_x{}^{p+1} - u_{xxx} = - D_x^{-1}(\delta H/\delta u),
\quad
H = \int (\tfrac{1}{2}u_{xx}{}^2 - \tfrac{1}{(p+1)(p+2)} u_x{}^{p+2} )\,dx ,
\end{equation}
where $H$ is the energy integral. 

\begin{table}[h!]
\caption{p-gKdV symplectic 2-form} 
\label{table:p-gkdv.2form} 
\begin{tabular}{l||c|c|c|c}
& $P_1$
& $P_2$
& $P_3$
& $P_4$
\\
\hline 
\hline
$P_1$
& $0$
& $0$
& $0$
& $0$
\\
\hline
$P_2$
& 
& $0$
& $0$
& $(4-p)\int  u_x{}^2\, dx$
\\
\hline
$P_3$
& 
& 
& $0$
& $(p+4)\int \big( u_{xx}{}^2 - \tfrac{1}{(p+1)(p+2)}u_x{}^{p+2} \big)\, dx$
\\
\hline
$P_4$
& 
& 
& 
& $0$
\\
\end{tabular}
\end{table}

Further examples of the symplectic structure in Theorem~\ref{thm:evol.sys.symplectic.2form}
will be given in a subsequent paper.

\section{Concluding remarks}\label{sec:remarks}

The work in sections~\ref{sec:symms.adjointsymms} to~\ref{sec:evolPDEs} 
has initiated a mathematical study of the algebraic structure of adjoint-symmetries
for general PDE systems, $G^A(x,u^{(N)})=0$. 
Several main results have been obtained. 

Three linear actions of symmetries on adjoint-symmetries have been derived. 
The first action 
$S_{1\,P}: \adjsymmsp_G\overset{P}{\longrightarrow} \adjsymmsp_G$
comes from applying a symmetry to the determining equation for adjoint-symmetries. 
It yields a generalization of a better known action of
symmetries on conservation law multipliers, 
$\multrsp_G\overset{P}{\longrightarrow} \multrsp_G$. 
The second action arises from a well-known formula that yields 
a conservation law multiplier, $\Lambda_A\in\multrsp_G$, 
from a pair consisting of a symmetry,  $P^\alpha\in\symmsp_G$, and an adjoint-symmetry, $Q_A\in\adjsymmsp_G$. 
Since multipliers are adjoint-symmetries that satisfy certain extra (Helmholtz-type) conditions, 
the formula gives an action 
$S_{1\,P}: \adjsymmsp_G\overset{P}{\longrightarrow} \multrsp_G\subseteq\adjsymmsp_G$. 
A third action $S_{3\,P} := S_{1\,P} -S_{2\,P}$ has the feature that 
it is non-trivial only on adjoint-symmetries that are not multipliers. 

For each of these linear actions, 
two different bilinear brackets on adjoint-symmetries have been constructed
by use of the dual linear action $S_Q(P):=S_P(Q)$ for a fixed adjoint-symmetry. 
The first bracket is a pull-back of the symmetry commutator bracket
and has the properties of a Lie bracket, 
whereas the second bracket does not involve the commutator structure of symmetries 
and is non-symmetric. 
Under certain algebraic conditions on $S_Q$, 
the brackets are well-defined on the entire space of adjoint-symmetries, $\adjsymmsp_G$. 

The third symmetry action is able to produce a Noether (pre-symplectic) operator
whenever a PDE system possesses an adjoint-symmetry that is not a multiplier. 
Furthermore, 
for evolution PDEs, 
this Noether operator gives rise to an associated symplectic 2-form 
which defines a Poisson bracket structure. 
In the case of Hamiltonian systems, 
the Poisson bracket yields an explicit Hamiltonian operator. 

In general, the adjoint-symmetry brackets give 
a correspondence between symmetries and adjoint-symmetries, 
which can exist in the absence of any local variational structure (Hamiltonian or Lagrangian) for a PDE system.  
For the adjoint-symmetry commutator bracket, 
the correspondence constitutes a homomorphism of a Lie (sub) algebra of symmetries 
into a Lie algebra of adjoint-symmetries. 

As shown by the example of the KdV equation in potential form, 
all of these structures are non-trivial, 
which indicates a very rich interplay among conservation laws, adjoint-symmetries and symmetries, 
going beyond the connection provided by Noether's theorem 
and its modern generalization. 
Exploring this interplay more deeply will be an interesting broad aim for future work.

\section*{Acknowledgments}
SCA is supported by an NSERC Discovery Grant and is grateful to Bao Wang 
for useful comments on an earlier version of part of this work. 

This paper is dedicated to George Bluman: 
a wonderful colleague, research collaborator, and book co-author, 
who led me on my first steps into the beautiful subject of symmetries and their applications to differential equations. 
The results bring a full circle to the very fruitful work that we started together on multipliers and adjoint-symmetries more than two decades ago \cite{AncBlu1997}.

\appendix
\section{Calculus in jet space}\label{sec:tools}

General references are provided by \Ref{Olv-book,Anc-review}. 

The following notation is used: 

$x^i$, $i=1,\ldots,n$, are independent variables;

$u^\alpha$, $\alpha=1,\ldots,m$, are dependent variables;

$u^\alpha_i= \frac{\partial u^\alpha}{\partial x^i}$ are partial derivatives; 

$\partial^k u$ is the set of all partial derivatives of $u$ of order $k\geq 0$; 

$u^{(k)}$ is set of all partial derivatives of $u$ with all orders up to $k\geq 0$; 

Multi-indices
$\begin{cases}
I=\emptyset, 
& u^\alpha_I = u^\alpha, 
\quad
|I|=0 
\\
I=\{i_1,\ldots,i_N\}, 
& u^\alpha_I=u^\alpha_{i_1\ldots i_N},
\quad
|I|=N\geq 1
\end{cases}$.

Summation convention:
sum over any repeated (multi-) index in an expression. 

Jet space is the coordinate space $\Jsp=(x^i,u^\alpha,u^\alpha_i,\ldots)$,
and $\Jsp^{(k)} = (x,u^{(k)})$ is the finite subspace of order $k\geq 0$. 

Total derivatives in jet space are defined by 
\begin{equation}
D_i= \partial_{x^i} + u^\alpha_i \partial_{u^\alpha} + \cdots,
\quad
i=1,\ldots,n 
\end{equation}
The Frechet derivative of a function $f$ on jet space is defined by
\begin{equation}
(f')_\alpha = f_{u^\alpha_I}D_I
\end{equation}
which acts on functions $F^\alpha$.
The Frechet second-derivative is given by the expression
\begin{align}
f''(F_1,F_2) = f_{u^\alpha_I u^\beta_J}(D_I F_1^\alpha)(D_J F_2^\beta)
\end{align}
which is symmetric in the pair of functions $(F_1^\alpha,F_2^\alpha)$.
The adjoint of the Frechet derivative of $f$ is defined by
\begin{align}
(f'{}^*)_\alpha = D_I^* f_{u^\alpha_I}= (-1)^{|I|} D_I f_{u^\alpha_I}
\end{align}
which acts on functions $F$, 
where the righthand side is a composition of operators. 

The Euler operator (variational derivative) is defined by 
\begin{equation}\label{Euler.op}
  E_{u^\alpha} = (-1)^{|I|} D_I\partial_{u^\alpha_I}
\end{equation}  
It has the property that $E_{u^\alpha}(f) =0$ holds identically iff $f= D_i F^i$
for some vector function $F^i(x,u^{(k)})$.
The product rule for the Euler operator is given by 
\begin{equation}
E_{u^\alpha}(f_1f_2) = f_1'{}^*(f_2)_\alpha + f_2'{}^*(f_1)_\alpha
\end{equation} 

The higher Euler operators are defined similarly 
\begin{equation}\label{higher.Euler.op}
E_{u^\alpha}^{I} = \smallbinom{I}{J} (-1)^{|I/J|} D_{I/J}\partial_{u^\alpha_{I}}
\end{equation}    
See \Ref{Olv-book,Anc-review} for their properties. 

Some useful relations: 
\begin{gather}
f'(F) = F^\alpha E_{u^\alpha}(f) + D_i\Gamma^i(F;f),
\quad
\Gamma^i(F;f)= (D_I F^\alpha) E_{u^\alpha_{iI}}(f);
\\
H f'(F) -F f'{}^*(H) = D_i\Psi^i(H,F),
\quad
\Psi^i(H,F)= (D_I H)(D_J F^\alpha) (-1)^{|I|} E_{u^\alpha_J}^{I}(f);
\\
f'(F) =  \pr\X_F f, 
\quad
\X_F = F^\alpha\partial_{u^\alpha},
\quad
\pr\X_F = (D_I F^\alpha)\partial_{u_I^\alpha} ;
\\
[F_1,F_2] = \pr\X_{F_1}F_2 -\pr\X_{F_2}F_1 = F_2'(F_1) -F_1'(F_2) ;
\\
[\pr\X_{F_1},\pr\X_{F_2}] = \pr\X_{[F_1,F_2]} ;
\end{gather}
and
\begin{align}
(\pr\X_F f') & = (\pr\X_F f)' - f' F' ;
\label{X.Frechet.id}
\\
(\pr\X_F f'{}^*) & = (\pr\X_F f)'{}^* - F'{}^* f'{}^* . 
\label{X.adjFrechet.id}
\end{align}

\end{document}